\newenvironment{customTheorem}[1]
  {\innercustomTheorem}
  {\endinnercustomTheorem}
\newenvironment{customCorollary}[1]
  {\innercustomCorollary}
  {\endinnercustomCorollary}
\def\f12{\frac 1 2}
\def\f12{\frac 1 2}
\numberwithin{equation}{section}
\newtheorem{definition}{Definition}[section]
\newtheorem{remark}{Remark}[section]
\newtheorem{lemma}{Lemma}[section]
\newtheorem{theorem}{Theorem}[section]
\newtheorem{proposition}{Proposition}[section]
\newtheorem{corollary}{Corollary}[section]
\newtheorem*{rough version}{Rough Version}
\newtheorem*{theorem*}{Theorem}
\newtheorem*{corollary*}{Corollary}
\newenvironment{sketch proof}{\proof}{\endproof}
\title[Morawetz estimates without relative degeneration]{Morawetz estimates without relative degeneration and exponential decay on Schwarzschild--de~Sitter spacetimes}
\author{Georgios Mavrogiannis}
\address{University of Cambridge, Department of Pure Mathematics and Mathematical Statistics, Wilberforce Road, Cambridge CB3 0WB, United Kingdom}
\email{gm615@cam.ac.uk}
\date\today
\begin{document}

\begin{abstract}
We use a novel physical space method to prove \emph{relatively} non-degenerate integrated energy estimates for the wave equation on subextremal Schwarzschild--de~Sitter spacetimes with parameters $(M,\Lambda)$. These are integrated decay statements whose bulk energy density, though degenerate at highest order, is everywhere comparable to the energy density of the boundary fluxes. As a corollary, we prove that solutions of the wave equation decay exponentially on the exterior region.

The main ingredients are a previous Morawetz estimate of Dafermos--Rodnianski and an additional argument based on commutation with a vector field which can be expressed in the form
\begin{equation*}
    r\sqrt{1-\frac{2M}{r}-\frac{\Lambda}{3}r^2}\frac{\partial}{\partial r},
\end{equation*}
where $\partial_r$ here denotes the coordinate vector field corresponding to a well chosen system of hyperboloidal coordinates.

Our argument gives exponential decay also for small first order perturbations of the wave operator. In the limit $\Lambda=0$, our commutation corresponds to the one introduced by Holzegel--Kauffman~\cite{gustav}.

\end{abstract}

\maketitle

\section{Introduction and motivation} 
Einstein's equation
\begin{equation}\label{einstein equation}
    Ric[g]-\Lambda g=0
\end{equation}
in the absence of matter with non-negative cosmological constant $\Lambda\geq 0$ has been extensively studied by both the mathematics and physics communities over the past century. Black hole solutions of~\eqref{einstein equation} are of particular interest. We will specifically here consider the Schwarzschild--de~Sitter spacetime $(\mathcal{M}_\textit{ext},g_{M,\Lambda})$ with
\begin{equation}\label{eq: prototype metric in SdS}
    g_{M,\Lambda}=-\left(1-\frac{2M}{r}-\frac{1}{3}\Lambda r^2\right)dt^2+\left(1-\frac{2M}{r}-\frac{1}{3}\Lambda r^2\right)^{-1}dr^2+r^2 d\sigma_{\mathbb{S}^2},
\end{equation}
where $d\sigma_{\mathbb{S}^2}$ is the standard metric of the unit sphere. This represents a black hole in an expanding cosmological universe, see~\cite{kottler,weyl,schwarzschild}. Making contact with a recent result of Holzegel--Kauffman~\cite{gustav}, we shall also consider the $\Lambda=0$ case, i.e.~the Schwarzschild spacetime $(\mathcal{M}_\textit{S},g_M)$ with $g_M=g_{M,\Lambda=0}$. The aim of this paper will be to revisit the study of the scalar wave equation 
\begin{equation}\label{eq: wave equation}
    \Box_g \psi =0,
\end{equation}
on the background~\eqref{eq: prototype metric in SdS}, studied in~\cite{bony,DR3,melr-barr-vasy,vasy1,vasy2,Dyatlov1}.

To give some motivation, we recall that over the years a number of methods have been developed to attack problems governed by linear and non-linear equations of hyperbolic type, including~\eqref{einstein equation}. A fundamental insight is the central role of generalizations of the energy concept as a tool for the global analysis of~\eqref{einstein equation}. To a large extent, this concept can be understood directly in physical space, i.e.~in the `time domain'. Physical space energy based methods have the advantage of displaying incredible resiliance when passing from statements concerning linear equations, e.g.~the scalar wave equation~\eqref{eq: wave equation}, to understanding non-linear problems, e.g.~Einstein's equation~\eqref{einstein equation}. A spectacular example of the success of this approach is the proof of the stability of Minkowski spacetime by Christodoulou and Klainerman, see the monograph~\cite{Chr-Klain}, and the more recent alternative proof~\cite{lindbladrodnianski}. Therefore, it is desirable, when possible, to have a physical space, purely energy based, understanding of boundedness and decay properties of~\eqref{eq: wave equation}. This is the goal of the present work. 

In the present paper we are specifically interested in the the exterior region of subextremal Schwarzschild--de~Sitter bounded between the event $\mathcal{H}^+$ and cosmological $\bar{\mathcal{H}}^+$ horizons, see already the dark shaded region of Figure~\ref{fig: extended penrose}. For an analysis of linear waves on the cosmological region see~\cite{Volker}, and analysis of linear waves on the black hole interior see~\cite{franzen,hintz1,grigoris}. The extremal case has not been studied systematically for $\Lambda>0$. It is subject to the Aretakis instability~\cite{aretakis}.

\begin{figure}[htbp]
    \centering
        \includegraphics[scale=0.6]{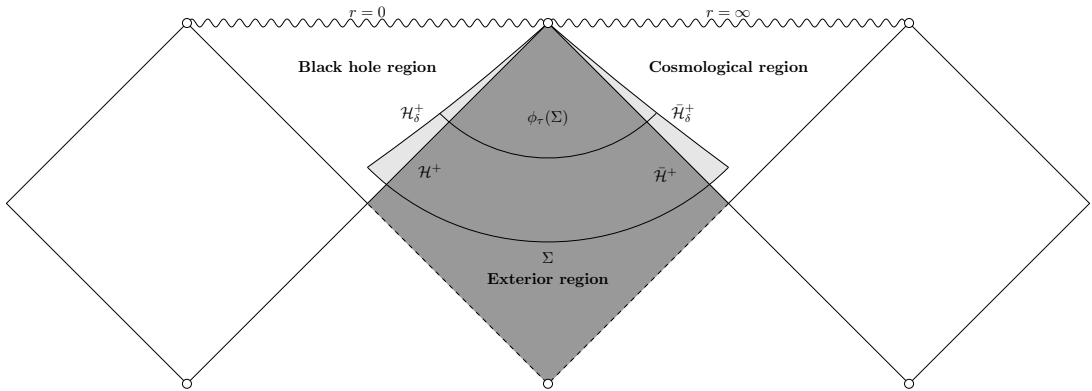}
        \caption{The Schwarzschild--de~Sitter spacetime}
        \label{fig: extended penrose}
\end{figure}

The wave equation~\eqref{eq: wave equation} on the subextremal Schwarzschild--de~Sitter exterior with $\Lambda>0$ has been studied in the past by two different, though related, approaches. 

One approach was initiated in~\cite{DR3} by Dafermos--Rodnianski, where the wave equation~\eqref{eq: wave equation} was studied using only energy estimates. Their energy estimates, which can in fact be expressed exclusively in physical space, prove faster than any polynomial decay in the shaded region of Figure~\ref{fig: extended penrose}, along a suitable foliation. Specifically, by assigning data on a spacelike hyperboloidal hypersurface
\begin{equation}
    \Sigma,
\end{equation}
that connects the event $\mathcal{H}^+$ and cosmological horizon $\bar{\mathcal{H}}^+$, they prove faster than any polynomial decay, in $\tau$, of the energy flux through the hypersurface
\begin{equation}\label{eq: prototype foliation}
    \phi_\tau(\Sigma),
\end{equation}
which is the push forward of $\Sigma$ by the Killing vector field $\partial_t$. This decay result, in turn, follows from a Morawetz estimate, which is of the form 
\begin{equation}\label{eq: prototype Morawetz estimate on Schwarzschild de Sitter}
    \int_{\tau_1}^{\tau_2} d\tau\int_{\phi_\tau(\Sigma)}(\psi-\psi_\infty)^2+ (\partial_{r}\psi)^2+ \left(1-\frac{3M}{r}\right)^2\left(|\slashed{\nabla}\psi|^2+(\partial_{\bar{t}}\psi)^2\right)\lesssim\int_{\phi_{\tau_1}(\Sigma)} (\partial_r\psi)^2+(\partial_{\bar{t}}\psi)^2+|\slashed{\nabla}\psi|^2,
\end{equation}
where $(r,\bar{t},\theta,\phi)$ are suitably defined hyperboloidal coordinates. Here, $\psi_\infty$ is a constant which can also be bounded from initial data. Note that the estimate~\eqref{eq: prototype Morawetz estimate on Schwarzschild de Sitter} manifestly already excludes finite frequency growing modes and, finally, is non-degenerate at the horizons $\mathcal{H}^+,\bar{\mathcal{H}}^+$, exploiting thus the red-shift. The estimate, however, degenerates at $r=3M$ due to the presence of trapped null geodesics, as necessitated by~\cite{ralston,sbierski}.

Another approach was initiated in~\cite{bony} by Bony--H\"afner, where they proved exponential decay for~\eqref{eq: wave equation}, restricted, however, away from the horizons $\mathcal{H}^+, \bar{\mathcal{H}}^+$, based on results concerning the asymptotic distribution of quasinormal modes shown previously by S\'a Barreto and Zworski in~\cite{barreto}. Following these proofs, a number of authors worked on the problem, see~\cite{melr-barr-vasy,vasy1}, and finally exponential decay was proved for the slowly rotating Kerr--de~Sitter black hole by Dyatlov in~\cite{Dyatlov1,Dyatlov2} without restriction away from the horizons. The papers of Dyatlov appeal to resolvent estimates in the complex plane and further machinery developed in microlocal analysis. Remarkably, building on all these results, Hintz and Vasy~\cite{hintz2} proved non-linear stability for the slowly rotating Kerr--de~Sitter spacetime. 

For nonlinear applications, arbitrarily fast polynomial decay is in fact more than sufficient, in principle, to obtain stability. Nonetheless, it is curious that the physical space argument of~\cite{DR3} only seemed to give this type of decay and not the full exponential decay. Indeed, this is connected precisely with the degeneration at $r=3M$ in the Morawetz estimate~\eqref{eq: prototype Morawetz estimate on Schwarzschild de Sitter}, referred to above.

The purpose of our paper is to overcome this difficulty and to show exponential decay for~\eqref{eq: wave equation} on Schwarzschild--de~Sitter, by an elementary additional physical space argument. We do so by proving a different type of local energy estimate, which, though still degenerate at $r=3M$, is \emph{relatively non-degenerate}, i.e.~its bulk term is not degenerate with respect to its boundary term, see already~\eqref{eq: main estimate of rough Theorem 1}. One ingredient of our proof is the Morawetz estimate~\eqref{eq: prototype Morawetz estimate on Schwarzschild de Sitter}. However, on top of the Morawetz estimate~\eqref{eq: prototype Morawetz estimate on Schwarzschild de Sitter}, we will require an additional commutation by a vector field which can again be thought to capture some of the properties of trapping, see already~\eqref{eq: prototype new vector fields}. In the $\Lambda=0$ case, this will recover a recent construction of Holzegel--Kauffman~\cite{gustav}. Our physical space commutation, in the high frequency limit, connects with the work of previous authors on `lossless estimates' and `non-trapping estimates', e.g. see~\cite{zworski,burq1,burq2,ikawa,nonnenmacher,hintz4,Dyatlov3,melr-barr-vasy}.

In our companion~\cite{mavrogiannis1}, we use the results of the present paper to prove global well posedness and exponential decay for the solutions of quasilinear wave equation and semilinear wave equations on Schwarzschild--de~Sitter.

Before stating our main results, we introduce the commutation and the energies which are of fundamental importance to this paper. 

\subsection{The commutation vector fields and the energy}
We use a system of regular hyperboloidal coordinates $(r,\bar{t},\theta,\phi)$, see already Section \ref{sec: preliminaries}, in which the metric takes the form 
\begin{equation}
    g_{M,\Lambda}=-\left(1-\frac{2M}{r}-\frac{\Lambda}{3}r^2\right)(d\bar{t})^2-2\frac{1-\frac{3M}{r}}{\sqrt{1-9M^2\Lambda}}\sqrt{1+\frac{6M}{r}} d\bar{t}dr +\frac{27M^2}{1-9M^2\Lambda}\frac{1}{r^2} (dr)^2+r^2 d\sigma_{\mathbb{S}^2}.
\end{equation}

The leaf 
\begin{equation}
\Sigma=\{\bar{t}=0\}    
\end{equation}
connects the event horizon $\mathcal{H}^+$ with the cosmological horizon $\bar{\mathcal{H}}^+$, as depicted in Figure~\ref{fig: G}.

We introduce the commutation vector field
\begin{equation}\label{eq: prototype new vector fields}
    \begin{aligned}
        \mathcal{G} = r\sqrt{1-\frac{2M}{r}-\frac{\Lambda}{3}r^2}\frac{\partial}{\partial r},
    \end{aligned}
\end{equation}
where $\partial_r$ is the coordinate vector field associated to $(r,\bar{t},\theta,\phi)$. (The simple form of this vector field is intimately related to the precise choice of coordinate $\bar{t}$. Note that $\mathcal{G}$ is orthogonal to the Killing vector field $\partial_{\bar{t}}$ at $r=3M$.)  

\begin{figure}[htbp]
    \begin{minipage}{.45\textwidth}
        \centering
        \includegraphics[scale=0.8]{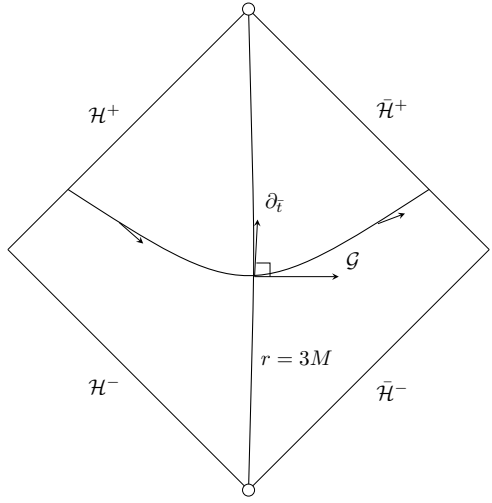}
        \caption{The vector field $\mathcal{G}$}
        \label{fig: G}
    \end{minipage}
\end{figure}

We define the energy density
\begin{equation}\label{eq: new energy}
    \begin{aligned}
        \mathcal{E}(\mathcal{G}\psi,\psi)\:\dot{=}\: \mathbb{T}(\partial_{\bar{t}},n)[\mathcal{G}\psi]+\mathbb{T}(n,n)[\psi],
    \end{aligned}
\end{equation}
where $\mathbb{T}$ is the energy momentum tensor of the wave equation, and $n$ is the normal to the foliation $\phi_\tau(\Sigma)$, see already Section~\ref{sec: preliminaries}.

The energy density \eqref{eq: new energy} is a non-negative definite quantity which contains up to second derivatives of $\psi$. We have the property 
\begin{equation}\label{eq: new energy similarity}
    \mathcal{E}(\mathcal{G}\psi,\psi)\sim (\partial_{\bar{t}}\mathcal{G}\psi)^2+\left(1-\frac{2M}{r}-\frac{\Lambda}{3}r^2\right)(\partial_r\mathcal{G}\psi)^2+|\slashed{\nabla}\mathcal{G}\psi|^2+(\partial_{\bar{t}}\psi)^2+(\partial_r\psi)^2+|\slashed{\nabla}\psi|^2. 
\end{equation}

In particular, we note that $\mathcal{E}(\mathcal{G}\psi,\psi)$ controls the $H^1$ energy density. Note, however, that $\mathcal{E}(\mathcal{G}\psi,\psi)$ does not control the full $H^2$ energy density.

\subsection{The main result}
Let 
\begin{equation}
    \phi_\tau(\Sigma)=\{\bar{t}=\tau\},
\end{equation}
for $\tau\geq 0$. We shall refer below to the energy density~\eqref{eq: new energy} and the commutation vector field~\eqref{eq: prototype new vector fields}. Our main theorem is an energy estimate \underline{without relative degeneration}.

\begin{customTheorem}{1}[rough version]\label{thm: Theorem 1 short version}
Solutions of the wave equation~\eqref{eq: wave equation} on the exterior of the subextremal Schwarzschild--de~Sitter $(\Lambda>0)$ black hole background $\left(\mathcal{M}_\textit{ext},g_{M,\Lambda}\right)$, dark shaded region of Figure~\ref{fig: extended penrose}, satisfy 
\begin{equation}\label{eq: main estimate of rough Theorem 1}
    \int_{\phi_{\tau_2}(\Sigma)}\mathcal{E}\left(\mathcal{G}\psi,\psi\right)+\int_{\tau_1}^{\tau_2} d\tau \int_{\phi_{\tau}(\Sigma)}\mathcal{E}(\mathcal{G}\psi,\psi)\lesssim \int_{\phi_{\tau_1}(\Sigma)} \mathcal{E}\left(\mathcal{G}\psi,\psi\right),
\end{equation}
for $0\leq\tau_1<\tau_2$.
\end{customTheorem}

\begin{remark}
Theorem~\ref{thm: Theorem 1 short version} remains true with $\Sigma$ replaced by a general spacelike hypersurface connecting the event with the cosmological horizons. 
\end{remark}

\begin{remark}
The estimate~\eqref{eq: main estimate of rough Theorem 1} differs from the Morawetz estimate~\eqref{eq: prototype Morawetz estimate on Schwarzschild de Sitter} in that, in the former, the same energy density appears in both the bulk term on the left hand side and the initial hypersurface flux term on the right hand side while in the latter, certain derivatives in the bulk term have $\left(1-\frac{3M}{r}\right)$ weights relative to their flux terms. In this sense, estimate~\eqref{eq: main estimate of rough Theorem 1} is \underline{relatively non-degenerate}. Note that this is still compatible with the obstructions of~\cite{ralston,sbierski} due to trapping at $r=3M$. It is this relative non-degeneracy that will allow us to immediately obtain exponential decay.
\end{remark}

As a corollary of Theorem~\ref{thm: Theorem 1 short version} we have exponential decay:

\begin{customCorollary}{1}[rough version]\label{cor: rough version exp decay}
With the assumptions of Theorem~\ref{thm: Theorem 1 short version}, we have the exponential decay estimates
\begin{equation*}
    \int_{\phi_{\tau}(\Sigma)}  \mathcal{E}(\mathcal{G}\psi,\psi) \lesssim e^{-c\tau}\int_{\Sigma}\mathcal{E}\left(\mathcal{G}\psi,\psi\right)
\end{equation*}
and 
\begin{equation*}
    \begin{aligned}
        &   \sup_{\phi_{\tau}(\Sigma)}|\psi-\psi_\infty|\lesssim \sqrt{E}e^{-c \tau},
    \end{aligned}
\end{equation*}
for $0\leq\tau$. Here $E$ is an appropriate higher order energy of the initial data of $\psi$, and $\psi_\infty$ is a constant that can be controlled by initial data.
\end{customCorollary}

We can apply the arguments of Theorem~\ref{thm: Theorem 1 short version} to obtain a second corollary, concerning small first order perturbations of the wave operator.

\begin{customCorollary}{2}[rough version]
Let $\psi$ be a solution of 
\begin{equation}
    \Box_{g_{M,\Lambda}}\psi=\epsilon a^j\partial_j\psi,
\end{equation}
where the vector field $a=a^j\partial_j$ is suitably bounded to first order, and $\epsilon$ sufficiently small. Then, the following estimate holds
\begin{equation*}
    \int_{\phi_{\tau}(\Sigma)}  \mathcal{E}(\mathcal{G}\psi,\psi)  \lesssim e^{-c\tau}\int_{\Sigma}\mathcal{E}\left(\mathcal{G}\psi,\psi\right).
\end{equation*}
Also, if $a=a^j\partial_j$ is suitably bounded up to second order, we obtain 
\begin{equation*}
    \begin{aligned}
        &   \sup_{\phi_{\tau}(\Sigma)}|\psi-\psi_\infty|\lesssim \sqrt{E} e^{-c \tau},
    \end{aligned}
\end{equation*}
for $0\leq\tau$, where $E$ is an appropriate integral quantity defined on $\Sigma$, and $\psi_\infty$ is a constant that can be controlled by initial data. 
\end{customCorollary}

In our forthcoming~\cite{mavrogiannis2}, we prove a Morawetz estimate on Kerr--de~Sitter spacetimes with parameters $(a,M,\Lambda)$ for the wave equation \eqref{eq: wave equation}, and more generally for the Klein--Gordon equation, and use it in conjunction with a generalization of the methods introduced here, to again prove an analogue of Theorem~\ref{thm: Theorem 1 short version} and exponential decay. We specifically establish exponential decay for slow rotation ($|a|\ll M,\Lambda$), or alternatively, in the full subextremal case of parameters but where the solution is assumed axisymmetric.

\subsection{The $\Lambda=0$ Schwarzschild limit}
In the Schwarzschild limit $\Lambda=0$ the commutation vector field $\mathcal{G}$~\eqref{eq: new vector fields} reduces to the vector field introduced in~\cite{gustav}, see already Section~\ref{sec: thm in Schwarzschild}. Note, however, that in~\cite{gustav} the commutation vector field $\mathcal{G}$ was expressed in coordinates that were there denoted as $(t,R^\star,\theta,\phi)$. Those coordinates, although regular at $\mathcal{H}^+$, do not coincide with our regular hyperboloidal coordinates. Thus, in those coordinates, $\mathcal{G}$ did not have the simple form~\eqref{eq: prototype new vector fields} for $\Lambda=0$.

\subsection{Acknowledgments}
The author thanks his supervisor M. Dafermos for numerous useful comments and helpful discussions. Moreover the author acknowledges the assistance of G. Moschidis, Y. Shlapentokh Rothman, D. Gajic and C. Kehle for various insightful conversations and important remarks.

\section{Preliminaries}\label{sec: preliminaries}

\subsection{The subextremality conditions}

We will use the following notation extensively. 

\begin{definition}
Let $M>0$ and $\Lambda\geq 0$. Then, we define the function $\mu$ by
\begin{equation}\label{eq: polynomial of the horizons}
    1-\mu_{M,\Lambda}=1-\frac{2M}{r}-\frac{1}{3}\Lambda r^2.
\end{equation}
We will often simply denote it as $\mu$.
\end{definition}

We define the following. 
\begin{definition}
Let $\Lambda\geq 0$. Then, the set of sub-extremal black hole parameters is
\begin{equation*}
    \begin{aligned}
        \mathcal{B}_\Lambda & =\{M>0:\:1-\mu=0\textit{ admits only distinct real roots}\}=\{M:0< M<\frac{1}{3\sqrt{\Lambda}}\}.
    \end{aligned}
\end{equation*}
For $M\in\mathcal{B}_\Lambda$, we denote the two positive real roots of $1-\mu$ as
\begin{equation}
r_+(M,\Lambda)<\bar{r}_+(M,\Lambda),    
\end{equation}
which will correspond to the area radius of the event and cosmological horizons respectively, see Definition~\ref{def: horizons}. We shall often denote these simply as $r_+,\bar{r}_+$. 
\end{definition}

\subsection{The spacetimes in regular hyperboloidal coordinates}\label{subsec: regular hyperboloidal corodinates}

We now define the metric of Schwarzschild--de~Sitter.

\begin{definition}\label{def: regular hyperboloidal coordinates}
Let $\Lambda > 0$ and $M\in\mathcal{B}_\Lambda$. We define the following manifold with boundary 
\begin{equation}
    \begin{aligned}
        \mathcal{M}_\textit{ext}=[r_+,\bar{r}_+]_r\times\mathbb{R}_{\bar{t}}\times\mathbb{S}^2_{(\theta,\phi)},\:\: \mathcal{M}_{\textit{ext},\delta}=[r_+-\delta,\bar{r}_+ +\delta]_r\times\mathbb{R}_{\bar{t}}\times\mathbb{S}^2_{(\theta,\phi)} 
    \end{aligned}
\end{equation}
for a $\delta>0$ sufficiently small, and the metric 
\begin{equation}\label{eq: regular metric de sitter}
    g_{M,\Lambda} =-(1-\mu)(d \bar{t})^2-2 \xi(r) d \bar{t} dr +\frac{1-\xi ^2(r)}{1-\mu}(dr)^2+r^2 d\sigma_{\mathbb{S}^2}
\end{equation}
where $d\sigma_{\mathbb{S}^2}$ is the standard metric of the unit sphere and 
\begin{equation}\label{eq: C(r)}
    \xi(r)= \frac{1-\frac{3M}{r}}{\sqrt{1-9M^2\Lambda}}\sqrt{1+\frac{6M}{r}},\quad \frac{1-\xi^2(r)}{1-\mu}=\frac{27M^2}{1-9M^2\Lambda}\frac{1}{r^2}.
\end{equation}
\end{definition}
We refer to the tuple $(r,\bar{t},\theta,\phi)$ as regular hyperboloidal coordinates. Note the inverse metric components
\begin{equation}
    \begin{aligned}
        g^{rr}=(1-\mu),\quad g^{\bar{t}\bar{t}}=-\frac{1-\xi^2(r)}{1-\mu},\quad g^{\bar{t}r}=-\xi(r).
    \end{aligned}
\end{equation}
See also Appendix \ref{sec: christoffel symbols} for the Christoffel symbols of the metric \eqref{eq: regular metric de sitter}.

\subsection{The time orientation}
We take the vector field
\begin{equation}
    \frac{\partial}{\partial \bar{t}}.
\end{equation}
to be future oriented. This defines a time orientation for $\mathcal{M}_\textit{ext}$.

\subsection{The event $\mathcal{H}^+$ and cosmological $\bar{\mathcal{H}}^+$ horizons}
Now we can define the following boundaries. 

\begin{definition}\label{def: horizons}
We define the following boundaries to the manifolds $\mathcal{M}_{\textit{ext}},\mathcal{M}_{\textit{ext},\delta}$
\begin{equation}
    \begin{aligned}
        &   \mathcal{H}^+=\{(r, \bar{t},\theta,\phi): \bar{t}>-\infty,r=r_+ \},\: \bar{\mathcal{H}}^+=\{(r, \bar{t},\theta,\phi): \bar{t}>-\infty,r=\bar{r}_+\},\\
        &   \mathcal{H}^+_\delta=\{(r, \bar{t},\theta,\phi): \bar{t}>-\infty,r=r_+-\delta \},\: \bar{\mathcal{H}}^+_\delta=\{(r, \bar{t},\theta,\phi): \bar{t}>-\infty,r=\bar{r}_++\delta\}
    \end{aligned}
\end{equation}
which we call future event horizon and future cosmological horizon respectively.
\end{definition}

\subsection{The photon sphere}
The hypersurface $\{r=3M\}$ is called the `photon sphere'. All future directed null geodesics either cross $\mathcal{H}^+$, cross $\bar{\mathcal{H}}^+$, or asymptote to $r=3M$. We will refer to the ones asymptoting to $r=3M$ as `future trapped null geodesics'.

\subsection{The Schwarzschild--de~Sitter coordinates}\label{subsec: the SdS and S coordinates}
We define the following coordinates. 
\begin{definition}\label{def: S,SdS coordinates}
From regular hyperboloidal coordinates, Definition~\ref{def: regular hyperboloidal coordinates}, we define the Schwarzschild--de~Sitter coordinates, $(r,t,\theta,\phi)\in (r_+,\bar{r}_+)\times \mathbb{R}\times \mathbb{S}^2$, by the following transformation 
\begin{equation}\label{eq: transformation of coordinates}
    t= \bar{t}+H(r),\quad H(r)=\int_{3M}^r\frac{\xi(\tilde{r})}{1-\mu}d\tilde{r},
\end{equation}
where $\xi(r)$ is given by~\eqref{eq: C(r)}. These coordinates cover the region $\mathcal{M}_{\textit{ext}}^o$.
\end{definition}

We rewrite the metric~\eqref{eq: regular metric de sitter} by using the transformation~\eqref{eq: transformation of coordinates} to obtain 
\begin{equation}\label{eq: metric in SdS}
    g_{M,\Lambda}=-\left(1-\mu\right)dt^2+\left(1-\mu\right)^{-1}dr^2+r^2d\sigma_{\mathbb{S}^2}.
\end{equation}

We distinguish between the coordinate vector field
\begin{equation}
    \boldsymbol{\frac{\partial}{\partial r}}
\end{equation}
with respect to the Schwarzschild--de~Sitter coordinate system $(r,t,\theta,\phi)$, and the coordinate vector field 
\begin{equation}
    \frac{\partial}{\partial r}
\end{equation}
with respect to the regular hyperboloidal coordinates $(r,\bar{t},\theta,\phi)$.

The following equality holds
\begin{equation}
\frac{\partial}{\partial \bar{t}}=\frac{\partial}{\partial t},     
\end{equation}
in the region $\mathcal{M}^o_\textit{ext}$ that the Schwarzschild--de~Sitter coordinates are defined.

\subsection{Tortoise coordinate}
We define the tortoise coordinate 
\begin{equation}\label{eq: tortoise coordinate r star}
    r^\star\:\dot{=}\:\int_{3M}^r\frac{1}{1-\mu} dr.
\end{equation}
The expression
\begin{equation}\label{eq: tortoise vector field}
    \frac{\partial}{\partial r^\star}
\end{equation}
will denote the vector field taken in $(r^\star,t,\theta,\phi)$ coordinates. Note that $\frac{\partial}{\partial r^\star}=(1-\mu)\boldsymbol{\frac{\partial}{\partial r}}$. The vector field \eqref{eq: tortoise vector field} extends smoothly to a tangential vector field along $\mathcal{H}^+$ and $\bar{\mathcal{H}}^+$.

\subsection{Chain rule between coordinate vector fields}\label{subsec: chain rule between coordinates}
By a simple chain rule, we have
\begin{equation}
    \boldsymbol{\frac{\partial}{\partial r}}=\frac{\partial}{\partial r}+\frac{d\bar{t}}{d r}\frac{\partial}{\partial\bar{t}}.
\end{equation}
We have
\begin{equation*}
    \frac{d\bar{t}}{dr}=-\frac{dH}{dr}(r),
\end{equation*}
where, for $H(r)$, see Section~\ref{subsec: the SdS and S coordinates}. Note that at $r=3M$ the vector field $\partial_r$ is in the direction of $\partial_{r^\star}$, since
\begin{equation}\label{eq: r star and r are the same on trapping}
    \frac{\partial}{\partial r}=\sqrt{\frac{27M^2}{1-9M^2\Lambda}}\frac{\partial}{\partial r^\star}.
\end{equation}

\subsection{The vector field $\mathcal{G}$ in Schwarzschild--de~Sitter coordinates}
We define the vector field
\begin{equation}\label{eq: new vector fields}
    \mathcal{G}=r\sqrt{1-\mu}\frac{\partial}{\partial r}.
\end{equation}
Note that $\mathcal{G}$ is $C^0$ on the horizons $\mathcal{H}^+,\bar{\mathcal{H}}^+$, but not $C^1$. We extend the vector field $\mathcal{G}$, of~\eqref{eq: new vector fields}, beyond the horizons $\mathcal{H}^+,\bar{\mathcal{H}}^+$ such that
\begin{equation}
	\mathcal{G}\equiv 0,\quad \text{in } \{r\leq r_+\}\cup \{r\geq \bar{r}_+\}.
\end{equation}
This vector field is suggested by the good commutation property of Proposition \ref{prop: commuted equation for G psi}. Note that in view of \eqref{eq: r star and r are the same on trapping}, at $r=3M$ the vector field $\mathcal{G}$ is in the direction of $\partial_{r^\star}$ and is thus orthogonal to $\partial_{\bar{t}}$. This is significant because this is precisely the derivative that does not degenerate in estimate \eqref{eq: prototype Morawetz estimate on Schwarzschild de Sitter}. 

We express the vector field~\eqref{eq: new vector fields} in Schwarzschild--de~Sitter coordinates.
\begin{lemma}
In $\mathcal{M}^o_{\textit{ext}}$ the vector field~\eqref{eq: new vector fields} can be written as
\begin{equation}
    \mathcal{G}=G_1(r)\frac{\partial}{\partial r^\star}+G_2(r)\frac{\partial}{\partial t},
\end{equation}
in Schwarzschild--de~Sitter coordinates. The functions $G_1,G_2$ are defined as follows 
\begin{equation*}
    G_1(r)=\sqrt{\frac{r^2}{1-\mu}},\:\:G_2(r)=\frac{r}{\sqrt{1-\mu}}\left(1-\frac{3M}{r}\right)\frac{1}{\sqrt{1-9M^2\Lambda}}\sqrt{1+\frac{6M}{r}}.
\end{equation*}
\end{lemma}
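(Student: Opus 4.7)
The plan is to carry out a direct change of coordinates from $(r,\bar{t},\theta,\phi)$ to $(r,t,\theta,\phi)$ using the relations already established in Section~\ref{subsec: chain rule between coordinates} and the definition of the tortoise coordinate in~\eqref{eq: tortoise coordinate r star}.

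First I would rewrite $\partial_r$ (the hyperboloidal coordinate vector) in terms of the Schwarzschild--de~Sitter coordinate vector $\boldsymbol{\partial_r}$ and $\partial_{\bar{t}}$. The chain rule identity of Section~\ref{subsec: chain rule between coordinates} gives $\boldsymbol{\partial_r} = \partial_r - H'(r)\partial_{\bar{t}}$, so
\begin{equation*}
\partial_r \;=\; \boldsymbol{\partial_r} \;+\; H'(r)\partial_{\bar{t}} \;=\; \boldsymbol{\partial_r} \;+\; \frac{\xi(r)}{1-\mu}\,\partial_{\bar{t}},
\end{equation*}
using $H'(r) = \xi(r)/(1-\mu)$ from Definition~\ref{def: S,SdS coordinates}.

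Next I would convert $\boldsymbol{\partial_r}$ to $\partial_{r^\star}$ via $\boldsymbol{\partial_r} = (1-\mu)^{-1}\partial_{r^\star}$, and use $\partial_{\bar t}=\partial_t$ on $\mathcal{M}^o_\textit{ext}$. Multiplying by $r\sqrt{1-\mu}$ then yields
\begin{equation*}
\mathcal{G} \;=\; r\sqrt{1-\mu}\,\partial_r \;=\; \frac{r}{\sqrt{1-\mu}}\,\partial_{r^\star} \;+\; \frac{r\,\xi(r)}{\sqrt{1-\mu}}\,\partial_t.
\end{equation*}
Identifying coefficients gives $G_1(r) = r/\sqrt{1-\mu} = \sqrt{r^2/(1-\mu)}$ and $G_2(r) = r\xi(r)/\sqrt{1-\mu}$, which upon substituting the explicit form of $\xi(r)$ from~\eqref{eq: C(r)} reproduces the stated formula for $G_2$.

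There is no real obstacle: the only thing to check carefully is the sign convention coming from $\bar{t} = t - H(r)$ versus $t = \bar{t}+H(r)$, so that $d\bar{t}/dr = -H'(r)$ enters with the correct sign in the chain rule, and this cancels against the sign in $\boldsymbol{\partial_r} = \partial_r + (d\bar{t}/dr)\partial_{\bar{t}}$. Once the signs are tracked correctly, the lemma follows by direct substitution.
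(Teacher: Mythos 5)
Your proposal is correct and is essentially the same computation as the paper's proof, just run in the opposite direction: the paper expands $G_1\partial_{r^\star}+G_2\partial_t$ back into $r\sqrt{1-\mu}\,\partial_r$ using $(1-\mu)H'=\xi$, while you expand $\mathcal{G}$ forward into the $(r^\star,t)$ frame using the same chain-rule identities. The sign bookkeeping you flag is handled correctly.
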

\begin{proof}
We note
\begin{equation}
    \begin{aligned}
        \left(G_1\frac{\partial}{\partial r^\star}+G_2\frac{\partial}{\partial t}\right) &= r\sqrt{1-\mu}\frac{\partial}{\partial r}+\frac{r}{\sqrt{1-\mu}}\left(-(1-\mu)\frac{d H}{d r}+\frac{(1-\frac{3M}{r})}{\sqrt{1-9M^2\Lambda}}\sqrt{1+\frac{6M}{r}}\right)\frac{\partial}{\partial t}= r\sqrt{1-\mu}\frac{\partial}{\partial r},
    \end{aligned}
\end{equation}
since $(1-\mu)\frac{d H}{d r}=\frac{1-\frac{3M}{r}}{\sqrt{1-9M^2\Lambda}}\sqrt{1+\frac{6M}{r}}$. 
\end{proof}

\subsection{Wave operator}
We denote by $\slashed{\nabla}$ the covariant derivative with respect to $r^2 d\sigma_{\mathbb{S}^2}$.

The wave operator is
\begin{equation}\label{eq: wave operator}
    \begin{aligned}
        \Box_{g_{M,\Lambda}} \psi = -\frac{1-\xi^2(r)}{1-\mu}\partial_t^2\psi+(1-\mu)\partial_r^2\psi &+2\xi(r)\partial_r\partial_t\psi+\slashed{\nabla}^A\slashed{\nabla}_A \psi \\
        &   + \left(-\partial_r\xi +2r^{-1}\xi(r)\right)\partial_{\bar{t}}\psi+\left(2r^{-1}(1-\mu)+\partial_r(1-\mu)\right)\partial_r\psi
    \end{aligned}
\end{equation}
where $\slashed{\Delta}=\slashed{\nabla}^A\slashed{\nabla}_A$. Note, moreover, that we define the following expression
\begin{equation}
    |\slashed{\nabla}\psi|^2=\slashed{\nabla}^A\psi\slashed{\nabla}_A\psi.
\end{equation}

Furthermore, we denote as $\Omega_\alpha$, for $\alpha=1,2,3$ the standard vector fields
\begin{equation}\label{eq: generators of lie algebra}
    \Omega_1=\partial_\phi,\quad\Omega_2=\cos{\phi}\:\partial_\theta-\sin{\phi}\cot{\theta}\:\partial_\phi,\quad \Omega_3=-\left(\sin{\phi}\:\partial_{\theta}+\cos{\phi}\cot{\theta}\:\partial_\phi\right),
\end{equation}
that generate the lie algebra $so(3)$. Note $[\Box_{g_{M,\Lambda}},\Omega_\alpha]=0, [\Box_{g_{M,\Lambda}},\partial_{\bar{t}}]=0$.

\subsection{Spacelike hypersurfaces}\label{subsec: acausal hypersurfaces, in SdS}
In Schwarzschild--de~Sitter a prototype hypersurface that is spacelike and connects the event horizon $\mathcal{H}^+$ with the cosmological horizon $\bar{\mathcal{H}}^+$ would be
\begin{equation}
    \Sigma=\{\bar{t}=0\}.
\end{equation}

Our results also hold for general spacelike hypersurfaces connecting $\mathcal{H}^+$ and $\bar{\mathcal{H}}^+$. (For convenience, however, we always work with $\Sigma$ fixed as above.)

\subsection{Spacelike foliations and causal domains}\label{subsec: causal domains}
We push forward the hypersurface $\Sigma$, see Section~\ref{subsec: acausal hypersurfaces, in SdS}, under the flow $\phi_\tau$ of the vector field $\partial_{\bar{t}}$ to obtain the family of hypersurfaces
\begin{equation}
\phi_\tau(\Sigma)=\{\bar{t}=\tau\}.
\end{equation}

We define the following spacetime domains.
\begin{definition}\label{def: spacetime domain}
For $\tau_1<\tau_2$ define the spacetime domain
\begin{equation}
    D(\tau_1,\tau_2)\:\dot{=}\:J^+(\phi_{\tau_1}(\Sigma))\cap J^-(\phi_{\tau_2}(\Sigma))=\{(r,\bar{t},\theta,\phi):\tau_1\leq \bar{t}\leq \tau_2\},
\end{equation}
and 
\begin{equation}
    D(\tau,\infty)\:\dot{=}\:J^+(\phi_\tau(\Sigma))=\{(r,\bar{t},\theta,\phi):\tau\leq \bar{t}<\infty\}.
\end{equation}
\end{definition}

The domains of Definition \ref{def: spacetime domain} are both globally hyperbolic, with  $\phi_{\tau_1}(\Sigma)$ and $\phi_{\tau}(\Sigma)$, respectively, as Cauchy hypersurfaces. 

\subsection{Normals of spacelike hypersurfaces}\label{subsec: spacelike normals}
The unit normal vector fields of the foliation of Section~\ref{subsec: acausal hypersurfaces, in SdS} can be computed from the gradient $\nabla\bar{t}$ which in regular hyperboloidal coordinates is 
\begin{equation}
    \nabla\bar{t}=-\frac{1-\xi^2(r)}{1-\mu}\frac{\partial}{\partial \bar{t}}-\xi(r)\frac{\partial}{\partial r}=- \frac{27M^2}{1-9M^2\Lambda}\frac{1}{r^2}\frac{\partial}{\partial \bar{t}}-\frac{1-\frac{3M}{r}}{\sqrt{1-9M^2\Lambda}}\sqrt{1+\frac{6M}{r}}\frac{\partial}{\partial r}.
\end{equation}
Note moreover that $g(\nabla\bar{t},\nabla\bar{t})\leq b(M,\Lambda)<0$ on $[r_+,\bar{r}_+]$. The normal of the relevant foliation will be 
\begin{equation}
    n_{\phi_{\tau}(\Sigma)}=-\frac{\nabla \bar{t}}{\sqrt{-g(\nabla \bar{t},\nabla\bar{t})}}=\sqrt{\frac{1-\xi^2}{1-\mu}}\frac{\partial}{\partial \bar{t}}+\frac{\xi(r)}{\sqrt{\frac{1-\xi^2}{1-\mu}}}\frac{\partial}{\partial r},
\end{equation}
often denoted simply as $n$.

\subsection{Volume forms}\label{subsec: volume forms}
The volume form of a spacetime domain is 
\begin{equation}\label{eq: hyperb volume form}
    dg=r^2\sin\theta d\bar{t} dr d\theta d\phi
\end{equation}
with respect to the $\left(r,\bar{t},\theta,\phi\right)$ coordinates.

By pulling back the spacetime volume form~\eqref{eq: hyperb volume form} into hypersurfaces of constant $\bar{t}$, we obtain that the $\{\bar{t}=\tau\}$ hypersurfaces admit the volume form 
\begin{equation}\label{eq: volume form of spacelike hypersurface}
dg_{\{\bar{t}=c\}}=r\sqrt{\frac{27M^2}{1-9M^2\Lambda}}\sin\theta dr d\theta d\phi.    
\end{equation}

We define the normals of the event and cosmological horizons respectively as 
\begin{equation}
    n_{\mathcal{H}^+}=\partial_{\bar{t}},\:n_{\bar{\mathcal{H}}^+}=\partial_{\bar{t}}.
\end{equation}

With the above choice of normals, the corresponding volume forms of the respective null hypersurfaces take the form
\begin{equation*} 
dg_{\mathcal{H}^+}=r^2\sin\theta d\bar{t}d\sigma_{\mathbb{S}^2},\:\: dg_{\bar{\mathcal{H}}^+}=r^2\sin\theta d\bar{t}d\sigma_{\mathbb{S}^2}.
\end{equation*}

In all integrals without explicit volume form, it is to be understood that the volume forms are taken to be the ones defined in this Section.

\subsection{Coarea formula}\label{subsec: coarea formula}

Let $f$ be a continuous non-negative function. Then, note the coarea formula 
\begin{equation}\label{eq: subsec: coarea formula, eq 1}
\int_{0}^\tau d\tau\int_{\phi_\tau(\Sigma)} f dg_{\{\bar{t}=\tau\}}\sim\int\int_{D(0,\tau)}\frac{1}{r} f dg,
\end{equation}
where the constants in the above similarity depend only on the black hole mass $M$ and do not degenerate in the limit $\Lambda\rightarrow 0$. For fixed $\Lambda>0$ the $r$ factor of~\eqref{eq: subsec: coarea formula, eq 1} is of course inessential.

\subsection{Penrose diagrams}\label{subsec: penrose diag}
The reader familiar with the Penrose diagrammatic representation may wish to refer to Figure~\ref{fig: Penrose diagram}.
\begin{figure}[htbp]
    \begin{minipage}{.45\textwidth}
        \centering
        \includegraphics[scale=0.8]{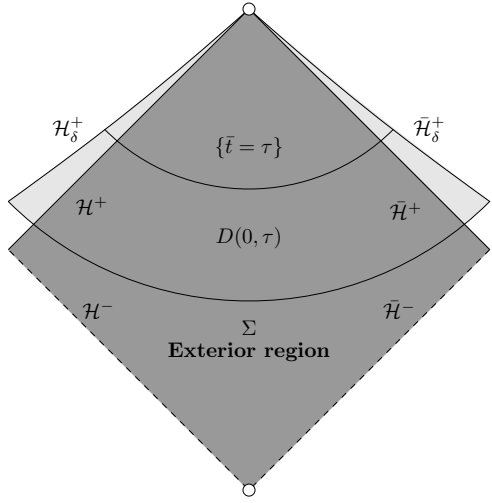}
        \caption{The foliation of the Schwarzschild--de~Sitter exterior}
        \label{fig: Penrose diagram}
    \end{minipage}
\end{figure}

\subsection{Currents and the divergence theorem}\label{subsec: currents, div theorem}
We will employ the energy momentum tensor and the relevant current it produces. 

\begin{definition}\label{def: energy momentum tensor}
Let $g$ be a smooth Lorentzian metric. For $\psi$ a solution of 
\begin{equation}\label{eq: inhomogeneous wave equation}
\Box_g\psi=F,    
\end{equation}
we define the energy momentum tensor
\begin{equation}\label{eq: energy momentum tensor}
    \mathbb{T}_{\mu\nu}[\psi]\:\dot{=}\:\partial_{\mu}\psi\partial_{\nu}\psi-\frac{1}{2}g_{\mu\nu}\Big(g^{\gamma\epsilon}\partial_{\gamma}\psi\partial_{\epsilon}\psi  \Big).
\end{equation}
The energy current with respect to a vector field $X$ is 
\begin{equation}\label{eq: current}
    J^{X}_{\mu}[\psi]\:\dot{=}\: \mathbb{T}_{\mu\nu}[\psi]X^{\nu},
\end{equation}
with divergence
\begin{equation}
    \nabla^{\mu}J^{X}_{\mu} =X(\psi) F+\frac{1}{2}\mathbb{T}_{\mu\nu}\:^{(X)}\pi^{\mu\nu},
\end{equation}
where 
\begin{equation}
    ^{(X)}\pi^{\mu\nu}\dot{=}\:\frac{1}{2}\big( \nabla^{\mu}X^{\nu}+\nabla^{\nu}X^{\mu}  \big)
\end{equation}
is the deformation tensor. 

Lastly, for $X,n$ future causal vector fields, we have that
\begin{equation}\label{eq: nonnegative energy momentum tensor}
    J^X_\mu[\psi]n^\mu=\mathbb{T}(X,n)[\psi]\geq 0
\end{equation}
\end{definition}

We apply the divergence theorem in the region $D(\tau_1,\tau_2)$ to obtain the following Proposition. 

\begin{proposition}\label{prop: divergence Theorem}
Let $\psi$ satisfy the equation~\eqref{eq: inhomogeneous wave equation} on $D(\tau_1,\tau_2)$. Then, with the notation above, the following holds
\begin{equation}\label{eq: divergence Theorem}
    \begin{aligned}
        \int_{\phi_{\tau_2}(\Sigma)}J^{X}_{\mu}[\psi]n^{\mu}_{\phi_{\tau_2}(\Sigma)}+\int_{\mathcal{H}^{+}\cap D(\tau_1,\tau_2)}J^{X}_{\mu}[\psi]n^{\mu}_{\mathcal{H}^{+}}+\int_{\bar{\mathcal{H}}^{+}\cap D(\tau_1,\tau_2)}J^{X}_{\mu}[\psi]n^{\mu}_{\bar{\mathcal{H}}^{+}}& +\int\int_{D(\tau_1,\tau_2)}\nabla^{\mu}J^{X}_{\mu}[\psi] \\
        &   =\int_{\phi_{\tau_1}(\Sigma)} J^{X}_{\mu}[\psi]n^{\mu}.
    \end{aligned}
\end{equation}
\end{proposition}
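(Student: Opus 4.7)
The plan is to reduce the claimed identity to the standard divergence theorem on the spacetime domain $D(\tau_1,\tau_2)$, applied to the vector field $J^X[\psi]$, and then to keep careful track of signs and the specific normalizations of normals and volume forms on each of the four boundary components, as fixed in Sections~\ref{subsec: spacelike normals}--\ref{subsec: volume forms}.

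First, I would observe that $D(\tau_1,\tau_2)$ is a compact manifold with corners whose boundary decomposes as
\begin{equation*}
\partial D(\tau_1,\tau_2) = \phi_{\tau_2}(\Sigma) \,\cup\, \phi_{\tau_1}(\Sigma) \,\cup\, \bigl(\mathcal{H}^+\cap D(\tau_1,\tau_2)\bigr) \,\cup\, \bigl(\bar{\mathcal{H}}^+\cap D(\tau_1,\tau_2)\bigr),
\end{equation*}
with the intersections being two-spheres of measure zero. Since $\psi$ is smooth on $\mathcal{M}_\textit{ext}$, the vector field $J^X[\psi]$ is smooth and the general Stokes/divergence theorem on a Lorentzian manifold with boundary is applicable: for any smooth vector field $V$ and any compact oriented region $D$, one has $\int_D \nabla_\mu V^\mu\, dg = \int_{\partial D} V^\mu \nu_\mu\, dS_{\partial D}$, where $\nu$ is the outward-pointing conormal and $dS_{\partial D}$ the induced volume form.

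Next, I would rewrite each boundary contribution in terms of the \emph{future-directed} normals fixed in Sections~\ref{subsec: spacelike normals} and~\ref{subsec: volume forms}. On $\phi_{\tau_2}(\Sigma)$, the future-directed unit normal $n_{\phi_{\tau_2}(\Sigma)}$ is indeed outward-pointing from $D(\tau_1,\tau_2)$, so its contribution appears directly with a $+$ sign and the induced volume form~\eqref{eq: volume form of spacelike hypersurface}. On $\phi_{\tau_1}(\Sigma)$, the future-directed $n$ points \emph{into} $D(\tau_1,\tau_2)$, so the outward conormal is its negative; this sign is what allows this term to be moved to the right-hand side of~\eqref{eq: divergence Theorem}. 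On the two null boundaries $\mathcal{H}^+$ and $\bar{\mathcal{H}}^+$, there is no canonical unit normal, but the product $J^X_\mu n^\mu\, dg_{\mathcal{H}^\pm}$ is invariant under the choice of null generator provided it is compatible with the induced volume form. I would check that the pair $(n_{\mathcal{H}^+},dg_{\mathcal{H}^+})$ and $(n_{\bar{\mathcal{H}}^+},dg_{\bar{\mathcal{H}}^+})$ fixed in Section~\ref{subsec: volume forms} (both equal to $\partial_{\bar{t}}$, paired with $r^2\sin\theta\, d\bar{t}\, d\sigma_{\mathbb{S}^2}$) are precisely the ones obtained from the ambient volume form~\eqref{eq: hyperb volume form} by contraction with $dr$ at $r=r_+$ and $r=\bar{r}_+$ respectively, thus verifying consistency with the outward orientation.

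Finally, computing $\nabla^\mu J^X_\mu$ directly from~\eqref{eq: current} yields
\begin{equation*}
\nabla^\mu J^X_\mu[\psi] = (\nabla^\mu \mathbb{T}_{\mu\nu})X^\nu + \tfrac{1}{2}\mathbb{T}_{\mu\nu}\,{}^{(X)}\pi^{\mu\nu},
\end{equation*}
and using $\nabla^\mu \mathbb{T}_{\mu\nu}[\psi] = \partial_\nu\psi\, \Box_g\psi$ together with $\Box_g\psi = F$ gives $\nabla^\mu J^X_\mu = X(\psi)F + \tfrac12 \mathbb{T}_{\mu\nu}{}^{(X)}\pi^{\mu\nu}$, the divergence formula stated in Definition~\ref{def: energy momentum tensor}. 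Assembling the boundary identifications above and moving the $\phi_{\tau_1}(\Sigma)$ contribution to the right-hand side then produces exactly~\eqref{eq: divergence Theorem}. The main obstacle is the bookkeeping on the null boundaries: making sure that the choices of $n_{\mathcal{H}^+}$, $n_{\bar{\mathcal{H}}^+}$ and their paired volume forms are consistent with the outward orientation induced by $D(\tau_1,\tau_2)$, so that the signs in~\eqref{eq: divergence Theorem} come out correctly. Once this compatibility is verified, the proof is a direct application of the divergence theorem.
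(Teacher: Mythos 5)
Your proposal is correct and follows exactly the route the paper takes: the paper states Proposition~\ref{prop: divergence Theorem} as a direct application of the divergence theorem on $D(\tau_1,\tau_2)$ and offers no further proof, so your decomposition of $\partial D(\tau_1,\tau_2)$, the sign bookkeeping for the future-directed normals on the spacelike leaves, the compatibility check of $(n_{\mathcal{H}^+},dg_{\mathcal{H}^+})$ and $(n_{\bar{\mathcal{H}}^+},dg_{\bar{\mathcal{H}}^+})$ with the ambient volume form, and the computation $\nabla^\mu J^X_\mu=X(\psi)F+\tfrac12\mathbb{T}_{\mu\nu}{}^{(X)}\pi^{\mu\nu}$ simply make explicit what the paper leaves implicit. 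No gaps.
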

In the above, in accordance with our conventions from Section~\ref{subsec: volume forms}, all integrals are taken with respect to the volume form of the respective hypersurfaces or spacetime domain.

For a further study on currents related to partial differential equations, see the monograph of Christodoulou~\cite{christodoulou}. 

\section{The Main theorems}\label{sec: main theorems}

\subsection{The homogeneous wave equation}
The following is the central theorem of this paper.

\begin{customTheorem}{1}[detailed version]\label{main theorem 1}
We fix the parameters $M,\Lambda>0$. Then, there exists a constant $C=C(M,\Lambda)>0$, such that for $\psi$ a sufficiently regular solution of the wave equation~\eqref{eq: wave equation} on $D(\tau_1,\tau_2)$, we have
\begin{equation}\label{eq: estimate of thm 1, 1}
    \int_{\phi_{\tau_2}(\Sigma)}\mathcal{E}\left(\mathcal{G}\psi,\psi\right)+\int\int_{D(\tau_1,\tau_2)}\left(\frac{1}{(1-\mu)}(\partial_t\mathcal{G}\psi)^2+\mathcal{E}(\mathcal{G}\psi,\psi)\right)\leq C \int_{\phi_{\tau_1}(\Sigma)} \mathcal{E}\left(\mathcal{G}\psi,\psi\right),
\end{equation}
for all $0\leq\tau_1<\tau_2$, where $\mathcal{E}\left(\mathcal{G}\psi,\psi\right)$ is defined in~\eqref{eq: new energy}. 

Since $\int\int_{D(\tau_1,\tau_2)} \mathcal{E}(\mathcal{G}\psi,\psi)\sim \int_{\tau_1}^{\tau_2}d\tau\int_{\phi_\tau(\Sigma)}\mathcal{E}(\mathcal{G}\psi,\psi)$ we in particular have 
\begin{equation}\label{eq: estimate of thm 1, 2}
    \int_{\phi_{\tau_2}(\Sigma)}\mathcal{E}\left(\mathcal{G}\psi,\psi\right)+\int_{\tau_1}^{\tau_2}d\tau\int_{\phi_\tau(\Sigma)}\mathcal{E}(\mathcal{G}\psi,\psi)\leq C \int_{\phi_{\tau_1}(\Sigma)} \mathcal{E}\left(\mathcal{G}\psi,\psi\right).
\end{equation}

Note that this Theorem also holds for the domain $D_\delta(\tau_1,\tau_2)$ in the place of $D(\tau_1,\tau_2)$, for a sufficiently small $\delta>0$, where the constant $C$ is independent of $\delta$.
\end{customTheorem}

As an immediate corollary we have exponential decay.

\begin{customCorollary}{1}[detailed version]\label{corollary on exponential decay}
There exist positive constants $c(M,\Lambda)>0, C(M,\Lambda)>0$, such that for $\psi$ a sufficiently regular solution of the wave equation~\eqref{eq: wave equation} on $D(0,\tau)$, we have
\begin{equation*}
    \begin{aligned}
        \int_{\phi_{\tau}(\Sigma)}\mathcal{E}(\mathcal{G}\psi,\psi) \leq C e^{-c\tau} \int_{\Sigma} \mathcal{E}\left(\mathcal{G}\psi,\psi\right),
    \end{aligned}
\end{equation*}
and also the pointwise decay 
\begin{equation*}
    \begin{aligned}
        &   \sup_{\phi_{\tau}(\Sigma)}|\psi-\psi_\infty|\leq C\sqrt{E}e^{-c \tau},
    \end{aligned}
\end{equation*}
where $\psi_\infty$ is a constant satisfying $|\psi_\infty|\leq C\left( \sup_{\Sigma}|\psi|+ \sqrt{\int_{\Sigma}\mathbb{T}(n,n)[\psi]}\right)$, and 
\begin{equation}
    E=\sum_{0\leq i\leq 1}\sum_{1\leq \alpha\leq 3}\int_\Sigma  \mathcal{E}(\mathcal{G}\Omega_\alpha^i\psi,\Omega_\alpha^i\psi),
\end{equation}
where $\Omega_\alpha$ are defined in \eqref{eq: generators of lie algebra}. 
\end{customCorollary}

\subsection{The inhomogeneous wave equation and absorption of small error terms}
The following theorem concerns the inhomogeneous wave equation on the Schwarzschild--de~Sitter background.

\begin{customTheorem}{2}[detailed version]\label{thm: absorption of small terms}
Let F be a sufficiently regular function on $D(\tau_1,\tau_2)$. We have that, for sufficiently regular solutions of 
\begin{equation}\label{eq: wave equation with error terms}
    \Box_{g_{M,\Lambda}}\psi=F
\end{equation}
on $D(\tau_1,\tau_2)$, the following holds
\begin{equation}\label{eq: absorption of small terms}
    \begin{aligned}
        \int_{\phi_{\tau_2}(\Sigma)}\mathcal{E}(\mathcal{G}\psi,\psi)&+\int\int_{D(\tau_1,\tau_2)}\left(\frac{1}{(1-\mu)}(\partial_t\mathcal{G}\psi)^2+\mathcal{E}(\mathcal{G}\psi,\psi)\right)\\
        &   \leq C\int_{\phi_{\tau_1}(\Sigma)}\mathcal{E}(\mathcal{G}\psi,\psi)+C\int\int_{D(\tau_1,\tau_2)}(1-\mu)\left|\mathcal{G}F\right|^2+|F|^2,
    \end{aligned}
\end{equation}
for all $0\leq\tau_1<\tau_2$, and for some constant $C(M,\Lambda)$.

Note that this Theorem also holds for the domain $D_\delta(\tau_1,\tau_2)$ in the place of $D(\tau_1,\tau_2)$ for a sufficiently small $\delta>0$, where the constant $C$ is independent of $\delta$. 
\end{customTheorem}

We have the following Corollary.

\begin{customCorollary}{2}[detailed version]\label{cor: absorption of small terms}
Let $a=a^j\partial_j$ be a vector field, where
\begin{equation}
a^{\bar{t}}, a^r, g^{\theta\theta}(a^\theta)^2+g^{\phi\phi}(a^\phi)^2, [\mathcal{G},a]^{\bar{t}}, [\mathcal{G},a]^r, g^{\theta\theta}([\mathcal{G},a]^{\theta})^2+g^{\phi\phi}([\mathcal{G},a]^\phi)^2    
\end{equation}
are smooth and bounded functions on $D(0,\infty)$. For $\epsilon>0$ sufficiently small, there exists a constant $C(M,\Lambda)>0$ such that sufficiently regular solutions of the equation 
\begin{equation}\label{eq: small error equation SdS}
\Box_{g_{M,\Lambda}}\psi=\epsilon a^j\partial_j\psi
\end{equation}
satisfy the following estimate
\begin{equation}
    \int_{\phi_{\tau_2}(\Sigma)}\mathcal{E}\left(\mathcal{G}\psi,\psi\right) +\int\int_{D(\tau_1,\tau_2)}\left(\frac{1}{(1-\mu)}(\partial_t\mathcal{G}\psi)^2+\mathcal{E}(\mathcal{G}\psi,\psi)\right)\leq C\int_{\phi_{\tau_1}(\Sigma)}\mathcal{E}\left(\mathcal{G}\psi,\psi\right).
\end{equation}
Also, there exist constants $C(M,\Lambda)>0$, $c(M,\Lambda)>0$ depending only on the black hole parameters such that 
\begin{equation}
    \int_{\phi_{\tau}(\Sigma)}\mathcal{E}(\mathcal{G}\psi,\psi)\leq C e^{-c \tau} \int_{\Sigma} \mathcal{E}(\mathcal{G}\psi,\psi).
\end{equation}
Moreover, let $[\Omega_\alpha\mathcal{G},a]^{\bar{t}}, [\Omega_\alpha\mathcal{G},a]^r,  g^{\theta\theta}([\Omega_\alpha\mathcal{G},a]^\theta)^2+g^{\phi\phi}([\Omega_\alpha\mathcal{G},a]^\phi)^2 $ be bounded for all $\alpha$, where $\Omega_\alpha$ are defined in equation \eqref{eq: generators of lie algebra}. Then, we obtain 
\begin{equation}
    \sup_{\phi_{\tau}(\Sigma)}|\psi-\psi_\infty|\leq C\sqrt{E}e^{-c\tau},
\end{equation}
where $\psi_\infty$ is a constant satisfying $|\psi_\infty|\leq C\left( \sup_{\Sigma}|\psi|+ \sqrt{\int_{\Sigma}\mathbb{T}(n,n)[\psi]+\mathbb{T}(n,n)[\mathcal{G}\psi]}\right)$, and 
\begin{equation}
    E=\sum_{0\leq i\leq 1}\sum_{1\leq\alpha\leq 3} \int_{\Sigma} \mathcal{E}(\mathcal{G}\Omega_\alpha^i \psi,\Omega_\alpha^i\psi).
\end{equation}
\end{customCorollary}

\subsection{The higher order statement}\label{subsec: thm absorption of small terms}

The following Theorem is the higher order statement of Theorem \ref{thm: absorption of small terms}. We will use the following result in our companion paper~\cite{mavrogiannis1} to prove stability of solutions of the quasilinear wave equation. 

\begin{customTheorem}{3}\label{thm: higher order G estimate}
Let $F$ be a sufficiently regular function on $D(\tau_1,\tau_2)$. We have that, for a sufficiently regular solution of 
\begin{equation}\label{eq: cor: higher order G estimate, eq 0}
    \Box_{g_{M,\Lambda}}\psi=F
\end{equation}
on $D(\tau_1,\tau_2)$, and for any $j\geq 3$, there exists a constant $C=C(j,M,\Lambda)$ such that the following higher order estimate holds
\begin{equation}\label{eq: cor: higher order G estimate, eq 1}
    \begin{aligned}
        &	E_{\mathcal{G},j}[\psi](\tau_2) \\   &	\quad\quad +\int\int_{D(\tau_1,\tau_2)}        \frac{1}{1-\mu}\sum_{0\leq i_1+i_2\leq j-2}\sum_\alpha\left(\partial_{\bar{t}}^{1+i_1}\Omega_\alpha^{i_2}\mathcal{G}\psi\right)^2+\int_{\tau_1}^{\tau_2}d\tau E_{\mathcal{G},j}[\psi](\tau)\\
        &   \quad \leq C  E_{\mathcal{G},j}[\psi](\tau_1)+C\int\int_{D(\tau_1,\tau_2)} (1-\mu)\sum_{0\leq i_1+i_2\leq j-2}\sum_\alpha\left(\partial_{\bar{t}}^{i_1}\Omega_\alpha^{i_2}\mathcal{G}F\right)^2+\sum_{0\leq i_1+i_2+i_3\leq j-2}\sum_\alpha\left(\partial_{\bar{t}}^{i_1}\Omega_\alpha^{i_2}\partial_r^{i_3}F\right)^2\\
        &	\quad\quad\quad\quad\quad\quad\quad +C \int_{\{\bar{t}=\tau_2\}} \sum_{0\leq i_1+i_2+i_3\leq j-3}\sum_\alpha(1-\mu)^{2i_3+1}\left(\partial_{\bar{t}}^{i_1}\Omega_\alpha^{i_2}\partial_r^{i_3}\mathcal{G}F\right)^2+\sum_{0\leq i_1+i_2+i_3\leq j-3}\sum_\alpha\left(\partial_{\bar{t}}^{i_1}\Omega_\alpha^{i_2}\partial_r^{i_3}F\right)^2
    \end{aligned}
\end{equation}
for all $0\leq \tau_1\leq \tau_2$, where
\begin{equation}\label{eq: cor: higher order G estimate, eq 2}
\begin{aligned}
E_{\mathcal{G},j}[\psi](\tau) &   =\sum_\alpha\int_{\{\bar{t}=\tau\}}\sum_{1\leq i_1+i_2+i_3\leq j-1,i_3\geq 1}(1-\mu)^{2i_3-1}\left(\partial_{\bar{t}}^{i_1}\Omega_\alpha^{i_2}\partial_r^{i_3}\mathcal{G}\psi\right)^2+\sum_{1\leq i_1+i_2\leq j-1}\left(\partial_{\bar{t}}^{i_1}\Omega_\alpha^{i_2}\mathcal{G}\psi\right)^2 \\
&	\quad\quad\quad\quad\quad\quad\quad+\sum_{1\leq i_1+i_2+i_3\leq j-1}\left(\partial_{\bar{t}}^{i_1}\Omega_\alpha^{i_2}\partial_r^{i_3}\psi\right)^2.
\end{aligned}
\end{equation} 
The $j=2$ case is the same without the hypersurface error terms in the right hand side of \eqref{eq: cor: higher order G estimate, eq 1}.

Note that this Theorem also holds for the domain $D_\delta(\tau_1,\tau_2)$ in the place of $D(\tau_1,\tau_2)$, for a sufficiently small $\delta>0$, where the constant $C$ is independent of $\delta$.
\end{customTheorem}
\begin{proof}
	See Section \ref{sec: higher order G estimate}.
\end{proof}

\section{Proof of Theorem~\ref{main theorem 1}}\label{sec: proof of main theorem 1}

\subsection{The Morawetz estimate of~\cite{DR3}.}
As mentioned earlier, this proof utilizes a Morawetz estimate for the wave equation on Schwarzschild--de~Sitter, which we find in~\cite{DR3}. 
\begin{theorem}[Theorem 1.1 in~\cite{DR3}]\label{thm: integrated local energy decay for psi}
There exists a constant $B=B(\Sigma,M,\Lambda)>0$ such that, for $\psi$ satisfying the wave equation~\eqref{eq: wave equation} in $D(\tau_1,\tau_2)$
\begin{equation}\label{eq: integrated local energy decay for psi}
    \int\int_{D(\tau_1,\tau_2)} (\psi-\psi_\infty)^2+(\partial_{r}\psi)^2 +\left(1-\frac{3M}{r}\right)^2\left((\partial_{\bar{t}}\psi)^2+\left|\slashed{\nabla}\psi\right|^2\right)\leq B \int_{\phi_{\tau_1}(\Sigma)} J_\mu^n[\psi]n^\mu,
\end{equation}
and
\begin{equation}\label{eq: boundedness estimate}
    \left(\int_{\phi_{\tau_2}(\Sigma)}+\int_{\mathcal{H}^+\cap D(\tau_1,\tau_2)}+\int_{\bar{\mathcal{H}}^+\cap D(\tau_1,\tau_2)}\right)J^n_\mu[\psi] n^\mu\leq B\int_{\phi_{\tau_1}(\Sigma)}J^n_\mu[\psi] n^\mu
\end{equation}
where the constant $\psi_\infty$ satisfies $|\psi_\infty|\leq C\left( \sup_{\phi_{\tau_1}(\Sigma)}|\psi|+ \sqrt{\int_{\phi_{\tau_1}(\Sigma)}J^n_\mu[\psi] n^\mu}\right)$. 
\end{theorem}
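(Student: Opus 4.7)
The plan is to combine three standard physical-space ingredients of the vector field method: $T$-energy conservation from the Killing field $T = \partial_{\bar{t}}$; a red-shift current at each of the two non-degenerate horizons; and an integrated local energy (Morawetz) current built around a radial vector field vanishing at the photon sphere $r = 3M$. The subextremality $M \in \mathcal{B}_\Lambda$ will enter decisively, as it guarantees that both surface gravities $\kappa_+$ and $\bar{\kappa}_+$ are strictly positive, rendering both horizons amenable to the red-shift mechanism.

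First I would address the boundedness estimate~\eqref{eq: boundedness estimate}. Since $T$ is Killing and future-causal in $\mathcal{M}_\textit{ext}$, the current $J^T[\psi]$ is divergence-free and produces a conservation identity for the $T$-energy flux through $\phi_\tau(\Sigma)$. However $T$ becomes null on both horizons, so $J^T_\mu n^\mu$ degenerates in the direction transverse to each horizon. To repair this, I would introduce a red-shift vector field $N = N_+ + \bar{N}_+$ in the style of Dafermos--Rodnianski, with $N_+$ and $\bar{N}_+$ strictly timelike in neighbourhoods of $\mathcal{H}^+$ and $\bar{\mathcal{H}}^+$ respectively, whose deformation tensors produce bulk terms with the favourable sign near the corresponding horizon. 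The favourable sign is a consequence of $\kappa_+, \bar{\kappa}_+ > 0$. Applying the divergence theorem of Proposition~\ref{prop: divergence Theorem} to $J^{T + \varepsilon N}$ on $D(\tau_1, \tau_2)$ with $\varepsilon$ small, and absorbing the bulk errors localised away from the horizons into the Morawetz bulk of the next step, yields~\eqref{eq: boundedness estimate}.

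For the Morawetz bulk estimate~\eqref{eq: integrated local energy decay for psi} I would construct a current from $X = f(r^\star) \partial_{r^\star}$ supplemented by a lower-order modifier of the form $\chi(r^\star) \psi \nabla \psi$, with $f$ chosen to vanish at $r = 3M$ with $f'(3M) > 0$ and to vanish at the horizons. Computing the divergence identity in the $(r^\star, t, \omega)$ frame produces a bulk quadratic form in the first derivatives of $\psi$ whose coefficient of $(\partial_{r^\star}\psi)^2$ is controlled by $f'(r^\star)$ — non-degenerate at the photon sphere — while the coefficients of $(\partial_{\bar{t}}\psi)^2$ and $|\slashed{\nabla}\psi|^2$ pick up a factor $\left(1-\frac{3M}{r}\right)^2$, reflecting the Ralston--Sbierski trapping obstruction. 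The zeroth-order control $(\psi - \psi_\infty)^2$ is produced by the modifier together with a Hardy/Poincar\'e-type inequality in $r^\star$; the subtraction of $\psi_\infty$ is forced by the constant solutions of~\eqref{eq: wave equation}, and $\psi_\infty$ is extracted from initial data as the appropriately normalised projection onto this zero mode. Near each horizon the Morawetz current ceases to be coercive and is supplemented by the red-shift current of the previous step.

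The main obstacle is the simultaneous management of the photon sphere trapping and the zero-mode obstruction: the tuning of $f$ at $r = 3M$ exploits the positivity of the second derivative of the effective potential at the photon sphere (which persists for subextremal $M, \Lambda > 0$), and the signs of $f$, $f'$, and $\chi$ must be coordinated so that the $(\partial_{r^\star}\psi)^2$ coefficient remains strictly positive through $r = 3M$, the degenerate angular and temporal coefficients have the correct sign elsewhere, and the bulk $(\psi - \psi_\infty)^2$ term is produced with a controllable constant $\psi_\infty$ satisfying the stated bound $|\psi_\infty|\leq C\big(\sup_{\phi_{\tau_1}(\Sigma)}|\psi|+\sqrt{\int_{\phi_{\tau_1}(\Sigma)}J^n_\mu[\psi]n^\mu}\big)$. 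Combining the Morawetz and red-shift currents with suitable relative weights and verifying that cross-error terms can be absorbed then produces the two estimates~\eqref{eq: integrated local energy decay for psi} and~\eqref{eq: boundedness estimate} simultaneously.
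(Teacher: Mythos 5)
The first thing to note is that the paper does not prove this statement at all: it is quoted (after a change of coordinates) from~\cite{DR3}, and the only ``proof'' in the text is the citation, together with a Remark recording that the argument of~\cite{DR3} proceeds by spherical harmonic decomposition and that a completely physical-space proof on Schwarzschild--de~Sitter would be interesting but has only been carried out for Schwarzschild~\cite{DR7}. Your treatment of the boundedness estimate~\eqref{eq: boundedness estimate} is fine and standard: conservation of the degenerate $\partial_{\bar{t}}$-energy repaired by red-shift currents at both horizons, using that both surface gravities are strictly positive in the subextremal range $M\in\mathcal{B}_\Lambda$. That part matches the cited proof.

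For the bulk estimate~\eqref{eq: integrated local energy decay for psi}, however, your route is precisely the one the paper flags as \emph{not} having been done for $\Lambda>0$, and the step you wave at is the entire content of the theorem. In~\cite{DR3} the multiplier is chosen $\ell$-by-$\ell$ after projecting onto spherical harmonics (with $\ell=0$ treated separately), exactly because a single radial function $f$ must simultaneously make non-negative the coefficients of $(\partial_{r^\star}\psi)^2$, $(\partial_{\bar{t}}\psi)^2$, $|\slashed{\nabla}\psi|^2$ and $\psi^2$ in $\nabla^\mu J^X_\mu$, and these coefficients are coupled through $f$, $f'$ and two further derivatives of the order-zero modifier $\chi$; your sentence that the signs ``must be coordinated'' so that this works on all of $[r_+,\bar{r}_+]$ asserts the conclusion rather than proving it, and no verification is offered. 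Separately, the zero mode is handled too quickly: constants solve~\eqref{eq: wave equation} with vanishing energy, so no multiplier current can yield a coercive $\psi^2$ bulk term; one must isolate the spherical mean, show it tends to a constant $\psi_\infty$ by a $1+1$-dimensional argument, establish the stated bound $|\psi_\infty|\leq C\bigl(\sup_{\phi_{\tau_1}(\Sigma)}|\psi|+\sqrt{\smallint_{\phi_{\tau_1}(\Sigma)}J^n_\mu[\psi]n^\mu}\bigr)$, and only then run the Morawetz argument on the orthogonal complement. So either you should follow~\cite{DR3} and decompose into spherical harmonics, or you owe a genuinely new positivity computation; as written there is a gap at the central step.
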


\begin{remark}
At $r=3M$, corresponding to trapping, note that the vector field $\partial_r$ does not degenerate and is, in fact, in the direction of $\partial_{r^\star}$ by \eqref{eq: r star and r are the same on trapping}. We have used this relation to translate Theorem~\ref{thm: integrated local energy decay for psi} from the coordinates of~\cite{DR3}.
\end{remark}

\begin{remark}
The proof of Theorem~\ref{thm: integrated local energy decay for psi} of Dafermos--Rodnianski actually made use of a spherical harmonics decomposition. It would be interesting to give a completely physical space proof, as has been done for Schwarzschild, see~\cite{DR7}.
\end{remark}

\begin{remark}\label{rem: on the inhomogeneous Morawetz}
For a solution of the inhomogeneous equation 
\begin{equation}
    \Box_{g_{M,\Lambda}}\psi=F,
\end{equation}
one sees that the same estimate applies, with an additional term 
\begin{equation}
    \int\int_{D(\tau_1,\tau_2)}|\partial_{\bar{t}}\psi\cdot F|+|F|^2,
\end{equation}
on the right hand side of both equations of Theorem~\ref{thm: integrated local energy decay for psi}. Moreover, in this case, $\psi_\infty$ is bounded by 
\begin{equation}\label{eq: bound of psi infinity in the inhomogeneous case}
    |\psi_\infty|\leq C \left(\sup_{\phi_{\tau_1}(\Sigma)} |\psi| +\sqrt{\int_{\phi_{\tau_1(\Sigma)}} J^n_\mu[\psi] n^\mu+ \int\int_{D(0,\infty)}|\partial_{\bar{t}}\psi\cdot F|+|F|^2} \right).
\end{equation}
\end{remark}

\subsection{The equation for $\mathcal{G}\psi$ and the $\partial_{\bar{t}}$ energy identity}
Before we begin the proof, we derive the equation satisfied by $\mathcal{G}\psi$.

\begin{proposition}\label{prop: commuted equation for G psi}
Let $\psi$ satisfy the wave equation~\eqref{eq: wave equation}. Then, the following holds
\begin{equation}\label{eq: equation obeyed by Psi}
    \Box_{g_{M,\Lambda}}\mathcal{G}\psi=\big[\Box_{g_{M,\Lambda}},\mathcal{G}\big]\psi=\frac{2\sqrt{1-9M^2\Lambda}}{(1-\mu)r\sqrt{1+\frac{6M}{r}}}\partial_{\bar{t}}\mathcal{G}\psi+ E_1(r)\partial_{\bar{t}}\psi + E_2(r)\partial_r\psi 
\end{equation}
where 
\begin{equation}
    \begin{aligned}
        E_1(r) & =\frac{2\sqrt{1-\mu}(-1+2\Lambda r^2)}{r}, \\
        E_2(r) & = \frac{9M^2-24M\Lambda r^3+r^4\Lambda (9-2r^2\Lambda)}{9r^3\sqrt{1-\mu}}.
    \end{aligned}
\end{equation}
\end{proposition}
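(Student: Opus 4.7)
The plan is to verify the identity by directly computing the commutator $[\Box_{g_{M,\Lambda}},\mathcal{G}]\psi$ and invoking the wave equation. Writing $\mathcal{G}=h(r)\partial_r$ with $h(r)=r\sqrt{1-\mu}$ and using $\Box\psi=0$, so that $\Box\mathcal{G}\psi=[\Box,\mathcal{G}]\psi$, I would expand $\Box(h\partial_r\psi)-h\partial_r\Box\psi$ using the coordinate expression~\eqref{eq: wave operator}. Writing the wave operator schematically as $\Box=A\partial_t^2+B\partial_r^2+C\partial_r\partial_t+\slashed{\Delta}+D\partial_{\bar{t}}+E\partial_r$, Leibniz gives second-order terms $-hA'\partial_t^2\psi+(2Bh'-hB')\partial_r^2\psi+(Ch'-hC')\partial_r\partial_t\psi$, first-order terms $Bh''\partial_r\psi-hD'\partial_{\bar{t}}\psi+(Eh'-hE')\partial_r\psi$, and an angular commutator $[\slashed{\Delta},h\partial_r]\psi=(2h/r)\slashed{\Delta}\psi$ coming from $\slashed{\Delta}=r^{-2}\Delta_{\mathbb{S}^2}$ and the $r$-dependence of $h$.

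Next, I invoke $\Box\psi=0$ a second time to substitute $\slashed{\Delta}\psi=-A\partial_t^2\psi-B\partial_r^2\psi-C\partial_r\partial_t\psi-D\partial_{\bar{t}}\psi-E\partial_r\psi$ into the angular piece. After this substitution, the coefficients of $\partial_t^2\psi$, $\partial_r^2\psi$, $\partial_r\partial_t\psi$ become $-h(A'+2A/r)$, $2Bh'-hB'-2hB/r$, and $Ch'-hC'-2hC/r$ respectively. The first two vanish identically: the $\partial_t^2\psi$ cancellation uses that $A=-(1-\xi^2)/(1-\mu)=-27M^2/[(1-9M^2\Lambda)r^2]\propto r^{-2}$ and hence $A'+2A/r=0$, while the $\partial_r^2\psi$ cancellation is a purely algebraic consequence of $h=r\sqrt{B}$. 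These two cancellations are precisely the \emph{raison d'\^etre} of the choice $\mathcal{G}=r\sqrt{1-\mu}\partial_r$. The only surviving second-order contribution is $(Ch'-hC'-2hC/r)\partial_r\partial_t\psi$; since $h$ is $\bar{t}$-independent and $\partial_t=\partial_{\bar{t}}$ on $\mathcal{M}_\textit{ext}^o$, this equals $(1/h)(Ch'-hC'-2hC/r)\partial_{\bar{t}}\mathcal{G}\psi$, the desired transport term. Substituting $C=\pm 2\xi$ with $\xi$ given by~\eqref{eq: C(r)} and using the algebraic identity $2r^2(1-3M/r)^2(1+6M/r)+54M^2(1-\mu)=2r^2(1-9M^2\Lambda)$ produces the stated coefficient $\frac{2\sqrt{1-9M^2\Lambda}}{(1-\mu)r\sqrt{1+6M/r}}$.

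Finally, the first-order coefficients reduce to $E_1(r)=-h(D'+2D/r)$ and $E_2(r)=Bh''+Eh'-h(E'+2E/r)$; explicit substitution of the formulas for $D$, $E$, $h$, $B$ and simplification yields the stated closed-form expressions. The main obstacle is purely computational: verifying the two cancellations and reducing the first-order coefficients to clean closed form requires careful algebra but introduces no new ideas. The conceptual content is entirely contained in the two second-order cancellations, which are exactly what single out $\mathcal{G}$ as the correct commutation vector field and yield a commuted equation whose right-hand side contains only a single second-order term, the transport in $\partial_{\bar{t}}\mathcal{G}\psi$.
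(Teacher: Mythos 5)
Your proposal is correct and follows essentially the same route as the paper: expand the commutator by Leibniz in the regular hyperboloidal coordinates, use the wave equation a second time to eliminate $\slashed{\Delta}\psi$, and observe that the $\partial_{\bar{t}}^2$ and $\partial_r^2$ coefficients vanish because $g^{\bar{t}\bar{t}}\propto r^{-2}$ and $h=r\sqrt{g^{rr}}$ (these are exactly the paper's identities \eqref{eq: why the coordinates are useful, 1}--\eqref{eq: why the coordinates are useful, 2}), leaving only the cross term \eqref{eq: why the coordinates are useful, 3} as the transport coefficient. The only cosmetic difference is that the paper organizes the expansion via $\Box(f\partial_r\varphi)=\Box f\,\partial_r\varphi+2\nabla^cf\nabla_c\partial_r\varphi+f\Box\partial_r\varphi$ together with the coordinate commutator $[\Box,\partial_r]$, rather than your schematic coefficients $A,B,C,D,E$.
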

\begin{proof}
For convenience let 
\begin{equation}
    f(r)=r\sqrt{1-\mu}.
\end{equation}

Let $\varphi$ be a smooth function on $\mathcal{M}_{\textit{ext}}$, and note the commutation
\begin{equation}\label{eq: commutation for varphi}
    \begin{aligned}
        \big[\Box_{g_{M,\Lambda}},f(r)\frac{\partial}{\partial r}\big]\varphi &  =\Box(f(r)\partial_r\varphi)-f(r)\partial_r\Box_{g_{M,\Lambda}}\varphi=\Box_{g_{M,\Lambda}} f \partial_r\varphi+2\nabla^cf(r)\nabla_c \partial_r\varphi+f(r)\left(\Box_{g_{M,\Lambda}}\partial_r\varphi-\partial_r\Box_{g_{M,\Lambda}}\varphi\right) \\
        &   = \Box_{g_{M,\Lambda}} f(r)\partial_r\varphi +2g^{rr}\partial_r f(r)\partial_r^2\varphi+2g^{r\bar{t}}\partial_r f(r)\partial_{\bar{t}}\partial_r\varphi+ f(r)\left(\Box_{g_{M,\Lambda}}\partial_r\varphi-\partial_r\Box_{g_{M,\Lambda}}\varphi\right),
    \end{aligned}
\end{equation}
where 
\begin{equation}\label{eq: commutation partial r box varphi}
    \Box_{g_{M,\Lambda}}\partial_r\varphi-\partial_r\Box_{g_{M,\Lambda}}\varphi=-\left( \partial_r g^{rr} \partial_r^2\varphi +2\partial_r g^{r\bar{t}}\partial_{\bar{t}}\partial_r\varphi +\partial_rg^{\bar{t}\bar{t}}\partial_{\bar{t}}^2\varphi+\partial_r\left(\frac{1}{r^2}\right)r^2\slashed{\Delta}\varphi+\partial_r\left( g^{ab}\Gamma^c\:_{ab} \right)\partial_c\varphi \right).
\end{equation}

Now, by computing the right hand side of equation \eqref{eq: commutation partial r box varphi} for $\psi$, a solution of the wave equation \eqref{eq: wave equation}, we obtain
\begin{equation}\label{eq: commutation partial r box psi}
    \begin{aligned}
         & -\left( \partial_r g^{rr} \partial_r^2\psi +2\partial_r g^{r\bar{t}}\partial_{\bar{t}}\partial_r\psi +\partial_rg^{\bar{t}\bar{t}}\partial_{\bar{t}}^2\psi+\partial_r\left(\frac{1}{r^2}\right)r^2\slashed{\Delta}\psi+\partial_r\left( g^{ab}\Gamma^c\:_{ab} \right)\partial_c\psi \right) \\
        &  \quad =-\Bigg( \partial_r g^{rr} \partial_r^2\psi +2\partial_r g^{r\bar{t}}\partial_{\bar{t}}\partial_r\psi +\partial_rg^{\bar{t}\bar{t}}\partial_{\bar{t}}^2\psi\\
        &   \quad\quad\quad-\frac{2}{r}\left(-g^{rr}\partial_r^2\psi-2g^{r\bar{t}}\partial_{\bar{t}}\partial_r\psi-g^{\bar{t}\bar{t}}\partial_{\bar{t}}^2\psi +g^{ab}\Gamma^c\:_{ab}\partial_c\psi\right)+\partial_r\left(g^{ab}\Gamma^c\:_{ab}\right)\partial_c\psi \Bigg)\\
        & \quad =-\Bigg( \partial_r^2\psi\left(\partial_rg^{rr}+\frac{2}{r}g^{rr}\right) +\partial_{\bar{t}}^2\psi\left(\partial_rg^{\bar{t}\bar{t}}+\frac{2}{r}g^{\bar{t}\bar{t}}\right)+\partial_{\bar{t}}\partial_r\psi\left(2\partial_r g^{r\bar{t}}+\frac{4}{r}g^{r\bar{t}}\right) \\
        &   \quad\quad\quad +\frac{2}{r}g^{ab}\Gamma^c\:_{ab}\partial_c\psi +\partial_r\left(g^{ab}\Gamma^c\:_{ab}\right)\partial_c\psi\Bigg).
    \end{aligned}
\end{equation}
Finally, by using equation \eqref{eq: commutation for varphi} for $\psi$, a solution of the wave equation \eqref{eq: wave equation}, together with \eqref{eq: commutation partial r box psi}, we obtain
\begin{equation}
    \begin{aligned}
         \big[\Box_{g_{M,\Lambda}},f(r)\frac{\partial}{\partial r}\big]\psi &= \left(2\partial_r f(r) -f(r)\left(\partial_r g^{rr}+\frac{2}{r}g^{rr}\right)\right)\partial_r^2\psi+\left( 2g^{r\bar{t}}\partial_r f(r) -2 f(r)\left(\partial_rg^{r\bar{t}} +\frac{2}{r}g^{r\bar{t}}\right)\right)\partial_{{\bar{t}}}\partial_r\psi \\
         &  \quad -f(r)\left( \partial_rg^{\bar{t}\bar{t}}+\frac{2}{r}g^{\bar{t}\bar{t}} \right)\partial_{{\bar{t}}}^2\psi+\Box_{g_{M,\Lambda}} f(r) \partial_r\psi -\frac{2}{r}f(r)g^{ab}\Gamma^c\:_{ab}\partial_c\psi -f(r)\partial_r\left( g^{ab}\Gamma^c\:_{ab}\right)\partial_c\psi.
    \end{aligned}
\end{equation}
By revisiting the metric in regular hyperboloidal coordinates, see~\eqref{eq: regular metric de sitter}, we note 
\begin{equation}\label{eq: why the coordinates are useful, 1}
    \begin{aligned}
        \partial_rg^{\bar{t}\bar{t}}+\frac{2}{r}g^{\bar{t}\bar{t}}=0,
    \end{aligned}
\end{equation}
also
\begin{equation}\label{eq: why the coordinates are useful, 2}
    \begin{aligned}
        2\partial_r f(r)g^{rr} -f(r)\left(\partial_r g^{rr}+\frac{2}{r}g^{rr}\right)=0,
    \end{aligned}
\end{equation}
and finally 
\begin{equation}\label{eq: why the coordinates are useful, 3}
    2g^{r\bar{t}}\partial_r f(r)-2f(r)\left( \partial_rg^{r\bar{t}}+\frac{2}{r}g^{r\bar{t}}\right)= \frac{2\sqrt{1-9M^2\Lambda}}{\sqrt{1-\mu}\sqrt{1+\frac{6M}{r}}}.
\end{equation} 
Therefore, we conclude 
\begin{equation}\label{eq: in proof of thm 1, 1}
    \begin{aligned}
        \big[\Box_{g_{M,\Lambda}},f(r)\frac{\partial}{\partial r}\big]\psi &    = \frac{2\sqrt{1-9M^2\Lambda}}{\sqrt{1-\mu}\sqrt{1+\frac{6M}{r}}}\partial_r\partial_{{\bar{t}}}\psi + \Box_{g_{M,\Lambda}} f(r) \partial_r\psi -\frac{2}{r}f(r)g^{ab}\Gamma^c\:_{ab}\partial_c\psi -f(r)\partial_r\left( g^{ab}\Gamma^c\:_{ab}\right)\partial_c\psi \\
        &   = \frac{2\sqrt{1-9M^2\Lambda}}{r(1-\mu)\sqrt{1+\frac{6M}{r}}}\partial_{{\bar{t}}}\mathcal{G}\psi+ \Box_{g_{M,\Lambda}} f(r) \partial_r\psi -\frac{2}{r}f(r)g^{ab}\Gamma^c\:_{ab}\partial_c\psi -f(r)\partial_r\left( g^{ab}\Gamma^c\:_{ab}\right)\partial_c\psi.
    \end{aligned}
\end{equation}
Now, by revisiting the wave operator \eqref{eq: wave operator}, we compute 
\begin{equation}\label{eq: in proof of thm 1, 2}
    g^{ab}\Gamma^c\:_{ab}\partial_c\psi =\left(-\partial_r\xi +2r^{-1}\xi(r)\right)\partial_t\psi+\left(2r^{-1}(1-\mu) +\partial_r(1-\mu)\right)\partial_r\psi,
\end{equation}
and 
\begin{equation}\label{eq: in proof of thm 1, 3}
    \partial_r(g^{ab}\Gamma^c\:_{ab})\partial_c\psi=\partial_r\left(-\partial_r\xi+2r^{-1}\xi(r)\right)\partial_{\bar{t}}\psi+\partial_r\left(2r^{-1}(1-\mu)+\partial_r(1-\mu)\right)\partial_r\psi.
\end{equation}
By examining equations~\eqref{eq: in proof of thm 1, 1} and~\eqref{eq: in proof of thm 1, 2},~\eqref{eq: in proof of thm 1, 3} we conclude the result. 
\end{proof}

\begin{remark}\label{rem: remark on the choice of coordinates}
From relations \eqref{eq: why the coordinates are useful, 1}, \eqref{eq: why the coordinates are useful, 2} we can see how we arrived at our choice of coordinates, Definition \ref{def: regular hyperboloidal coordinates}, and our vector field \eqref{eq: new vector fields}. Specifically, in the class of coordinate tranformations $\bar{t}=t-H(r)$ we select $H(r)$ such that equation \eqref{eq: why the coordinates are useful, 1} holds. Moreover, we select the function $f$, such that \eqref{eq: why the coordinates are useful, 2} holds. By this choice of coordinates $(r,\bar{t},
\theta,\phi)$ and vector field $\mathcal{G}$, the expression of equation \eqref{eq: why the coordinates are useful, 3} is positive. This reflects the good unstable structure of trapping. 
\end{remark}

We apply a $\partial_{\bar{t}}$-multiplier estimate to the equation~\eqref{eq: equation obeyed by Psi}. 

\begin{proposition}\label{prop: estimate for Psi without all the derivatives}
Let $\psi$ satisfy the wave equation~\eqref{eq: wave equation}. Then, we have the following
\begin{equation}\label{eq: estimate for Psi without all the derivatives}
    \begin{aligned}
        &\left(\int_{\mathcal{H}^+\cap D(\tau_1,\tau_2)}+\int_{\bar{\mathcal{H}}^+\cap D(\tau_1,\tau_2)} +\int_{\phi_{\tau_2}(\Sigma)}\right)  J_\mu^{\partial_{\bar{t}}}[\mathcal{G}\psi]n^\mu +\int\int_{D(\tau_1,\tau_2)}\frac{2\sqrt{1-9M^2\Lambda}}{(1-\mu)r\sqrt{1+\frac{6M}{r}}}\left(\partial_{{\bar{t}}}\mathcal{G}\psi\right)^2\\
        & \quad =\int_{\phi_{\tau_1}(\Sigma)} J_\mu^{\partial_{\bar{t}}}[\mathcal{G}\psi]n^\mu -\int\int_{D(\tau_1,\tau_2)}\left(E_1(r) \partial_{r}\psi +E_2(r)\partial_{{\bar{t}}}\psi\right)\partial_{{\bar{t}}}\mathcal{G}\psi.
    \end{aligned}
\end{equation}
\end{proposition}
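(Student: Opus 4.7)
The plan is to apply the standard $\partial_{\bar{t}}$-multiplier identity (i.e.\ the divergence theorem of Proposition~\ref{prop: divergence Theorem}) to the commuted quantity $\mathcal{G}\psi$, using the inhomogeneous wave equation for $\mathcal{G}\psi$ derived in Proposition~\ref{prop: commuted equation for G psi}.

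First I would observe that, in our regular hyperboloidal coordinates, the metric coefficients of $g_{M,\Lambda}$ are independent of $\bar{t}$, so $\partial_{\bar{t}}$ is a Killing vector field and hence $^{(\partial_{\bar{t}})}\pi^{\mu\nu} = 0$. Consequently, from Definition~\ref{def: energy momentum tensor}, the divergence of the current $J^{\partial_{\bar{t}}}_\mu[\mathcal{G}\psi]$ reduces to
\begin{equation*}
    \nabla^{\mu} J^{\partial_{\bar{t}}}_{\mu}[\mathcal{G}\psi] = \partial_{\bar{t}}(\mathcal{G}\psi) \cdot \Box_{g_{M,\Lambda}}\mathcal{G}\psi .
\end{equation*}
Substituting the commutation formula~\eqref{eq: equation obeyed by Psi} from Proposition~\ref{prop: commuted equation for G psi} yields
\begin{equation*}
    \nabla^{\mu} J^{\partial_{\bar{t}}}_{\mu}[\mathcal{G}\psi] = \frac{2\sqrt{1-9M^2\Lambda}}{(1-\mu)r\sqrt{1+\frac{6M}{r}}}(\partial_{\bar{t}}\mathcal{G}\psi)^2 + \bigl(E_1(r)\partial_{\bar{t}}\psi + E_2(r)\partial_r\psi\bigr)\partial_{\bar{t}}\mathcal{G}\psi.
\end{equation*}

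Next I would apply Proposition~\ref{prop: divergence Theorem} in the domain $D(\tau_1,\tau_2)$ with $X = \partial_{\bar{t}}$ applied to $\mathcal{G}\psi$. Since $\partial_{\bar{t}}$ is future timelike on $\mathcal{M}_{\textit{ext}}$ and coincides with our chosen null generator on both horizons (see Section~\ref{subsec: volume forms}), the boundary flux terms on $\mathcal{H}^+\cap D(\tau_1,\tau_2)$, $\bar{\mathcal{H}}^+\cap D(\tau_1,\tau_2)$ and $\phi_{\tau_2}(\Sigma)$ appear with the sign convention as in~\eqref{eq: divergence Theorem}. Moving the quadratic bulk term $\frac{2\sqrt{1-9M^2\Lambda}}{(1-\mu)r\sqrt{1+\frac{6M}{r}}}(\partial_{\bar{t}}\mathcal{G}\psi)^2$ to the left-hand side, and the lower-order cross-term with $E_1,E_2$ to the right-hand side, produces exactly identity~\eqref{eq: estimate for Psi without all the derivatives}.

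There is essentially no obstacle here beyond bookkeeping: the identity is an equality, not an estimate, so no positivity arguments or absorption of error terms are required at this stage. The only subtlety worth remarking is the verification that $\partial_{\bar{t}}$ is genuinely Killing in the $(r,\bar{t},\theta,\phi)$ chart (immediate from~\eqref{eq: regular metric de sitter} since none of $1-\mu$, $\xi(r)$, or the sphere metric depend on $\bar{t}$), together with the fact that $\mathcal{G}\psi$ is sufficiently regular to justify the application of Proposition~\ref{prop: divergence Theorem}; the latter is part of the standing hypothesis that $\psi$ is a sufficiently regular solution, since $\mathcal{G}$ is a smooth vector field on the open set where $r\in(r_+,\bar{r}_+)$ and vanishes (hence is $C^0$) across the horizons by construction of Section~\ref{subsec: the SdS and S coordinates}. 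The cross-term on the right of~\eqref{eq: estimate for Psi without all the derivatives} is not signed, but its absorption into the Morawetz estimate~\eqref{eq: integrated local energy decay for psi} is the content of the subsequent argument, not of this proposition.
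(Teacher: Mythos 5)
Your proposal is correct and follows essentially the same route as the paper: identify $\nabla^\mu J^{\partial_{\bar{t}}}_\mu[\mathcal{G}\psi]=\partial_{\bar{t}}\mathcal{G}\psi\cdot\Box_{g_{M,\Lambda}}\mathcal{G}\psi$ using that $\partial_{\bar{t}}$ is Killing, substitute the commuted equation of Proposition~\ref{prop: commuted equation for G psi}, and apply the divergence theorem of Proposition~\ref{prop: divergence Theorem}. The only discrepancy is the pairing of $E_1,E_2$ with $\partial_{\bar{t}}\psi$ versus $\partial_r\psi$, where the paper's own statements are already inconsistent between Proposition~\ref{prop: commuted equation for G psi} and~\eqref{eq: estimate for Psi without all the derivatives}; this is immaterial to the argument.
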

\begin{proof}
We have
\begin{equation}
    \nabla^\mu J_\mu^{\partial_{\bar{t}}}[\mathcal{G}\psi]=\partial_{{\bar{t}}} \mathcal{G}\psi\Box_{g_{M,\Lambda}}\mathcal{G}\psi,
\end{equation}
and we have already computed
\begin{equation}
    \Box_{g_{M,\Lambda}}\mathcal{G}\psi=\frac{2\sqrt{1-9M^2\Lambda}}{(1-\mu)r\sqrt{1+\frac{6M}{r}}}\partial_{{\bar{t}}}\mathcal{G}\psi+ E_1(r)\partial_{r}\psi+ E_2(r)\partial_{{\bar{t}}}\psi
\end{equation}
in Proposition~\ref{prop: commuted equation for G psi}. We apply the divergence theorem, see equation~\eqref{eq: divergence Theorem}, to $\mathcal{G}\psi$ and conclude the result. 
\end{proof}

\subsection{Auxilliary estimates}

\begin{remark}\label{rem: keeping track of the powers of r}
In the proof of this Section we shall keep track of the powers of $r$ in our estimates, such that the constants in our estimates do not degenerate as $\Lambda\rightarrow 0$. This shall be useful in the next Section \ref{sec: thm in Schwarzschild}, where we treat the $\Lambda=0$ Schwarzschild case.
\end{remark}

We will also need the following elementary pointwise lemma later. 
\begin{lemma}\label{lem: the small order terms in the main estimate}
Let $E_1(r),E_2(r)$ be defined as in Proposition~\ref{prop: commuted equation for G psi}. Then, we have the following
\begin{equation}
    \begin{aligned}
        \Big|\int\int_{D(\tau_1,\tau_2)}& \left(E_1(r)\partial_{{\bar{t}}}\psi+E_2(r)\partial_{r}\psi\right)\partial_{{\bar{t}}}\mathcal{G}\psi\Big|\\
        &   \leq \int\int_{D(\tau_1,\tau_2)}\left(\frac{r}{2\epsilon}(1-\mu)\left(E_2(r)\partial_{r}\psi\right)^2+\frac{\epsilon}{2}\frac{1}{(1-\mu)r}(\partial_{{\bar{t}}}\mathcal{G}\psi)^2\right)\\
        &   \quad+\int\int_{D(\tau_1,\tau_2)}|\partial_r(1-\mu)|(\partial_{\bar{t}}\psi)^2.
    \end{aligned}
\end{equation}
\end{lemma}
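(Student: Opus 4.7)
The plan is to split the LHS integrand according to the two terms $E_1\partial_{\bar t}\psi$ and $E_2\partial_r\psi$ and handle each by a different mechanism. The $E_2$-contribution is purely a matter of a weighted pointwise Cauchy--Schwarz (Young) inequality; the $E_1$-contribution, by contrast, needs an integration by parts in $r$ that leverages the commutation property $[\mathcal{G},\partial_{\bar t}]=0$ (a direct consequence of the fact that $\mathcal{G}=r\sqrt{1-\mu}\,\partial_r$ has $r$-dependent coefficients only in the hyperboloidal coordinates). Boundary terms will drop out because $(1-\mu)$ vanishes at both $r=r_+$ and $r=\bar r_+$.

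For the $E_2$-piece, I apply the inequality $|ab|\le\frac{1}{2c}a^2+\frac{c}{2}b^2$ pointwise with $a=E_2\partial_r\psi$, $b=\partial_{\bar t}\mathcal{G}\psi$, and the weight $c=\epsilon/\bigl(r(1-\mu)\bigr)$. This directly yields
\begin{equation*}
|E_2\partial_r\psi\cdot\partial_{\bar t}\mathcal{G}\psi|\;\le\;\frac{r(1-\mu)}{2\epsilon}(E_2\partial_r\psi)^2+\frac{\epsilon}{2\,r(1-\mu)}(\partial_{\bar t}\mathcal{G}\psi)^2,
\end{equation*}
whose spacetime integral is precisely the first summand on the RHS. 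Note that $r(1-\mu)E_2^2$ is smooth and bounded on $[r_+,\bar r_+]$, as the factor $1/\sqrt{1-\mu}$ inside $E_2$ is exactly cancelled by $\sqrt{1-\mu}$ from the weight.

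For the $E_1$-piece, I use $[\mathcal{G},\partial_{\bar t}]=0$ to write $\partial_{\bar t}\mathcal{G}\psi=\mathcal{G}\partial_{\bar t}\psi=r\sqrt{1-\mu}\,\partial_r\partial_{\bar t}\psi$, and then
\begin{equation*}
E_1\partial_{\bar t}\psi\cdot\partial_{\bar t}\mathcal{G}\psi=\tfrac{1}{2}\,E_1\,r\sqrt{1-\mu}\,\partial_r\bigl[(\partial_{\bar t}\psi)^2\bigr]=(1-\mu)(-1+2\Lambda r^2)\,\partial_r\bigl[(\partial_{\bar t}\psi)^2\bigr],
\end{equation*}
using the explicit formula for $E_1$ from Proposition~\ref{prop: commuted equation for G psi}. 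Multiplying by the volume form $r^2\sin\theta\,d\bar t\,dr\,d\theta\,d\phi$, the $r$-weight becomes $r^2(1-\mu)(-1+2\Lambda r^2)$, which vanishes at the endpoints $r=r_+,\bar r_+$. Integrating by parts in $r$, the boundary contributions drop and the bulk integral equals
\begin{equation*}
-\int\int_{D(\tau_1,\tau_2)}\frac{1}{r^2}\,\partial_r\!\bigl[r^2(1-\mu)(-1+2\Lambda r^2)\bigr](\partial_{\bar t}\psi)^2\,dg.
\end{equation*}
The remaining $r$-dependent factor is a smooth bounded function on $[r_+,\bar r_+]$ whose absolute value is dominated by $|\partial_r(1-\mu)|$ (up to multiplicative constants depending only on $M,\Lambda$, absorbed into the notation), producing the second summand on the RHS.

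The main obstacle is essentially bookkeeping: one must track the interplay of the weights $r$, $\sqrt{1-\mu}$, and $1/\sqrt{1-\mu}$ in $E_1,E_2,\mathcal{G}$ so that Young's inequality splits into the two weighted pieces of the exact form stated (in particular the $\frac{1}{r(1-\mu)}$ weight on $\partial_{\bar t}\mathcal{G}\psi$, which is what will later allow the second term to be absorbed into the Morawetz bulk of Theorem~\ref{thm: integrated local energy decay for psi}). Beyond this, the proof is elementary: a pointwise Young inequality and a single integration by parts in $r$, crucially with vanishing boundary contributions.
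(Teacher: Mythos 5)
Your decomposition (weighted Young for the $E_2\partial_r\psi$ term, and an integration by parts in $r$ for the $E_1\partial_{\bar t}\psi$ term after writing $\partial_{\bar t}\mathcal{G}\psi=r\sqrt{1-\mu}\,\partial_r\partial_{\bar t}\psi$) is sound, and the first two steps are carried out correctly: the $E_2$ piece produces exactly the first bracket on the right hand side, $(1-\mu)E_2^2$ is indeed regular at the horizons, and the $r$-boundary terms in the integration by parts vanish because $1-\mu(r_\pm)=0$. The problem is the very last assertion, that
\begin{equation*}
\frac{1}{r^2}\Big|\partial_r\big[r^2(1-\mu)(-1+2\Lambda r^2)\big]\Big|\ \lesssim\ |\partial_r(1-\mu)| .
\end{equation*}
This is false. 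The function $\partial_r(1-\mu)=\tfrac{2M}{r^2}-\tfrac{2\Lambda}{3}r$ is strictly decreasing and vanishes at the interior point $r_c=(3M/\Lambda)^{1/3}\in(r_+,\bar r_+)$ (the maximum of $1-\mu$), whereas at that same point one computes, using $\partial_r\big[r^2(1-\mu)\big]=2r(1-\mu)+r^2\partial_r(1-\mu)=2r(1-\mu)$ there,
\begin{equation*}
\frac{1}{r^2}\,\partial_r\big[r^2(1-\mu)(-1+2\Lambda r^2)\big]\Big|_{r=r_c}=\frac{2(1-\mu)(-1+4\Lambda r^2)}{r}\Big|_{r=r_c},
\end{equation*}
which is nonzero for generic subextremal $(M,\Lambda)$ (it vanishes only for the special parameter relations $r_c=3M$, i.e.\ extremality, or $4\Lambda r_c^2=1$). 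Since the right hand side vanishes at $r_c$ and your left hand side does not, no multiplicative constant can repair the domination; the same obstruction defeats any attempt to reach the stated bound by a direct Young inequality on $E_1\partial_{\bar t}\psi\cdot\partial_{\bar t}\mathcal{G}\psi$, since that route produces the coefficient $\epsilon^{-1}(1-\mu)^2(-1+2\Lambda r^2)^2/r$, which likewise does not vanish at $r_c$.

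What your computation actually proves is the lemma with the final term replaced by $C(M,\Lambda)\int\int_{D(\tau_1,\tau_2)}r^{-2}(\partial_{\bar t}\psi)^2$, with an explicit bounded coefficient $\tfrac{1}{r^2}\big|\partial_r[r^2(1-\mu)(-1+2\Lambda r^2)]\big|$ that does not degenerate where $\partial_r(1-\mu)$ does. That weaker statement is all that is used downstream (the error term enters the proof of Theorem~\ref{main theorem 1} only as $r^{-2}(\partial_{\bar t}\psi)^2$, subsequently absorbed via Lemma~\ref{lem: the Psi dot times psi dot} and the Morawetz estimate), so the discrepancy is arguably with the precise weight written in the lemma rather than with your strategy. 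Nevertheless, as a proof of the inequality as stated, the final step is a genuine gap: you should either prove the corrected bound with an explicit non-degenerate weight, or justify why the coefficient produced by the integration by parts is pointwise controlled by $|\partial_r(1-\mu)|$ — which, as shown above, it is not. A secondary point: the lemma carries no implicit constants, so "absorbed into the notation" is not available to you here in any case.
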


We want to generate all the derivatives of $\mathcal{G}\psi$ from terms appearing in equation~\eqref{eq: estimate for Psi without all the derivatives}.

\begin{lemma}\label{lem: control of Psi prime}
There exists a constant $C(M,\Lambda)>0$, which does not degenerate as $\Lambda\rightarrow 0$, such that, if $\psi$ solves the wave equation~\eqref{eq: wave equation}, then we obtain
\begin{equation}
    \begin{aligned}
        \int\int_{D(\tau_1,\tau_2)}&  \frac{1-\mu}{r}\left(\partial_r\mathcal{G}\psi\right)^2+\frac{1}{r}|\slashed{\nabla}\mathcal{G}\psi|^2  \\
        &\leq C\Bigg( \int\int_{D(\tau_1,\tau_2)} r^{-1}(1-\mu)^{-1}\left(\partial_{\bar{t}}\mathcal{G}\psi\right)^2 +r^{-2}(\partial_r \psi)^2 + r^{-2} (\partial_t\psi)^2 \\
        & \quad\quad\quad  +\int_{\phi_{\tau_2}(\Sigma)} r^{-1}(\partial_{{\bar{t}}}\mathcal{G}\psi)^2+ (1-\mu)r(\partial_r\mathcal{G}\psi)^2+r(\partial_r\psi)^2 \\
        &   \quad\quad\quad+\int_{\phi_{\tau_1}(\Sigma)}r^{-1}(\partial_{{\bar{t}}}\mathcal{G}\psi)^2+ (1-\mu)r(\partial_r\mathcal{G}\psi)^2+r(\partial_r\psi)^2 \Bigg).
    \end{aligned}
\end{equation}
\end{lemma}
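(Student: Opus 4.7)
My plan is to establish the estimate by a direct integration-by-parts argument applied to the commuted equation~\eqref{eq: equation obeyed by Psi} for $\Psi := \mathcal{G}\psi$. The first step is to multiply both sides of~\eqref{eq: equation obeyed by Psi} by the weighted multiplier $\Psi/r$ and integrate over $D(\tau_1,\tau_2)$ against the spacetime volume form $r^2\sin\theta\,d\bar{t}\,dr\,d\theta\,d\phi$. On the left I substitute the explicit form~\eqref{eq: wave operator} of $\Box_{g_{M,\Lambda}}\Psi$ and integrate by parts in $r$ in the pure-$r$ principal term $(1-\mu)\partial_r^2\Psi$ and on the unit sphere in the angular term $\slashed{\Delta}\Psi$. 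The $r$-boundary contributions at $r = r_+$ and $r = \bar{r}_+$ vanish because $\mathcal{G}$, and hence $\Psi$, vanishes on both horizons. After rearrangement the identity produces, on one side, the bulk integrands $r(1-\mu)(\partial_r\Psi)^2$ and $r|\slashed{\nabla}\Psi|^2$ with respect to $\sin\theta\,d\bar{t}\,dr\,d\theta\,d\phi$, which after reincorporating the $r^2$ Jacobian of $dg$ coincide with the left-hand side $\frac{1-\mu}{r}(\partial_r\mathcal{G}\psi)^2 + \frac{1}{r}|\slashed{\nabla}\mathcal{G}\psi|^2$ of the Lemma.

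The remaining terms in the identity are controlled as follows. The second-time-derivative term $-\frac{1-\xi^2}{1-\mu}\partial_{\bar{t}}^2\Psi$ in $\Box\Psi$ is integrated by parts in $\bar{t}$, producing a positive bulk $(\partial_{\bar{t}}\Psi)^2$ contribution whose coefficient $\frac{1-\xi^2}{1-\mu}=\frac{27M^2}{(1-9M^2\Lambda)r^2}$ is bounded, so the resulting bulk piece is dominated by the $r^{-1}(1-\mu)^{-1}(\partial_{\bar{t}}\mathcal{G}\psi)^2$ term of the Lemma. The mixed term $2\xi\partial_r\partial_{\bar{t}}\Psi$, after one integration by parts in $r$, produces a cross term $\partial_r\Psi\cdot\partial_{\bar{t}}\Psi$ with bounded weight; a weighted Cauchy--Schwarz pairing $\sqrt{r(1-\mu)}\,\partial_r\Psi$ against $\sqrt{r/(1-\mu)}\,\partial_{\bar{t}}\Psi$ splits this into a small multiple $\varepsilon\, r(1-\mu)(\partial_r\Psi)^2$ absorbed on the left and a $C_\varepsilon\frac{r}{1-\mu}(\partial_{\bar{t}}\Psi)^2$ term matched on the right. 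The remaining first-order terms in $\Box_{g_{M,\Lambda}}$ together with the three driving terms $\frac{C}{(1-\mu)r}\partial_{\bar{t}}\Psi$, $E_1(r)\partial_{\bar{t}}\psi$, $E_2(r)\partial_r\psi$ on the right-hand side of~\eqref{eq: equation obeyed by Psi} are all handled by analogous weighted Cauchy--Schwarz estimates, using the crucial algebraic identity $\Psi^2 = r^2(1-\mu)(\partial_r\psi)^2$ to convert any residual $\Psi^2$ bulk and boundary contributions into $(\partial_r\psi)^2$ pieces matching the $r^{-2}(\partial_r\psi)^2$ term of the Lemma. The $\bar{t}$-boundary terms produced by all the integrations by parts, of schematic form $|\Psi\partial_{\bar{t}}\Psi|/r$ and $\Psi^2/r$, are dominated by the quantities $r^{-1}(\partial_{\bar{t}}\mathcal{G}\psi)^2$, $(1-\mu)r(\partial_r\mathcal{G}\psi)^2$ and $r(\partial_r\psi)^2$ listed on the right-hand side.

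The main difficulty is the need to manage the $(1-\mu)^{-1}$ factors at the horizons, which appear both as the coefficient of $\partial_{\bar{t}}\Psi$ on the right-hand side of~\eqref{eq: equation obeyed by Psi} and from the Cauchy--Schwarz on the mixed term. The resolution is that the Lemma's right-hand side explicitly tolerates a $(1-\mu)^{-1}$ weight on $(\partial_{\bar{t}}\mathcal{G}\psi)^2$, so I distribute the $(1-\mu)^{-1/2}$ factor evenly between the two factors in each Cauchy--Schwarz; combined with $\Psi^2 = r^2(1-\mu)(\partial_r\psi)^2$, this exactly cancels the $(1-\mu)^{-1}$ on the $\Psi^2$ side, yielding the regular quantity $r^2(\partial_r\psi)^2$. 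A secondary subtlety, flagged in Remark~\ref{rem: keeping track of the powers of r}, is the careful bookkeeping of $r$-weights so that $C$ does not degenerate as $\Lambda\to 0$; the specific choice of multiplier $1/r$ is made precisely so that, once combined with the $r^2$ Jacobian in $dg$, the good bulk terms emerge with the correct $(1-\mu)/r$ and $1/r$ weights of the Lemma statement and are uniform in the cosmological constant.
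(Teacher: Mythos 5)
Your proposal is correct and follows essentially the same route as the paper: the paper's proof likewise multiplies the commuted equation~\eqref{eq: Gpsi in Lemma that controls derivatives} by $\frac{1}{r}\mathcal{G}\psi$, integrates over $D(\tau_1,\tau_2)$, and concludes by integration by parts and Young's inequalities, which you have simply carried out in more detail (including the vanishing of the $r$-boundary terms at the horizons and the use of $(\mathcal{G}\psi)^2=r^2(1-\mu)(\partial_r\psi)^2$ to convert zeroth-order terms into $(\partial_r\psi)^2$). No substantive difference.
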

\begin{proof}
We begin from the equation for $\mathcal{G}\psi$, see the commutation~\eqref{eq: equation obeyed by Psi}
\begin{equation}\label{eq: Gpsi in Lemma that controls derivatives}
    \begin{aligned}
    &   -\frac{1-\xi^2(r)}{1-\mu}\partial_{\bar{t}}^2\mathcal{G}\psi+(1-\mu)\partial_r^2\mathcal{G}\psi+2\xi(r)\partial_r\partial_{\bar{t}}\mathcal{G}\psi+ \slashed{\nabla}^A\slashed{\nabla}_A \mathcal{G}\psi -\frac{2\sqrt{1-9M^2\Lambda}}{(1-\mu)r\sqrt{1+\frac{6M}{r}}}\partial_{\bar{t}}\mathcal{G}\psi \\
    \quad &= \left(E_1(r)-\left(\partial_r\xi -2r^{-1}\xi(r)\right)\right)\partial_{\bar{t}}\psi  +\left(E_2(r)-\left(2r^{-1}(1-\mu)+\partial_r(1-\mu)\right)\right)\partial_{r}\psi
    \end{aligned}
\end{equation}
and multiply it with $\frac{1}{r}\mathcal{G}\psi$
\begin{equation}\label{eq: Gpsi in Lemma that controls derivatives,1}
    \begin{aligned}
    -\frac{1-\xi^2(r)}{1-\mu}\frac{1}{r}\mathcal{G}\psi\partial_t^2\mathcal{G}\psi & +(1-\mu)\frac{1}{r}\mathcal{G}\psi\partial_r^2\mathcal{G}\psi+2\xi(r)\frac{1}{r}\mathcal{G}\psi\partial_r\partial_{\bar{t}}\mathcal{G}\psi+ \frac{1}{r}\mathcal{G}\psi\slashed{\nabla}^A\slashed{\nabla}_A \mathcal{G}\psi \\
    &   -\frac{2\sqrt{1-9M^2\Lambda}}{(1-\mu)r\sqrt{1+\frac{6M}{r}}}\frac{1}{r}\mathcal{G}\psi\partial_{\bar{t}}\mathcal{G}\psi = \frac{1}{r}\mathcal{G}\psi \textit{RHS}(\psi)
    \end{aligned}
\end{equation}
where $\textit{RHS}$ is the right hand side of equation~\eqref{eq: Gpsi in Lemma that controls derivatives}.

We integrate~\eqref{eq: Gpsi in Lemma that controls derivatives,1} over $D(\tau_1,\tau_2)$. Then, we perform the necessary integration by parts and Young's inequalities to get the result. 
\end{proof}

We want to generate a non-degenerate $(\partial_t\psi)^2$ from terms appearing in equation~\eqref{eq: estimate for Psi without all the derivatives}.

\begin{lemma}\label{lem: the Psi dot times psi dot}
There exists a constant $C(M,\Lambda)>0$, that does not degenerate as $\Lambda\rightarrow 0$, such that for all $\epsilon>0$, if $\psi$ satisfies the wave equation~\eqref{eq: wave equation}, then we have
\begin{equation*}
    \begin{aligned}
        \frac{C}{\epsilon}\int\int_{D(\tau_1,\tau_2)} \frac{1-\mu}{r^3}(\partial_{{\bar{t}}}\psi)^2\leq \frac{1}{2}&\int\int_{D(\tau_1,\tau_2)}\frac{2\sqrt{1-9M^2\Lambda}}{r^4(1-\mu)\sqrt{1+\frac{6M}{r}}}(\partial_{{\bar{t}}}\mathcal{G}\psi)^2 \\
        &   +\frac{C}{\epsilon^2}\left(\int\int_{D(\tau_1,\tau_2)}\frac{(r-3M)^2}{r^4\sqrt{1-9M^2\Lambda}}(\partial_{{\bar{t}}}\psi)^2\right). 
    \end{aligned}
\end{equation*}
\end{lemma}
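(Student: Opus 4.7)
The plan is to prove the estimate by integration by parts in the radial direction together with a Young-type inequality with parameter $\epsilon$. The key input is the commutativity $[\partial_{\bar t},\mathcal{G}]=0$, which combined with $\mathcal{G}=r\sqrt{1-\mu}\,\partial_r$ yields the pointwise identity $\partial_{\bar t}\mathcal{G}\psi = r\sqrt{1-\mu}\,\partial_r(\partial_{\bar t}\psi)$. Setting $u:=\partial_{\bar t}\psi$ and reinstating the spacetime volume element $r^2\sin\theta\,d\bar t\,dr\,d\Omega$, the claim reduces (modulo the bounded positive smooth factor $\sqrt{1-9M^2\Lambda}/\sqrt{1+6M/r}$) to the following weighted one-dimensional inequality: for each fixed $\bar t$ and angular coordinates,
\[
\int_{r_+}^{\bar r_+} \frac{1-\mu}{r}\,u^2\,dr \;\lesssim\; \epsilon\int_{r_+}^{\bar r_+}(\partial_r u)^2\,dr + \frac{1}{\epsilon}\int_{r_+}^{\bar r_+}\frac{(r-3M)^2}{r^2}\,u^2\,dr
\]
for every $\epsilon>0$, with implied constants depending only on $M,\Lambda$ and not degenerating as $\Lambda\to 0$ (cf.\ Remark~\ref{rem: keeping track of the powers of r}). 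Integrating in the angular variables and in $\bar t$ then produces the stated spacetime estimate.

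The central algebraic input is the photon-sphere identity
\[
\partial_r\!\left(\frac{1-\mu}{r^2}\right) \;=\; -\frac{2(r-3M)}{r^4},
\]
which precisely encodes the fact that $r=3M$ is the critical point of the null effective potential $(1-\mu)/r^2$. Define the antiderivative
\[
F(r) \;:=\; \int_{3M}^r \frac{1-\mu(\tilde r)}{\tilde r}\,d\tilde r,
\]
which is smooth on $[r_+,\bar r_+]$ with $F(3M)=0$ and $F'(r)=(1-\mu)/r$. Taylor's theorem then gives the crucial pointwise bound $|F(r)|\lesssim |r-3M|$.

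I would now integrate by parts in $r$ to get
\[
\int \frac{1-\mu}{r}\,u^2\,dr \;=\; \bigl[F\,u^2\bigr]_{r_+}^{\bar r_+} - 2\int F(r)\,u\,\partial_r u\,dr,
\]
and handle the cross term by Young's inequality $2|F u\,\partial_r u|\leq \epsilon (\partial_r u)^2 + \epsilon^{-1}F^2 u^2$. Using $F^2 \lesssim (r-3M)^2 \lesssim r_{\max}^2\,(r-3M)^2/r^2$ on $[r_+,\bar r_+]$, the second contribution matches the Morawetz-degenerate term on the right-hand side of the lemma, while the first contributes to the derivative term involving $(\partial_r u)^2$; the $\epsilon^{-1}$ and $\epsilon$ structure then transports, upon re-substituting $\partial_r u = \partial_{\bar t}\mathcal{G}\psi/(r\sqrt{1-\mu})$, into the form stated in the lemma.

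The main obstacle is the boundary contribution $[Fu^2]_{r_+}^{\bar r_+}$, which does not vanish since any antiderivative of $(1-\mu)/r$ normalised by $F(3M)=0$ has $F(r_\pm)\neq 0$. I would absorb these trace terms via the elementary Hardy-type estimate
\[
u(3M)^2 \;\lesssim\; \int \frac{(r-3M)^2}{r^2}u^2\,dr + \int (\partial_r u)^2\,dr,
\]
proved by combining $u(3M)=u(r)-\int_{3M}^r \partial_r u\,ds$ with Cauchy--Schwarz and integrating against the weight $(r-3M)^2/r^2$, followed by the fundamental theorem $u(r_\pm)=u(3M)+\int_{3M}^{r_\pm}\partial_r u\,ds$ to bound $u(r_\pm)^2$ in turn. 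Balancing these trace contributions against the Young splitting and re-absorbing a small multiple of $\int(\partial_r u)^2$ into the derivative term yields the desired $\epsilon$-form of the inequality. Throughout, one must carefully track the powers of $r$, as in Remark~\ref{rem: keeping track of the powers of r}, to ensure the constants depend only on $M,\Lambda$ and do not degenerate in the limit $\Lambda\to 0$ relevant to the Schwarzschild case treated in Section~\ref{sec: thm in Schwarzschild}.
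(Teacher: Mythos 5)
Your reduction to the one--dimensional weighted inequality is correct, and the integration by parts in $r$ with a Young splitting is indeed the right mechanism. The gap is in your treatment of the boundary terms $\bigl[F u^2\bigr]_{r_+}^{\bar r_+}$. Since $F$ is a primitive of the positive function $(1-\mu)/r$ normalised by $F(3M)=0$, it is strictly increasing with $F(r_+)<0<F(\bar r_+)$, so both boundary contributions are positive and sit on the side you are trying to bound. The Hardy-type trace estimates you propose, $u(3M)^2\lesssim\int\frac{(r-3M)^2}{r^2}u^2+\int(\partial_ru)^2$ followed by the fundamental theorem of calculus to reach $u(r_\pm)^2$, produce an \emph{$\epsilon$-independent} constant in front of $\int(\partial_ru)^2$. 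This is not "a small multiple" that can be re-absorbed: after multiplying through by $C/\epsilon$ to match the statement, the coefficient of the $(\partial_{\bar t}\mathcal{G}\psi)^2$ term becomes of size $\epsilon^{-1}$, whereas the Lemma requires it to be exactly $\frac12\cdot\frac{2\sqrt{1-9M^2\Lambda}}{r^4(1-\mu)\sqrt{1+6M/r}}$ --- a fixed fraction of the good bulk term generated in Proposition~\ref{prop: estimate for Psi without all the derivatives}, so that it can be absorbed there, with all of the $\epsilon^{-1}$-largeness confined to the Morawetz-degenerate term. Your route could in principle be repaired by proving an $\epsilon$-weighted trace inequality localized near the horizons (where the weight $(r-3M)^2/r^2$ is bounded below, Cauchy--Schwarz plus Young with parameter gives $u(r_\pm)^2\le C\epsilon\int(\partial_ru)^2+C\epsilon^{-1}\int\frac{(r-3M)^2}{r^2}u^2$ for $\epsilon\le 1$), but the re-absorption step as you have written it fails.

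The paper sidesteps the boundary terms entirely by a different choice of primitive: it writes $\frac{1-\mu}{r^2}u^2=\frac{1-\mu}{r^2}\,\partial_r(r-3M)\,u^2$ and integrates by parts with $(r-3M)$ as the primitive of the constant function $1$, keeping the weight $\frac{1-\mu}{r^2}$ intact. The boundary contribution then carries the factor $1-\mu$, which vanishes identically at $r=r_+,\bar r_+$; the term in which $\partial_r$ falls on the weight is, by the photon-sphere identity $\partial_r\bigl(\frac{1-\mu}{r^2}\bigr)=-\frac{2(r-3M)}{r^4}$ (which you state but never actually use in your integration by parts), proportional to $(r-3M)$ and pairs with the explicit $(r-3M)$ to give the Morawetz-degenerate term with no $\epsilon$-loss; only the cross term $2(r-3M)\frac{1-\mu}{r^2}u\,\partial_ru$ is split by Young's inequality, which is the sole place the $\epsilon$ versus $\epsilon^{-1}$ structure enters. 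Note that a single primitive $F$ of $(1-\mu)/r$ cannot vanish both at the two horizons and at the photon sphere, since it has at most one zero; the product structure $\frac{1-\mu}{r^2}\cdot(r-3M)$ is precisely what resolves this tension.
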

\begin{proof}
We begin by 
\begin{equation*}
        \int\int_{D(\tau_1,\tau_2)}\frac{1-\mu}{r^2}(\partial_{\bar{t}}\psi)^2\frac{1}{r}dg= \int_{\mathbb{S}^2}\int_0^\tau \int_{r_+}^{\bar{r}_+}\frac{1-\mu}{r^2}\frac{\partial}{\partial r} (r-3M) (\partial_t\psi)^2 \sin{\theta}dr d\bar{t}d\sigma.
\end{equation*}        
We perform an integration by parts, and write the right hand side of the above as 
\begin{equation*}
    \begin{aligned}
        & \int_{\mathbb{S}^2}\int_0^\tau\big[\frac{1-\mu}{r^2}(r-3M)(\partial_{\bar{t}}\psi)^2\big]_{r_+}^{\bar{r}_+}\\
        &   \quad -\int\int_{D(\tau_1,\tau_2)}\left(\frac{\partial}{\partial r} \left(\frac{1-\mu}{r^2}\right)(r-3M)(\partial_{\bar{t}}\psi)^2+(r-3M)\frac{1-\mu}{r^2}2\partial_{\bar{t}}\psi\frac{\partial}{\partial r} \partial_{\bar{t}}\psi\right)r\sin{\theta}dr d\bar{t}d\sigma\\
        &   =-\int\int_{D(\tau_1,\tau_2)}\left(\frac{\partial}{\partial r} \left(\frac{1-\mu}{r^2}\right)(r-3M)(\partial_{\bar{t}}\psi)^2+(r-3M)\frac{1-\mu}{r^2}2\partial_{\bar{t}}\psi\frac{\partial}{\partial r} \partial_{\bar{t}}\psi\right)r\sin{\theta}dr d\bar{t}d\sigma \\
        &   \quad\leq C\int\int_{D(\tau_1,\tau_2)}  \frac{1}{r^6}(r-3M)^2 (\partial_{\bar{t}}\psi)^2+\frac{1}{\epsilon r^8\sqrt{1-9M^2\Lambda}}(r-3M)^2(\partial_{\bar{t}}\psi)^2+\epsilon \frac{\sqrt{1-9M^2\Lambda}}{r^4(1-\mu)\sqrt{1+\frac{6M}{r}}}(\partial_{\bar{t}}\mathcal{G}\psi)^2.
    \end{aligned}
\end{equation*}

We conclude the proof.
\end{proof}

\subsection{Proofs of Theorem~\ref{main theorem 1} and Corollary~\ref{corollary on exponential decay}}

We are now ready to conclude Theorem~\ref{main theorem 1} and the exponential decay Corollary~\ref{corollary on exponential decay}. 
\begin{proof}[\textbf{Proof of Theorem~\ref{main theorem 1}}]
We first use Lemma~\ref{lem: the small order terms in the main estimate} on equation~\eqref{eq: estimate for Psi without all the derivatives}, in conjunction with Lemmata~\ref{lem: the small order terms in the main estimate}, \ref{lem: the Psi dot times psi dot} to obtain
\begin{equation}\label{eq: last inequality before the proof of theorem 1, 0}
    \begin{aligned}
        \Big(\int_{\mathcal{H}^+\cap D(\tau_1,\tau_2)}&+\int_{\bar{\mathcal{H}}^+\cap D(\tau_1,\tau_2)} +\int_{\phi_{\tau_2}(\Sigma)}\Big) J_\mu^{\partial_{\bar{t}}}[\mathcal{G}\psi]n^\mu \\
        &   +\int\int_{D(\tau_1,\tau_2)}\frac{2\sqrt{1-9M^2\Lambda}}{(1-\mu)r\sqrt{1+\frac{6M}{r}}}\left(\partial_{{\bar{t}}}\mathcal{G}\psi\right)^2+\frac{1-\mu}{r^4}(\partial_{\bar{t}}\psi)^2\\
        & \quad \lesssim \int_{\phi_{\tau_1}(\Sigma)} J_\mu^{\partial_{\bar{t}}}[\mathcal{G}\psi]n^\mu +\int\int_{D(\tau_1,\tau_2)}r^{-2}(\partial_{\bar{t}}\psi)^2+r^{-2}(\partial_r\psi)^2
    \end{aligned}
\end{equation}
Note that the horizon hypersurface terms on $\mathcal{H}^+,\bar{\mathcal{H}}^+$, on the left hand side of equation \eqref{eq: last inequality before the proof of theorem 1, 0}, are non-negative, see \eqref{eq: nonnegative energy momentum tensor}, so we may drop them from our estimates. We multiply the estimate of Lemma~\ref{lem: control of Psi prime} with a smallness parameter and add it to \eqref{eq: last inequality before the proof of theorem 1, 0} to obtain
\begin{equation}\label{eq: last inequality before the proof of theorem 1}
    \begin{aligned}
        &   \int_{\phi_{\tau_2}(\Sigma)} \left(J_\mu^{\partial_{\bar{t}}}[\mathcal{G}\psi]n^\mu\right) \\& +\int\int_{D(\tau_1,\tau_2)}\frac{1}{r(1-\mu)}(\partial_{{\bar{t}}}\mathcal{G}\psi)^2+ \frac{1-\mu}{r}(\partial_r\mathcal{G}\psi)^2+\frac{1}{r}|\slashed{\nabla}\mathcal{G}\psi|^2 + \frac{1-\mu}{r^4}(\partial_{\bar{t}}\psi)^2\\
        &   \quad \lesssim \int_{\phi_{\tau_1}(\Sigma)} J_\mu^{\partial_{\bar{t}}}[\mathcal{G}\psi]n^\mu +\int\int_{D(\tau_1,\tau_2)}r^{-2}(\partial_{\bar{t}}\psi)^2+r^{-2}(\partial_r\psi)^2 \\
        &  \quad\quad  +\int_{\phi_{\tau_2}(\Sigma)}(\partial_r\psi)^2+\int_{\phi_{\tau_1}(\Sigma)} (\partial_r\psi)^2.
    \end{aligned}
\end{equation}
Finally, we add in the Morawetz estimate~\eqref{eq: integrated local energy decay for psi} of Theorem~\ref{thm: integrated local energy decay for psi}, multiplied with a large parameter and use the boundedness estimate~\eqref{eq: boundedness estimate}, to conclude
\begin{equation}
    \begin{aligned}
        &   \int_{\phi_{\tau_2}(\Sigma)}\left(J_\mu^{\partial_{\bar{t}}}[\mathcal{G}\psi]n^\mu+J^n_\mu[\psi]n^\mu\right) \\& +\int\int_{D(\tau_1,\tau_2)}\frac{1}{(1-\mu)}(\partial_{{\bar{t}}}\mathcal{G}\psi)^2+ (1-\mu)(\partial_r\mathcal{G}\psi)^2+|\slashed{\nabla}\mathcal{G}\psi|^2+ (\partial_r\psi)^2 +(\partial_{{\bar{t}}}\psi)^2+|\slashed{\nabla}\psi|^2 \\
        &   \quad \leq C\int_{\phi_{\tau_1}(\Sigma)} J_\mu^{\partial_{\bar{t}}}[\mathcal{G}\psi]n^\mu+J^n_\mu[\psi]n^\mu.
    \end{aligned}
\end{equation}
We recall the definition of $\mathcal{E}(\mathcal{G}\psi,\psi)$ in equation \eqref{eq: new energy} and note equation \eqref{eq: new energy similarity}. We conclude~\eqref{eq: estimate of thm 1, 1}. 

Moreover, we conclude~\eqref{eq: estimate of thm 1, 2}, in view of the relation of the coaread formula of Section~\ref{subsec: coarea formula}, namely 
\begin{equation}
    \int\int_{D(\tau_1,\tau_2)}\mathcal{E}(\mathcal{G}\psi,\psi)\sim \int_{\tau_1}^{\tau_2}d\tau\int_{\phi_\tau(\Sigma)}\mathcal{E}(\mathcal{G}\psi,\psi).
\end{equation}
\end{proof}
We need the following lemma. 

\begin{lemma}\label{lem: similar sides give exp decay}
Let $f:[0,\infty)\rightarrow\mathbb{R}$ a non-negative continuous function satisfying 
\begin{equation*}
    f(\tau_2)+\int_{\tau_1}^{\tau_2}f(\tau) d\tau \leq k f(\tau_1),
\end{equation*}
for all $\tau_2>\tau_1\geq 0$ and some $k> 0$. Then, there exists constants $c,C>0$ depending only on $k$ such that: 
\begin{equation*}
    f(\tau)\leq C  e^{- c\tau}f(0),
\end{equation*}
for all $\tau\geq 0$.
\end{lemma}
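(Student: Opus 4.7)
\medskip

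The plan is to extract exponential decay in two stages: first deriving decay of the tail integral $g(\tau) \dot{=} \int_\tau^\infty f(s)\,ds$ via a differential inequality, and then upgrading this to pointwise exponential decay of $f$ by using the pointwise piece of the hypothesis.

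First I would observe that letting $\tau_2 \to \infty$ in the hypothesis (which is justified since $f \geq 0$ and the integral is monotone in $\tau_2$) gives the a priori finiteness
\begin{equation*}
    g(\tau_1) \dot{=} \int_{\tau_1}^\infty f(s)\,ds \leq k f(\tau_1), \qquad \text{for every } \tau_1 \geq 0.
\end{equation*}
Since $f$ is continuous and non-negative, $g$ is absolutely continuous with $g'(\tau) = -f(\tau)$. The inequality above therefore reads $g(\tau) \leq -k g'(\tau)$, i.e.\ $g'(\tau) + \tfrac{1}{k} g(\tau) \leq 0$. A standard integrating factor argument yields
\begin{equation*}
    g(\tau) \leq g(0) e^{-\tau/k} \leq k f(0) e^{-\tau/k}, \qquad \tau \geq 0.
\end{equation*}

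Next, I would convert this integrated exponential decay into a pointwise bound. The hypothesis with the integral term dropped gives the pointwise monotonicity-type bound $f(\tau) \leq k f(s)$ for any $0 \leq s \leq \tau$. For $\tau \geq 1$, integrating this inequality in the variable $s$ over the interval $[\tau-1,\tau]$ yields
\begin{equation*}
    f(\tau) \leq k \int_{\tau-1}^{\tau} f(s)\,ds = k\bigl(g(\tau-1) - g(\tau)\bigr) \leq k\, g(\tau-1) \leq k^2 f(0)\, e^{-(\tau-1)/k},
\end{equation*}
which is the desired exponential decay with constants depending only on $k$. For $0 \leq \tau < 1$ the elementary bound $f(\tau) \leq k f(0)$ from the hypothesis (applied with $\tau_1 = 0$, $\tau_2 = \tau$) combined with $e^{-\tau/k} \geq e^{-1/k}$ absorbs the short-time regime into a slightly larger overall constant $C = C(k)$.

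The main subtlety, rather than difficulty, is justifying the passage $\tau_2 \to \infty$ in the hypothesis and the differentiability of $g$; both follow from the non-negativity and continuity of $f$ together with the uniform bound $\int_{\tau_1}^{\tau_2} f \leq k f(\tau_1)$ which is monotone in $\tau_2$. No new machinery is required beyond elementary calculus, and the constants $c = 1/k$ and $C = k^2 e^{1/k}$ (after adjustment for $\tau < 1$) depend only on $k$, as claimed.
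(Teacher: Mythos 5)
Your proof is correct and complete. Note that the paper does not actually write out an argument for this lemma---it records it as elementary and defers to the reference \cite{DR3}---so there is no in-text proof to compare against line by line. The argument usually given for statements of this type is a pigeonhole-plus-iteration one: from $\int_{\tau}^{\tau+L} f \le k f(\tau)$ one extracts a time $\tau' \in [\tau,\tau+L]$ with $f(\tau') \le (k/L) f(\tau)$, combines this with the pointwise bound $f(s) \le k f(\tau')$ for $s \ge \tau'$ and a choice such as $L = 2k^2$ to get $f(\tau+L) \le \tfrac{1}{2} f(\tau)$, and iterates to obtain decay along an arithmetic sequence, filling in between by boundedness. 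Your route---passing to the tail integral $g(\tau) = \int_\tau^\infty f$, observing $g \le -k g'$, running an integrating-factor (Gronwall) argument, and then converting back to a pointwise bound by averaging $f(\tau) \le k f(s)$ over $s \in [\tau-1,\tau]$---is a genuinely different but equally elementary derivation, with the mild advantage of producing explicit constants $c = 1/k$ and $C = k^2 e^{1/k}$ (adjusted for the short-time regime). The only point worth flagging, and it is cosmetic, is that the hypothesis is stated for $\tau_2 > \tau_1$, so the pointwise inequality $f(\tau)\le k f(s)$ is available only for $s<\tau$; since you integrate in $s$, the endpoint is immaterial, and the $\tau=0$ case of the conclusion merely requires $C\ge 1$, which your final constant satisfies.
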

\begin{proof}
This is elementary. See for example~\cite{DR3}. 
\end{proof}

Now we can infer exponential decay.
\begin{proof}[\textbf{Proof of Corollary~\ref{corollary on exponential decay}}]
The estimate 
\begin{equation}\label{eq: estimate of cor 1, 1. In the proof}
    \int_{\phi_{\tau}(\Sigma)}\mathcal{E}(\mathcal{G}\psi,\psi)\lesssim e^{-c\tau}\int_{\Sigma}\mathcal{E}(\mathcal{G}\psi,\psi)
\end{equation}
is an immediate consequence of equation~\eqref{eq: estimate of thm 1, 2} of Theorem~\ref{main theorem 1} and Lemma~\ref{lem: similar sides give exp decay}, where $f(\tau)=\int_{\phi_\tau(\Sigma)}\mathcal{E}(\mathcal{G}\psi,\psi)$. 

We note that inequality \eqref{eq: estimate of cor 1, 1. In the proof} holds for $\partial_{\bar{t}}^i\Omega_\alpha^j\psi$, for all indices $i,j$, in the place of $\psi$, where $\Omega_\alpha$ are defined in equation \eqref{eq: generators of lie algebra}, since the following hold: $[\Box_{g_{M,\Lambda}},\partial_{\bar{t}}]=0, [\Box_{g_{M,\Lambda}},\Omega_\alpha]=0$. 

Therefore, one can prove, by commuting with $\Omega_\alpha$ and then a Sobolev estimate, the pointwise estimate  
\begin{equation}
    \sup_{\phi_{\tau}(\Sigma)} |\psi-\psi_\infty|^2 \lesssim \int_{\phi_\tau(\Sigma)} \sum_{0\leq i\leq 1}\sum_{0\leq \alpha\leq 3} \mathcal{E}(\mathcal{G}\Omega_\alpha^i\psi,\Omega_\alpha\psi)  \lesssim  e^{-c{\tau}} \int_\Sigma \sum_{0\leq i\leq 1}\sum_{0\leq \alpha\leq 3} \mathcal{E}(\mathcal{G}\Omega_\alpha^i\psi,\Omega_\alpha\psi),
\end{equation}
where $\psi_\infty$ is from Theorem \ref{thm: integrated local energy decay for psi}, and satisfies $|\psi_\infty|\leq C\left(\sup_{\Sigma}|\psi|+\sqrt{\int_{\Sigma}J^n_\mu[\psi]n^\mu}  \right)$.
\end{proof}

\begin{remark}
Note that from Theorem~\ref{thm: higher order G estimate} and the use of Sobolev estimates we can obtain pointwise estimates for arbitrary regular higher order derivatives
\begin{equation}
    \sup_{\phi_\tau(\Sigma)}\sum_{1\leq i+j+l\leq k} |\partial_r^i\partial_{\bar{t}}^j\Omega_\alpha^l\psi|^2\lesssim E_{\mathcal{G},k+3} e^{-c\tau},
\end{equation}
where for $E_{\mathcal{G},k+3}$ see \eqref{eq: cor: higher order G estimate, eq 2}.
\end{remark}

\section{Proof of Theorem~\ref{thm: absorption of small terms}}\label{sec: proof of main theorem 2}

We recall
\begin{equation}
    \big[\Box_{g_{M,\Lambda}},\mathcal{G}\big]\psi =\frac{2\sqrt{1-9M^2\Lambda}}{(1-\mu)r\sqrt{1+\frac{6M}{r}}}\partial_{{\bar{t}}}\mathcal{G}\psi + E_1(r)\partial_{\bar{t}}\psi + E_2(r)\partial_r\psi 
\end{equation}
from which we easily deduce 
\begin{equation}\label{eq: Gpsi,  inhomogeneous case}
    \Box_{g_{M,\Lambda}}\mathcal{G}\psi= \frac{2\sqrt{1-9M^2\Lambda}}{(1-\mu)r\sqrt{1+\frac{6M}{r}}}\partial_{{\bar{t}}}\mathcal{G}\psi+E_1(r)\partial_{\bar{t}}\psi + E_2(r)\partial_r\psi +\mathcal{G} F.
\end{equation}
Now, we follow the arguments of the Section~\ref{sec: proof of main theorem 1}. Specifically, we conclude that there exists a constant $C=C(M,\Lambda)>0$, such that for $\psi$ a sufficiently regular solution of the inhomogeneous wave equation~\eqref{eq: wave equation with error terms} on $D(\tau_1,\tau_2)$, we obtain
\begin{equation}\label{eq: Gpsi,  inhomogeneous case, eq 1}
	\begin{aligned}
		&   \int_{\phi_{\tau_2}(\Sigma)}  J_\mu^{\partial_{\bar{t}}}[\mathcal{G}\psi]n^\mu \\& +\int\int_{D(\tau_1,\tau_2)}\frac{1}{r(1-\mu)}(\partial_{{\bar{t}}}\mathcal{G}\psi)^2\\
		&   \quad \leq C\int_{\phi_{\tau_1}(\Sigma)} J_\mu^{\partial_{\bar{t}}}[\mathcal{G}\psi]n^\mu + C\int\int_{D(\tau_1,\tau_2)} \left| \partial_{\bar{t}}\mathcal{G}\psi\mathcal{G}F \right|+C\int\int_{D(\tau_1,\tau_2)} \left|\partial_{\bar{t}}\mathcal{G}\psi\left(E_1(r)\partial_{\bar{t}}\psi+E(r)\partial_r\psi\right)\right|.
	\end{aligned}
\end{equation}
Then, we use appropriate Young's inequalities on the right hand side of~\eqref{eq: Gpsi,  inhomogeneous case, eq 1} to conclude 
\begin{equation}\label{eq: Gpsi,  inhomogeneous case, eq 2}
\begin{aligned}
&   \int_{\phi_{\tau_2}(\Sigma)}  J_\mu^{\partial_{\bar{t}}}[\mathcal{G}\psi]n^\mu \\& +\int\int_{D(\tau_1,\tau_2)}\frac{1}{r(1-\mu)}(\partial_{{\bar{t}}}\mathcal{G}\psi)^2 \\
&   \quad \leq C\int_{\phi_{\tau_1}(\Sigma)} J_\mu^{\partial_{\bar{t}}}[\mathcal{G}\psi]n^\mu + C\int\int_{D(\tau_1,\tau_2)} (1-\mu)\left|\mathcal{G}F \right|^2+C\int\int_{D(\tau_1,\tau_2)} (1-\mu)\left|E_1(r)\partial_{\bar{t}}\psi+E_2(r)\partial_r\psi\right|^2.
\end{aligned}
\end{equation}
Now, we sum in~\eqref{eq: Gpsi,  inhomogeneous case, eq 2} the Morawetz estimate of Theorem~\ref{thm: integrated local energy decay for psi}, (in the inhomogeneous form given by Remark~\ref{rem: on the inhomogeneous Morawetz}), to conclude
\begin{equation}\label{eq: the last inequality, in S}
\begin{aligned}
&   \int_{\phi_{\tau_2}(\Sigma)} \left(J_\mu^{\partial_{\bar{t}}}[\mathcal{G}\psi]n^\mu+J^n_\mu[\psi]n^\mu\right) \\& +\int\int_{D(\tau_1,\tau_2)}\frac{1}{r(1-\mu)}(\partial_{{\bar{t}}}\mathcal{G}\psi)^2+ \frac{1-\mu}{r}(\partial_r\mathcal{G}\psi)^2+\frac{1}{r}|\slashed{\nabla}\mathcal{G}\psi|^2+ (\partial_r\psi)^2 +(\partial_{{\bar{t}}}\psi)^2+|\slashed{\nabla}\psi|^2 \\
&   \quad \leq C\int_{\phi_{\tau_1}(\Sigma)} J_\mu^{\partial_{\bar{t}}}[\mathcal{G}\psi]n^\mu+J^n_\mu[\psi]n^\mu + C\int\int_{D(\tau_1,\tau_2)} (1-\mu)\left|\mathcal{G}F \right|^2+|\partial_{\bar{t}}\psi\cdot F|,
\end{aligned}
 \end{equation}
where we generated all the derivatives of $\mathcal{G}\psi$ and $\psi$ on the left hand side of our estimate~\eqref{eq: the last inequality, in S}, by Lemmata~\ref{lem: control of Psi prime},~\ref{lem: the Psi dot times psi dot}. We apply a Young's inequality on the last term on the right hand side of~\eqref{eq: the last inequality, in S} and conclude 
\begin{equation}\label{eq: the last inequality, in S, eq 1}
\begin{aligned}
&   \int_{\phi_{\tau_2}(\Sigma)} \left(J_\mu^{\partial_{\bar{t}}}[\mathcal{G}\psi]n^\mu+J^n_\mu[\psi]n^\mu\right) \\& +\int\int_{D(\tau_1,\tau_2)}\frac{1}{r(1-\mu)}(\partial_{{\bar{t}}}\mathcal{G}\psi)^2+ \frac{1-\mu}{r}(\partial_r\mathcal{G}\psi)^2+\frac{1}{r}|\slashed{\nabla}\mathcal{G}\psi|^2+ (\partial_r\psi)^2 +(\partial_{{\bar{t}}}\psi)^2+|\slashed{\nabla}\psi|^2 \\
&   \quad \leq C\int_{\phi_{\tau_1}(\Sigma)} J_\mu^{\partial_{\bar{t}}}[\mathcal{G}\psi]n^\mu+J^n_\mu[\psi]n^\mu + C\int\int_{D(\tau_1,\tau_2)} (1-\mu)\left|\mathcal{G}F \right|^2+|F|^2.
\end{aligned}
\end{equation}

Now, Theorem~\ref{thm: absorption of small terms} is a trivial consequence of equation~\eqref{eq: the last inequality, in S, eq 1}, since we have the property \eqref{eq: new energy similarity}.

\begin{proof}[\textbf{Proof of Corollary~\ref{cor: absorption of small terms}}]
Suppose that
\begin{equation*}
    F=\epsilon a^j\partial_j\psi,
\end{equation*}
where $a^j$ are smooth and bounded with $\mathcal{G}a^j$ bounded, for all $j$. Now, we use equation~\eqref{eq: the last inequality, in S} to obtain 
\begin{equation}\label{eq: last inequality in Cor 2}
    \begin{aligned}
        &   \int_{\phi_{\tau_2}(\Sigma)} \left(J_\mu^{\partial_{\bar{t}}}[\mathcal{G}\psi]n^\mu+J^n_\mu[\psi]n^\mu\right) \\& +\int\int_{D(\tau_1,\tau_2)}\frac{1}{r(1-\mu)}(\partial_{{\bar{t}}}\mathcal{G}\psi)^2+ \frac{1-\mu}{r}(\partial_r\mathcal{G}\psi)^2+\frac{1}{r}|\slashed{\nabla}\mathcal{G}\psi|^2+ (\partial_r\psi)^2 +(\partial_{{\bar{t}}}\psi)^2+|\slashed{\nabla}\psi|^2 \\
        &   \quad \leq C\int_{\phi_{\tau_1}(\Sigma)} J_\mu^{\partial_{\bar{t}}}[\mathcal{G}\psi]n^\mu+J^n_\mu[\psi]n^\mu \\
        &   \quad\quad + C\int\int_{D(\tau_1,\tau_2)} \epsilon^2\left(\partial_j\psi\mathcal{G}a^j \right)^2+\epsilon^2(\mathcal{G}\psi)^2+\epsilon ^2(a^j\mathcal{G}\partial_j\psi)^2
    \end{aligned}
\end{equation}
since
\begin{equation*}
    \begin{aligned}
        \mathcal{G} F &= \epsilon\left(\partial_j\psi \mathcal{G} a^j\right)+ \epsilon\left(a^j\mathcal{G}\partial_j\psi\right).
    \end{aligned}
\end{equation*}
If $\epsilon$ is sufficiently small the terms on the right hand side can be absorbed, after also using a Hardy inequality. 

Finally, for the pointwise result, we commute the inhomogeneous equation \eqref{eq: wave equation with error terms} with the vector fields $\Omega_\alpha$, see equation \eqref{eq: generators of lie algebra}, to conclude that \eqref{eq: last inequality in Cor 2} holds for $\Omega_\alpha\psi$ in the place of $\psi$. Moreover, we know that $[\Omega_\alpha\mathcal{G}, a]^{\bar{t}},[\Omega_\alpha\mathcal{G}, a]^{r},g^{\theta\theta}([\Omega_\alpha\mathcal{G}, a]^{\theta})^2+g^{\phi\phi}([\Omega_\alpha\mathcal{G}, a]^{\phi})^2$ are bounded. We use a Sobolev inequality and conclude that 
\begin{equation}
    \sup_{\phi_\tau(\Sigma)}|\psi-\psi_\infty|\leq C\sqrt{E}e^{-c\tau}
\end{equation}
where, in view of equation \eqref{eq: bound of psi infinity in the inhomogeneous case}, the following holds
\begin{equation}
|\psi_\infty|\leq C\left(\sup_\Sigma |\psi|+\sqrt{\int_\Sigma \mathbb{T}(n,n)[\psi]+\mathbb{T}(\partial_{\bar{t}},n)[\mathcal{G}\psi]}\right)
\end{equation}
and 
\begin{equation}
    E=\sum_{0\leq i\leq 1}\sum_{1\leq \alpha\leq 3}\int_{\Sigma}\mathcal{E}(\mathcal{G}\Omega_\alpha^i\psi,\Omega_\alpha^i\psi).
\end{equation}
\end{proof}

\section{Proof of Theorem \ref{thm: higher order G estimate}}\label{sec: higher order G estimate}

	We define the auxiliary energy 
	\begin{equation}
	E_{j}[\psi](\tau)=  \sum_{1\leq i_1+i_2+i_3\leq j}\sum_{\alpha=1,2,3}\int_{\phi_\tau(\Sigma)}\left(\partial_{\bar{t}}^{i_1}\partial_r^{i_2}\Omega_\alpha^{i_3}\psi\right)^2.
	\end{equation}	
	
	We begin by noting the following Proposition, which is a higher order analogue of the inhomogeneous version of Theorem~\ref{thm: integrated local energy decay for psi}

	\begin{proposition}\label{prop: sec: higher order G estimate, prop 1}
	Let $\psi$ satisfy the inhomogeneous wave equation~\eqref{eq: cor: higher order G estimate, eq 0}. Then, we obtain the following higher order Morawetz estimate 
	\begin{equation}\label{eq: sec: higher order G, proof eq 1.1}
	\begin{aligned}
	&	E_j[\psi](\tau_2) + \int\int_{D(\tau_1,\tau_2)} \left(1-\frac{3M}{r}\right)^2\sum_{i_1+i_2=j}\sum_{\alpha} \left(\partial_{\bar{t}}^{i_1}\Omega_\alpha^{i_2}\psi\right)^2+\sum_{i_1+i_2+i_3=j,i_3\geq 1}\sum_\alpha\left(\partial_{\bar{t}}^{i_1}\Omega_\alpha^{i_2}\partial_r^{i_3}\psi\right)^2\\
	&	\qquad\quad\quad\quad\quad\quad\quad+\sum_{1\leq i_1+i_2+i_3\leq j-1}\sum_\alpha\left(\partial_{\bar{t}}^{i_1}\Omega_\alpha^{i_2}\partial_r^{i_3}\psi\right)^2\\
	&	\leq C E_j[\psi](\tau_1) +C\int\int_{D(\tau_1,\tau_2)}\left|\partial_{\bar{t}}^{j}\psi\cdot \partial_{\bar{t}}^{j-1}F\right|+\sum_{1\leq i_1+i_2+i_3\leq j-2}\sum_\alpha\left(\partial_{\bar{t}}^{i_1}\Omega_\alpha^{i_2}\partial_r^{i_3}F\right)^2\\
	&	\qquad\qquad\qquad +C\int_{\phi_{\tau_2}(\Sigma)} \sum_{0\leq i_1+i_2+i_3\leq j-2}\sum_\alpha\left(\partial_{\bar{t}}^{i_1}\Omega_\alpha^{i_2}\partial_r^{i_3}F\right)^2
	\end{aligned}
	\end{equation}
	for a constant $C(j,M,\Lambda)>0$.
	\end{proposition}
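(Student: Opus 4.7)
[\textbf{Proof proposal for Proposition \ref{prop: sec: higher order G estimate, prop 1}}]
The plan is to argue by induction on $j\geq 1$, with the base case $j=1$ being exactly the inhomogeneous version of Theorem~\ref{thm: integrated local energy decay for psi} recorded in Remark~\ref{rem: on the inhomogeneous Morawetz}. The strategy for the inductive step splits the top-order derivatives into two classes: those tangential to the trapped set (built out of the Killing fields $\partial_{\bar{t}}$ and the angular generators $\Omega_\alpha$) and those transverse to it (involving $\partial_r$). For the tangential class, since $[\Box_{g_{M,\Lambda}},\partial_{\bar{t}}]=0$ and $[\Box_{g_{M,\Lambda}},\Omega_\alpha]=0$, commuting the inhomogeneous equation~\eqref{eq: cor: higher order G estimate, eq 0} with any monomial $\partial_{\bar{t}}^{i_1}\Omega_\alpha^{i_2}$ with $i_1+i_2=j$ yields again an inhomogeneous wave equation with source $\partial_{\bar{t}}^{i_1}\Omega_\alpha^{i_2}F$. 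Applying the inhomogeneous Morawetz estimate of Remark~\ref{rem: on the inhomogeneous Morawetz} to each such commuted quantity produces the degenerate $(1-3M/r)^2$ control on these tangential derivatives and the non-degenerate control on the first normal derivative, with right-hand side of the form $\int\int|\partial_{\bar{t}}^{1+i_1}\Omega_\alpha^{i_2}\psi\cdot \partial_{\bar{t}}^{i_1}\Omega_\alpha^{i_2}F|$, which is absorbed into the $|\partial_{\bar{t}}^j\psi\cdot\partial_{\bar{t}}^{j-1}F|$ term (after a standard Cauchy-Schwarz splitting that throws the $F$-part into the lower-order $F$ bulk).

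For the transverse derivatives $\partial_{\bar{t}}^{i_1}\Omega_\alpha^{i_2}\partial_r^{i_3}\psi$ with $i_3\geq 1$, I would proceed by elliptic-type trading. Rearranging the wave operator~\eqref{eq: wave operator}, one obtains a pointwise identity of the schematic form
\begin{equation*}
(1-\mu)\partial_r^2\psi = \tfrac{1-\xi^2(r)}{1-\mu}\partial_{\bar{t}}^2\psi - 2\xi(r)\partial_r\partial_{\bar{t}}\psi - \slashed{\Delta}\psi + (\text{lower order})\cdot\partial\psi + F,
\end{equation*}
so every occurrence of $\partial_r^2$ can be replaced by derivatives that are either purely tangential or involve strictly fewer $\partial_r$'s, modulo $F$. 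Applying this identity inductively allows us to express $\partial_{\bar{t}}^{i_1}\Omega_\alpha^{i_2}\partial_r^{i_3}\psi$ in terms of tangential top-order quantities (already controlled by the previous step) plus lower-order transverse quantities (already controlled by the inductive hypothesis) plus derivatives of $F$. The weight $(1-\mu)^{2i_3-1}$ that appears in $E_{\mathcal{G},j}$ and its analogue here is exactly what is produced by iterating this substitution $i_3$ times, since each application costs one factor of $(1-\mu)^{-1}$ on the $\partial_r^2$ side. The hypersurface error term $\int_{\phi_{\tau_2}(\Sigma)}(\partial_r^{i_3}F)^2$ on the right of~\eqref{eq: sec: higher order G, proof eq 1.1} arises precisely when the replacement is applied on the final slice to match boundary energies with bulk quantities.

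Having obtained control of all derivatives at order $j$ in the bulk, the corresponding slice energy $E_j[\psi](\tau_2)$ is recovered by revisiting the divergence-theorem step of the base Morawetz estimate, which already produces the relevant slice flux. The degeneration in $(1-3M/r)^2$ is inherited unchanged from Theorem~\ref{thm: integrated local energy decay for psi} on the tangential part, and absent on the $\partial_r$ part thanks to~\eqref{eq: r star and r are the same on trapping}. All lower-order bulk terms, including the $\sum_{1\leq i_1+i_2+i_3\leq j-1}$ contribution, are supplied by the inductive hypothesis and added with a sufficiently small constant so as not to compete with the top-order control.

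The main obstacle I expect is bookkeeping of the $F$-driven error terms when iterating the elliptic substitution: each use of the wave equation trades $\partial_r^2$ for lower-order data plus $F$, and the differentiated version trades $\partial_r^{i_3}$ for $\partial_r^{i_3-2}$ plus $\partial_r^{i_3-2}F$ and commutator terms involving Christoffel symbols. Keeping track of which $F$-terms go into the bulk versus hypersurface parts of the right-hand side, and ensuring the $(1-\mu)$-weights align with those appearing in~\eqref{eq: cor: higher order G estimate, eq 2}, is the delicate step. There is no analytical obstacle beyond this; the trapping degeneracy at $r=3M$ is already accommodated by the form of the bulk on the left-hand side of~\eqref{eq: sec: higher order G, proof eq 1.1}.
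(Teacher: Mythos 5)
Your proposal correctly identifies two of the three ingredients the paper invokes (commutation with the Killing/angular fields $\partial_{\bar t},\Omega_\alpha$ followed by an application of the inhomogeneous Morawetz estimate, and elliptic-type recovery of $\partial_r$-derivatives from the wave equation), but it is missing the third, and the omission is not cosmetic: the paper's proof also relies on the \emph{redshift commutation} of the lecture notes~\cite{DR5}, and without it your argument cannot produce the left-hand side of~\eqref{eq: sec: higher order G, proof eq 1.1}. The point is that both $E_j[\psi]$ and the bulk terms $\sum_{i_3\geq 1}(\partial_{\bar t}^{i_1}\Omega_\alpha^{i_2}\partial_r^{i_3}\psi)^2$ carry \emph{no} degenerating weight in $1-\mu$, i.e.\ they assert non-degenerate control of transversal derivatives up to and including the horizons $\mathcal H^+,\bar{\mathcal H}^+$. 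Your elliptic trading rearranges the wave operator as $(1-\mu)\partial_r^2\psi=\dots$, and since $g^{rr}=1-\mu$ vanishes at $r=r_+,\bar r_+$ (the horizons are characteristic, so the spatial part of $\Box$ is not elliptic there), iterating this substitution only ever yields $(1-\mu)^{2i_3-2}(\partial_r^{i_3}\psi)^2\lesssim(\text{controlled})$ — exactly the degenerate weights $(1-\mu)^{2i_3-1}$ that appear in the $\mathcal G$-energy~\eqref{eq: cor: higher order G estimate, eq 2}. Indeed your own sentence about ``the weight $(1-\mu)^{2i_3-1}$ \dots and its analogue here'' betrays the problem: the Proposition's $E_j$ and $\psi$-bulk have no such weights, so you are implicitly proving a strictly weaker, horizon-degenerate statement. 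To remove the degeneration one must commute with a vector field transversal to the horizon (schematically $\partial_r$ in the regular coordinates) and exploit the positivity of the surface gravity, which generates a good-signed bulk term controlling $(\partial_r^{i_3}\psi)^2$ near $\mathcal H^+,\bar{\mathcal H}^+$; this is the redshift commutation, and it is the step your induction cannot replace.

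For $i_3=1$ your argument is fine, since the base Morawetz estimate~\eqref{eq: integrated local energy decay for psi} already controls $(\partial_r\psi)^2$ non-degenerately and commutes through $\partial_{\bar t}^{i_1}\Omega_\alpha^{i_2}$; and away from the horizons your elliptic substitution does recover all higher $\partial_r$-derivatives, so the tangential part of your scheme and the interior part of the transversal scheme match the paper's intent. The fix is to supplement the induction with the commuted redshift estimate near each horizon (applied to $\partial_r^{k}\psi$ for $1\leq k\leq j-1$) before invoking the elliptic identity in the interior; the bookkeeping of the $F$-terms you describe is then essentially as you outline it.
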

	\begin{proof}
	The proof of this Proposition follows from Theorem~\ref{thm: integrated local energy decay for psi} and additional redshift commutations of the Lecture notes~\cite{DR5} and elliptic estimates.
	\end{proof}

	Now we prove Theorem~\ref{thm: higher order G estimate}. 
	
	\begin{proof}[\textbf{Proof of Theorem \ref{thm: higher order G estimate}}]
	
	We start by recalling the result of Theorem \ref{thm: absorption of small terms} namely
	\begin{equation}\label{eq: sec: higher order G, proof eq 1}
		\begin{aligned}
			&	\int_{\phi_{\tau_2}(\Sigma)}\mathcal{E}\left(\mathcal{G}\psi,\psi\right)+\int\int_{D(\tau_1,\tau_2)}\frac{1}{r(1-\mu)}(\partial_t\mathcal{G}\psi)^2+\frac{1-\mu}{r}(\partial_r\mathcal{G}\psi)^2+\frac{1}{r}\left|\slashed{\nabla}\mathcal{G}\psi\right|^2+(\partial_{\bar{t}}\psi)^2+(\partial_r\psi)^2+|\slashed{\nabla}\psi|^2\\
			&	\quad\quad\quad\quad\quad\leq C \int_{\phi_{\tau_1}(\Sigma)} \mathcal{E}\left(\mathcal{G}\psi,\psi\right)+C\int\int_{D(\tau_1,\tau_2)}(1-\mu)r\left|\mathcal{G}F\right|^2+|F|^2,
		\end{aligned}
	\end{equation}
	 where we have kept certain $r$ factor explicitly for integrands related to $\mathcal{G}$, for comparison with the case $\Lambda=0$. (We note however that we drop the $r$ factor in the lower order terms.) The inequality~\eqref{eq: sec: higher order G, proof eq 1} already gives the result of the Theorem for $j=2$.

	 Now, for any $j\geq 3$, by commuting the inhomogeneous wave equation~\eqref{eq: cor: higher order G estimate, eq 0} with 
	\begin{equation}
		\partial_{\bar{t}}^k\Omega_\alpha^l,\qquad k+l\leq j-2
	\end{equation}
	we obtain 
	\begin{equation}\label{eq: sec: higher order G, proof eq 1.09}
	\begin{aligned}
	&	\int_{\phi_{\tau_2}(\Sigma)}\sum_{0\leq k+l= j-2}\sum_\alpha\mathcal{E}\left(\partial_{\bar{t}}^k\Omega_\alpha^l\mathcal{G}\psi,\partial_{\bar{t}}^k\Omega_\alpha^l\psi\right)\\
	&	+\int\int_{D(\tau_1,\tau_2)}\sum_{0\leq k+l= j-2}\sum_\alpha\left(\frac{1}{r(1-\mu)}(\partial_t\partial_{\bar{t}}^{k}\Omega_\alpha^{l}\mathcal{G}\psi)^2+\frac{1-\mu}{r}(\partial_r\partial_{\bar{t}}^{k}\Omega_\alpha^{l}\mathcal{G}\psi)^2+\frac{1}{r}\left|\slashed{\nabla}\partial_{\bar{t}}^{k}\Omega_\alpha^{l}\mathcal{G}\psi\right|^2\right)\\
	&	\quad\quad\quad\quad\quad\quad+\sum_{0\leq k+l= j-2}\sum_\alpha\left((\partial_{\bar{t}}\partial_{\bar{t}}^k\Omega_\alpha^l\psi)^2+(\partial_r\partial_{\bar{t}}^k\Omega_\alpha^l\psi)^2+|\slashed{\nabla}\partial_{\bar{t}}^k\Omega_\alpha^l\psi|^2\right)\\
	&	\leq C \int_{\phi_{\tau_1}(\Sigma)}\sum_{0\leq k+l= j-2}\sum_\alpha\mathcal{E}\left(\partial_{\bar{t}}^k\Omega_\alpha^l\mathcal{G}\psi,\partial_{\bar{t}}^k\Omega_\alpha^l\psi\right)\\
	&	\qquad+C\int\int_{D(\tau_1,\tau_2)}\sum_{0\leq k+l= j-2}\sum_\alpha\left((1-\mu)r\left(\partial_{\bar{t}}^k\Omega_\alpha^l\mathcal{G}F\right)^2+|\partial_{\bar{t}}^k\Omega_\alpha^lF|^2\right).
	\end{aligned}
	\end{equation}
	Then, to obtain all the lower order terms on the bulk of the left hand side of~\eqref{eq: sec: higher order G, proof eq 1.09} we sum in the Morawetz estimate~\eqref{eq: sec: higher order G, proof eq 1.1} of Proposition~\ref{prop: sec: higher order G estimate, prop 1}, at order $j-1$, and after appropriate Young's inequalities on the contribution of the $F$ error term we obtain
\begin{equation}\label{eq: sec: higher order G, proof eq 1.5}
\begin{aligned}
&	E^\prime_{\mathcal{G},j}[\psi](\tau_2)\\
&	+\int\int_{D(\tau_1,\tau_2)}\sum_{0\leq k+l= j-2}\sum_\alpha\left(\frac{1}{r(1-\mu)}(\partial_t\partial_{\bar{t}}^{k}\Omega_\alpha^{l}\mathcal{G}\psi)^2+\frac{1-\mu}{r}(\partial_r\partial_{\bar{t}}^{k}\Omega_\alpha^{l}\mathcal{G}\psi)^2+\frac{1}{r}\left|\slashed{\nabla}\partial_{\bar{t}}^{k}\Omega_\alpha^{l}\mathcal{G}\psi\right|^2\right)\\
&	\quad\quad\quad\quad\quad\quad\quad+\sum_{1\leq i_1+i_2+i_3\leq j-1}\sum_\alpha\left(\partial_{\bar{t}}^{i_1}\Omega_\alpha^{i_2}\partial_r^{i_3}\psi\right)^2\\
&	\leq C E^\prime_{\mathcal{G},j}[\psi](\tau_1)+C\int\int_{D(\tau_1,\tau_2)}(1-\mu)r\sum_{0\leq k+l\leq j-2}\sum_\alpha\left(\partial_{\bar{t}}^k\Omega_\alpha^l\mathcal{G}F\right)^2+\sum_{1\leq i_1+i_2+i_3\leq j-2}\sum_\alpha\left(\partial_{\bar{t}}^{i_1}\Omega_\alpha^{i_2}\partial_r^{i_3}F\right)^2\\
&	\qquad\qquad\qquad +C\int_{\phi_{\tau_2}(\Sigma)} \sum_{0\leq i_1+i_2+i_3\leq j-3}\sum_\alpha\left(\partial_{\bar{t}}^{i_1}\Omega_\alpha^{i_2}\partial_r^{i_3}F\right)^2
\end{aligned}
\end{equation}	
where we used the auxiliary energy 
\begin{equation}
	E^\prime_{\mathcal{G},j}[\psi](\tau)		=\int_{\{\bar{t}=\tau\}}\sum_{0\leq i_1+i_2= j-2}\sum_\alpha J_\mu^{\partial_{\bar{t}}}[\partial_{\bar{t}}^{i_1}\Omega_\alpha^{i_2}\mathcal{G}\psi]n^\mu+E_{j-1}[\psi](\tau).
\end{equation}
Note that in the energy estimate~\eqref{eq: sec: higher order G, proof eq 1.5} there is no degeneration, at the low order, on the photon sphere $r=3M$, because we repeated a Poincare type argument at top order, see Lemma \ref{lem: the Psi dot times psi dot}.

Note that we control all the desired higher order derivatives (at order $j$)  related to $\mathcal{G}$, on the left hand side of \eqref{eq: sec: higher order G, proof eq 1.5}, except for 
\begin{equation}
	\partial_r^{i_3}\mathcal{G}\psi,\qquad i_3=j-1
\end{equation}
with the appropriate degenerative weight. For that purpose, we return to the equation satisfied by $\mathcal{G}\psi$, see \eqref{eq: equation obeyed by Psi}, which reads 
\begin{equation}\label{eq: sec: higher order G, proof eq 3}
	\begin{aligned}
		&	g^{rr}\partial_r^2\mathcal{G}\psi=-\left(g^{\bar{t}\bar{t}}\partial_t^2\mathcal{G}\psi+2g^{r\bar{t}}\partial_t\partial_r\mathcal{G}\psi+\slashed{\Delta}\mathcal{G}\psi\right)+\frac{2\sqrt{1-9M^2\Lambda}}{\sqrt{1-\mu}\sqrt{1+\frac{6M}{r}}}\partial_{\bar{t}}\partial_r\psi+ E_1(r)\partial_{\bar{t}}\psi + E_2(r)\partial_r\psi+\mathcal{G}F,
	\end{aligned} 
\end{equation}
where recall $g^{rr}=1-\mu$. We differentiate \eqref{eq: sec: higher order G, proof eq 3} by
\begin{equation}
	\partial_r^{i_3-2},
\end{equation}
and then we square the result and note that there exists a constant $C(j,M,\Lambda)>0$ such that
\begin{equation}\label{eq: sec: higher order G, proof eq 5}
\begin{aligned}
(1-\mu)^2(\partial_r^{i_3}\mathcal{G}\psi)^2	&	\leq C(j,M,\Lambda) \left(\left(\partial_r^{i_3-2}\left(\frac{1}{\sqrt{1-\mu}}\right)\frac{2\sqrt{1-9M^2\Lambda}}{\sqrt{1+\frac{6M}{r}}} (\partial_r\partial_{\bar{t}}\psi)\right)^2+\dots+\left(\partial_r^{i_3-2}\mathcal{G}F\right)^2\right)\\
\end{aligned}
\end{equation}
where, in inequality~\eqref{eq: sec: higher order G, proof eq 5} the three terms displayed are from left to right the highest $\partial_r$ derivative term, the fastest degenerating $(1-\mu)$ term and the contribution of the inhomogeneity $F$. We rewrite~\eqref{eq: sec: higher order G, proof eq 5} as
\begin{equation}\label{eq: sec: higher order G, proof eq 5.1}
	\begin{aligned}
		(1-\mu)^2(\partial_r^{i_3}\mathcal{G}\psi)^2	&	\leq C(j,M,\Lambda)\left( \frac{1}{(1-\mu)^{2i_3-3}}\left(\partial_r\partial_{\bar{t}}\psi\right)^2+\dots+ \left(\partial_r^{i_3-2}\mathcal{G}F\right)^2\right)\\
	\end{aligned}
\end{equation}
and therefore, by multiplying~\eqref{eq: sec: higher order G, proof eq 5.1} with $\frac{(1-\mu)^{2i_3-3}}{r}$, so that no terms blow up at the roots of $1-\mu=0$, we obtain
\begin{equation}\label{eq: sec: higher order G, proof eq 5.2}
\begin{aligned}
\frac{(1-\mu)^{2i_3-1}}{r}(\partial_r^{i_3}\mathcal{G}\psi)^2	&	\leq C(j,M,\Lambda)\left(\frac{1}{r^3(1-\mu)} \left(\partial_{\bar{t}}\mathcal{G}\psi\right)^2+\dots+\frac{(1-\mu)^{2i_3-3}}{r}\left(\partial_r^{i_3-2}\mathcal{G}F\right)^2\right).
\end{aligned}
\end{equation}
For the $j=3$ case of inequality~\eqref{eq: sec: higher order G, proof eq 5.2} with all its terms displayed see already Remark~\ref{rem: sec: higher order G estimate, rem 1}. Therefore, by using the integrated inequality \eqref{eq: sec: higher order G, proof eq 1.5} and the pointwise estimate~\eqref{eq: sec: higher order G, proof eq 5.2} we obtain 
\begin{equation}\label{eq: sec: higher order G, proof eq 6}
\begin{aligned}
&	E^\prime_{\mathcal{G},j}[\psi](\tau_2) \\   &	\quad\quad +\int\int_{D(\tau_1,\tau_2)} \sum_{0\leq i_1+i_2= j-2}\sum_\alpha \frac{1}{r(1-\mu)}\left(\partial_{\bar{t}}^{1+i_1}\Omega_\alpha^{i_2}\mathcal{G}\psi\right)^2+\sum_{1\leq i_1+i_2\leq j-2}\sum_\alpha\frac{1}{r}\left|\slashed{\nabla}\partial_{\bar{t}}^{i_1}\Omega_\alpha^{i_2}\mathcal{G}\psi\right|^2\\
&	\quad\quad +\int\int_{D(\tau_1,\tau_2)}\sum_{1\leq i_1+i_2+i_3\leq j-1,i_3\geq 1}\sum_\alpha\frac{(1-\mu)^{2i_3-1}}{r}\left(\partial_{\bar{t}}^{i_1}\Omega_\alpha^{i_2}\partial_r^{i_3}\mathcal{G}\psi\right)^2+\sum_{1\leq i_1+i_2+i_3\leq j-1}\sum_\alpha\left(\partial_{\bar{t}}^{i_2}
\Omega_\alpha^{i_2}\partial_r^{i_3}\psi\right)^2\\
&   \quad \leq C  E^\prime_{\mathcal{G},j}[\psi](\tau_1)+C\int\int_{D(\tau_1,\tau_2)} (1-\mu)r\sum_{0\leq i_1+i_2\leq j-2}\sum_\alpha\left(\partial_{\bar{t}}^{i_1}\Omega_\alpha^{i_2}\mathcal{G}F\right)^2+\sum_{0\leq i_1+i_2+i_3\leq j-2}\sum_\alpha\left(\partial_{\bar{t}}^{i_1}\Omega_\alpha^{i_2}\partial_r^{i_3}F\right)^2.\\
&	\qquad\qquad\qquad\qquad+ C\int_{\phi_{\tau_2}(\Sigma)} \sum_{0\leq i_1+i_2+i_3\leq j-3}\sum_\alpha\left(\partial_{\bar{t}}^{i_1}\Omega_\alpha^{i_2}\partial_r^{i_3}F\right)^2
\end{aligned}
\end{equation}
for all $\tau_1\leq \tau_2$ and for all $j\geq 2$.

Now, we want to estimate from below $E^\prime_{\mathcal{G},j}[\psi](\tau_2)$ by $E_{\mathcal{G},j}[\psi](\tau_2)$, in the energy estimate \eqref{eq: sec: higher order G, proof eq 6}, for all orders $j\geq 3$ at the expense of producing hypersurface error terms. We note that there exists a constant $c(j,M,\Lambda)$ such that 
\begin{equation}\label{eq: sec: higher order G, proof eq 9}
\begin{aligned}
& c(j,M,\Lambda) \int_{\{\bar{t}=\tau\}} \sum_{i_1+i_2= j-2}\sum_\alpha\left( \frac{1}{r}(\partial_{\bar{t}}\partial_{\bar{t}}^{i_1}\Omega_\alpha^{i_2}\mathcal{G}\psi)^2+(1-\mu)r(\partial_r\partial_{\bar{t}}^{i_1}\Omega_\alpha^{i_2}\mathcal{G}\psi)^2+r\left|\slashed{\nabla}\partial_{\bar{t}}^{i_1}\Omega_\alpha^{i_2}\mathcal{G}\psi\right|^2 \right)\\
&	\qquad +c(j,M,\Lambda) E_{j-1}[\psi](\tau)\\
& \leq E^\prime_{\mathcal{G},j}[\psi](\tau).
\end{aligned}
\end{equation}
Then, we want to obtain the top order derivatives
\begin{equation}
\partial_r^{i_3}\mathcal{G}\psi,\qquad i_3=j-1,
\end{equation}
on the left hand side, with the appropriate degenerative weights of $(1-\mu)$. Therefore, we multiply the pointwise estimate~\eqref{eq: sec: higher order G, proof eq 5.1} with $(1-\mu)^{2i_3-3}$ and sum it to~\eqref{eq: sec: higher order G, proof eq 9} to obtain that there exist constants
\begin{equation}
	c(j,M,\Lambda),C(j,M,\Lambda)>0
\end{equation}
\begin{equation}\label{eq: sec: higher order G, proof eq 10}
	\begin{aligned}
		&	 c(j,M,\Lambda)\int_{\{\bar{t}=\tau\}} \sum_{0\leq i_1+i_2\leq j-2 }\sum_\alpha\mathcal{E}\left(\partial_{\bar{t}}^{i_1}\Omega_\alpha^{i_2}\mathcal{G}\psi,\partial_{\bar{t}}^{i_1}\Omega_\alpha^{i_2}\psi\right)+\sum_{1\leq i_1+i_2+i_3\leq j-1,i_3\geq 1}\sum_\alpha r(1-\mu)^{2i_3-1}\left(\partial_{\bar{t}}^{i_1}\Omega_\alpha^{i_2}\partial_r^{i_3}\mathcal{G}\psi\right)^2\\
		&	\qquad\qquad\qquad\qquad +\sum_{1\leq i_1+i_2+i_3\leq j-1,i_3\geq 1}\sum_\alpha r\left(\partial_{\bar{t}}^{i_1}\Omega_\alpha^{i_2}\partial_r^{i_3}\psi\right)^2\\
		&	\quad\leq  E^\prime_{\mathcal{G},j}[\psi](\tau)	\\
		&	\qquad + C(j,M,\Lambda)\int_{\{\bar{t}=\tau\}} \sum_{0\leq i_1+i_2+i_3\leq j-3}\sum_\alpha(1-\mu)^{2i_3+1}r\left(\partial_{\bar{t}}^{i_1}\Omega_\alpha^{i_2}\partial_r^{i_3}\mathcal{G}F\right)^2+\sum_{0\leq i_1+i_2+i_3\leq j-3}\sum_\alpha r\left(\partial_{\bar{t}}^{i_1}\Omega_\alpha^{i_2}\partial_r^{i_3}F\right)^2.
	\end{aligned}
\end{equation}
Note that the left hand side of~\eqref{eq: sec: higher order G, proof eq 10} is similar to $E_{\mathcal{G},j}[\psi](\tau)$.

Finally, by using the energy estimate \eqref{eq: sec: higher order G, proof eq 6} in conjunction with the pointwise estimate~\eqref{eq: sec: higher order G, proof eq 10} we conclude the result of the Corollary.
\end{proof}

\begin{remark}\label{rem: sec: higher order G estimate, rem 1}
For an example of the computation \eqref{eq: sec: higher order G, proof eq 5.2} note that at order $j=3$ we obtain  
\begin{equation}\label{eq: sec: higher order G, proof eq 4}
\begin{aligned}
\frac{(1-\mu)^3}{r}\left(\partial_r^2\mathcal{G}\psi\right)^2 &	\leq \frac{C(M,\Lambda)}{r}\Bigg(  (1-\mu)\left(g^{\bar{t}\bar{t}}\partial_t^2\mathcal{G}\psi+2g^{r\bar{t}}\partial_t\partial_r\mathcal{G}\psi+\slashed{\Delta}\mathcal{G}\psi\right)^2+(1-\mu)\frac{1}{(1-\mu)}(\partial_{\bar{t}}\partial_r\psi)^2\\
&	\quad\quad\quad\quad\quad\quad+(1-\mu)E_1^2(\partial_{\bar{t}}\psi)^2+(1-\mu)E_2^2(\partial_r\psi)^2+(1-\mu)(\mathcal{G}F)^2\Bigg)\\
& \leq \frac{C(M,\Lambda)}{r}\Bigg(  (1-\mu)\left(g^{\bar{t}\bar{t}}\partial_t^2\mathcal{G}\psi+2g^{r\bar{t}}\partial_t\partial_r\mathcal{G}\psi+\slashed{\Delta}\mathcal{G}\psi\right)^2+(\partial_{\bar{t}}\partial_r\psi)^2\\
&	\quad\quad\quad\quad\quad\quad+(1-\mu)E_1^2(r)(\partial_{\bar{t}}\psi)^2+(1-\mu)E_2^2(r)(\partial_r\psi)^2+(1-\mu)(\mathcal{G}F)^2\Bigg)\\
&\leq C(M,\Lambda)\Bigg(  \frac{1-\mu}{r^5}\left(\partial_t^2\mathcal{G}\psi\right)^2+\frac{1-\mu}{r}\left(\partial_t\partial_r\mathcal{G}\psi\right)^2+\frac{1-\mu}{r}\left(\slashed{\Delta}\mathcal{G}\psi\right)^2+\frac{1}{r^3(1-\mu)}(\partial_{\bar{t}}\mathcal{G}\psi)^2\\
&	\quad\quad\quad\quad\quad\quad+\frac{1-\mu}{r^3}(\partial_{\bar{t}}\psi)^2+\frac{1}{r^7}(\partial_r\psi)^2+\frac{1-\mu}{r}\left(\mathcal{G}F\right)^2\Bigg),
\end{aligned}
\end{equation}
where the constant $C(M,\Lambda)$ does not degenerate in the limit $\Lambda\rightarrow 0$.
\end{remark}

\section{The Schwarzschild case \texorpdfstring{$\Lambda=0$}{L}}\label{sec: thm in Schwarzschild}

We can also prove the equivalent of Theorem~\ref{thm: absorption of small terms} on the asymptotically flat Schwarzschild exterior. This result had been obtained previously by~\cite{gustav}. For completeness, we give a treatment here in our set up. 

We study the inhomogeneous wave equation 
\begin{equation}\label{eq: schwarzschild equation}
    \Box_{g_M} \psi =F.
\end{equation}

Before stating the theorem let us introduce some preliminary notions, specifically for the Schwarzschild case. 

First, we define regular hyperboloidal coordinates $(\bar{t},r,\theta,\phi)$ on which the metric takes the form 
\begin{equation}
    g_{M}=-\left(1-\frac{2M}{r}\right)(d\bar{t})^2-2\left(1-\frac{3M}{r}\right)\sqrt{1+\frac{6M}{r}} d\bar{t}dr +\frac{27M^2}{r^2} (dr)^2+r^2 d\sigma_{\mathbb{S}^2}.
\end{equation}
These correspond to the regular hyperboloidal coordinates of Section~\ref{subsec: regular hyperboloidal corodinates}, when we take $\Lambda=0$. Moreover, we can define the Schwarzschild coordinates $(t,r,\theta,\phi)$ similarly to the Schwarzschild--de~Sitter coordinates, see Section~\ref{sec: preliminaries}. 

We may attach null infinity 
\begin{equation}\label{eq: future null infinity}
    \mathcal{I}^+=\{(\bar{t},r=+ \infty,\theta,\phi)\in \mathbb{R}\times\{+\infty\}\times\mathbb{S}^2\},
\end{equation}
as a boundary in the obvious way. Formally, we consider its normal to be $n_{\mathcal{I}^+}=\frac{\partial}{\partial \bar{t}}$ and note 
\begin{equation}\label{eq: positive energy flux on null infinity}
J^T_\mu[\psi]n_{\mathcal{I}^+}^\mu=\left(\partial_{\bar{t}}\psi\right)^2.   
\end{equation}
 
Let
\begin{equation}
    \tilde{\Sigma}=\{\bar{t}=0\}.
\end{equation}
Then the leaves of the foliation
\begin{equation}\label{eq: spacelike hypersurface S}
\phi_\tau(\tilde{\Sigma})=\{\bar{t}=\tau\}    
\end{equation}
connect the event horizon $\mathcal{H}^+$ with future null infinity $\mathcal{I}^+$. Also, we define the spacetime domain
\begin{equation}\label{eq: spacetime domain S}
    \tilde{D}(\tau_1,\tau_2)=\{\tau_1\leq \bar{t}\leq \tau_2\}.
\end{equation}
The reader familiar with the Penrose diagram may want to consult Figure~\ref{fig: penrose schwarschild}. 
\begin{figure}[htbp]
    \begin{minipage}{.45\textwidth}
        \centering
        \includegraphics[scale=0.8]{the_penrose_diagram_S.jpg}
        \caption{The foliation of the Schwarzschild exterior}
        \label{fig: penrose schwarschild}
    \end{minipage}
\end{figure}

Note that the volume form of \eqref{eq: spacetime domain S} is
\begin{equation}
    dg=r^2\sin{\theta}d\bar{t}d\theta d\phi
\end{equation}
and the volume form of \eqref{eq: spacelike hypersurface S} is
\begin{equation}
    dg_{\{\bar{t}=c\}}=3\sqrt{3}Mr\sin{\theta}dr d\theta d\phi.
\end{equation}
and the normal of the $\{\bar{t}=c\}$ hypersurface is 
\begin{equation}
	n_{\phi_{\tau}(\Sigma)}=\frac{\sqrt{27}M}{r}\frac{\partial}{\partial \bar{t}}+r\frac{\xi(r)}{\sqrt{27}M}\frac{\partial}{\partial r},
\end{equation}
where $\xi(r)=\left(1-\frac{3M}{r}\right)\sqrt{1+\frac{6M}{r}}$.

The vector field $\mathcal{G}$ here takes the form 
\begin{equation}\label{eq: new vector field Schwarzschild}
    \mathcal{G}=r\sqrt{1-\frac{2M}{r}}\frac{\partial}{\partial r}=\frac{r}{\sqrt{1-\mu}}\frac{\partial}{\partial r^\star}+\frac{r}{\sqrt{1-\mu}}\left(1-\frac{3M}{r}\right)\sqrt{1+\frac{6M}{r}}\frac{\partial}{\partial t}
\end{equation}
and the energy density, 
\begin{equation}\label{eq: new energy, S}
\begin{aligned}
\tilde{\mathcal{E}}(\mathcal{G}\psi,\psi)\:\dot{=}\: \mathbb{T}(\partial_{\bar{t}},n)[\mathcal{G}\psi]+\mathbb{T}(N,n)[\psi],
\end{aligned}
\end{equation}
is now defined with $\mathcal{G}$ as in \eqref{eq: new vector field Schwarzschild}, where $N$ is a time translation invariant strictly timelike vector field, see the Lecture notes~\cite{DR5} that away from the horizon $\mathcal{H}^+$ is equal to $\partial_{\bar{t}}$. We also obtain 
\begin{equation}
    \begin{aligned}
        \tilde{\mathcal{E}}(\mathcal{G}\psi,\psi) &\sim_{M} \frac{1}{r}(\partial_{\bar{t}}\mathcal{G}\psi)^2+r\left(1-\frac{2M}{r}\right)(\partial_r\mathcal{G}\psi)^2+r|\slashed{\nabla}\mathcal{G}\psi|^2+\frac{1}{r}(\partial_{\bar{t}}\psi)^2+r(\partial_r\psi)^2+r|\slashed{\nabla}\psi|^2 \\
        &\sim_{M} r\left(\frac{1}{r^2}(\partial_{\bar{t}}\mathcal{G}\psi)^2+\left(1-\frac{2M}{r}\right)(\partial_r\mathcal{G}\psi)^2+|\slashed{\nabla}\mathcal{G}\psi|^2+\frac{1}{r^2}(\partial_{\bar{t}}\psi)^2+(\partial_r\psi)^2+|\slashed{\nabla}\psi|^2\right).
    \end{aligned}
\end{equation}

\begin{remark}
Note that the vector field~\eqref{eq: new vector field Schwarzschild} coincides with the vector field already described by Holzegel--Kauffman~\cite{gustav}.
\end{remark}

Note that the divergence theorem of equation \eqref{eq: divergence Theorem}, holds with $\mathcal{I}^+$ in the place of $\bar{\mathcal{\mathcal{H}}}^+$. The energy flux at null infinity $\mathcal{I}^+$, and the event horizon $\mathcal{H}^+$, are nonnegative and we will may drop them from our estimates, see equation \eqref{eq: positive energy flux on null infinity}.

The Morawetz estimate we will need was proved in~\cite{DR1} by Dafermos--Rodnianski. 
\begin{theorem}[Theorem 1.1 in~\cite{DR1}]\label{thm: morawetz in schwarzschild}
Let $\psi$ satisfy equation~\eqref{eq: schwarzschild equation} on a fixed Schwarzschild background. Then, for initial data vanishing at infinity, for all $0<\eta\leq 1$ there exists a constant $C=C(M,\eta)$ such that
\begin{equation}\label{eq: morawetz in schwarzschild}
    \begin{aligned}
        \int\int_{\tilde{D}(0,\tau)}r^{-1}\left(1-\frac{3M}{r}\right)^2\left(r^{-\eta}(\partial_{\bar{t}}\psi)^2+\left|\slashed{\nabla}\psi\right|^2\right)&+r^{-1-\eta}\left(\partial_{r}\psi\right)^2+r^{-3-\eta}\psi^2\\
        &   \leq C\int_{\tilde{\Sigma}} J^N_\mu[\psi]n^\mu+C\int\int_{\tilde{D}(0,\tau)}|\partial_{\bar{t}}\psi\cdot F|+r^{1+\eta}|F|^2
    \end{aligned}
\end{equation}
and 
\begin{equation}\label{eq: boundedness in schwarzschild}
	\begin{aligned}
		\int_{\phi_\tau(\tilde{\Sigma})}J^N_\mu[\psi]n^\mu\leq C \int_{\tilde{\Sigma}}J^N_\mu[\psi]n^\mu+C\int\int_{\tilde{D}(0,\tau)}|\partial_{\bar{t}}\psi \cdot F|+r^{1+\eta}|F|^2
	\end{aligned}
\end{equation}
where $N$ is the redshift vector field of Dafermos--Rodnianski, see the Lecture notes~\cite{DR5}. 
\end{theorem}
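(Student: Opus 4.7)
The plan is to prove Theorem \ref{thm: morawetz in schwarzschild} by the classical vector field method, combining a Morawetz current from a radial multiplier, the redshift current of \cite{DR5}, and a small multiple of $J^T_\mu[\psi]$ with $T=\partial_{\bar t}$. I would work primarily in $(t,r^\star,\theta,\phi)$ coordinates and apply the divergence theorem on $\tilde D(0,\tau)$, discarding the non-negative boundary fluxes on $\mathcal{H}^+$ and $\mathcal{I}^+$ (the latter in view of~\eqref{eq: positive energy flux on null infinity}).

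First I would introduce a Morawetz multiplier of the form $X=f(r^\star)\partial_{r^\star}$, modified by a zeroth-order correction $\tfrac{1}{2}w(r)\psi\nabla_\mu\psi - \tfrac{1}{4}\psi^2\nabla_\mu w$, so that the divergence of the modified current takes the schematic form
\[
\nabla^\mu J^{X,w}_\mu[\psi] = A(r)(\partial_{r^\star}\psi)^2 + B(r)(\partial_{\bar t}\psi)^2 + C(r)|\slashed\nabla\psi|^2 + D(r)\psi^2 + X\psi\cdot F,
\]
with the coefficient $A(r)\gtrsim r^{-1-\eta}$ on $\{r\geq r_++\delta\}$, the coefficient $C(r)\gtrsim r^{-1}(1-3M/r)^2$, and $D(r)\gtrsim r^{-3-\eta}$, while $B$ may degenerate at trapping. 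The function $f$ has to vanish at $r=3M$ to accommodate trapping and behave like $1-r^{-\eta}$ at infinity in order to produce the $r^{-1-\eta}$ weights in the bulk; the choice of $w$ generates the $r^{-3-\eta}\psi^2$ bulk via a Hardy-type identity. Near the horizon, where $X$ alone fails to produce a positive-definite bulk, I would sum in a small multiple of the redshift current $J^N_\mu[\psi]$ from~\cite{DR5}, whose divergence is positive definite in a collar $\{r_+\leq r\leq r_++\delta\}$ of $\mathcal{H}^+$. Applying the divergence theorem and absorbing the cross term $X\psi\cdot F$ via Cauchy--Schwarz against $A(r)(\partial_{r^\star}\psi)^2$ yields the error $r^{1+\eta}|F|^2$ as in~\eqref{eq: morawetz in schwarzschild}. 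The boundedness estimate~\eqref{eq: boundedness in schwarzschild} follows by applying the divergence theorem separately to $J^N_\mu[\psi]$, where the bulk divergence is concentrated in $\{r_+\leq r\leq r_++\delta\}$ and is absorbed using the Morawetz bulk just established together with a Grönwall argument on $\tau$.

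The main obstacle is the construction of the multiplier $f$ meeting three competing requirements simultaneously: non-degeneracy of the coefficient of $(\partial_{r^\star}\psi)^2$ despite trapping at $r=3M$, the correct $r^{-1-\eta}$ decay weights at spatial infinity, and compatibility with the redshift region at $\mathcal{H}^+$. Since no purely radial $L^2$-multiplier can yield non-degeneracy at $r=3M$ for all frequencies, in view of the obstructions of~\cite{ralston,sbierski}, the argument of~\cite{DR1} traditionally uses a decomposition into spherical harmonics so that the multiplier coefficient incorporates the angular eigenvalue $\ell(\ell+1)$; a physical-space analogue, as in~\cite{DR7}, is also available. Note that since the vector field $\mathcal{G}$ used in the present paper plays no role in establishing~\eqref{eq: morawetz in schwarzschild}, this argument is entirely parallel to the de~Sitter case (Theorem~\ref{thm: integrated local energy decay for psi}) but requires the additional asymptotic analysis dictated by the non-compact end.
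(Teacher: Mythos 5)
This statement is not proved in the paper at all: it is imported verbatim as Theorem~1.1 of~\cite{DR1} and used as a black box, so there is no internal proof to compare your proposal against. Your sketch is a faithful reconstruction of the standard argument that \cite{DR1} actually carries out -- a radial Morawetz multiplier $f\partial_{r^\star}$ with a zeroth-order correction $w$, degeneration of the angular and $\partial_{\bar t}$ coefficients at $r=3M$ while the $(\partial_{r^\star}\psi)^2$ coefficient (essentially $f'$) survives, a spherical-harmonic decomposition to choose $f$ adapted to the angular eigenvalue, the redshift current of~\cite{DR5} in a collar of $\mathcal{H}^+$, and Cauchy--Schwarz on $X\psi\cdot F$ against the $r^{-1-\eta}(\partial_{r^\star}\psi)^2$ bulk to produce the $r^{1+\eta}|F|^2$ error. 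Two small points you should make explicit if you were to write this out: first, the boundary fluxes of the modified current $J^{X,w}_\mu$ on $\tilde\Sigma$, $\phi_\tau(\tilde\Sigma)$, $\mathcal{H}^+$ and $\mathcal{I}^+$ are \emph{not} sign-definite (only those of causal multipliers are), so they must be dominated by a large multiple of the $N$- and $T$-energy identities before anything can be discarded; second, the hypothesis of data vanishing at infinity is what legitimises the Hardy inequality producing the $r^{-3-\eta}\psi^2$ term without boundary contributions at $\mathcal{I}^+$. Neither issue is a gap in the idea, only in the bookkeeping.
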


Recalling Remark~\ref{rem: keeping track of the powers of r}, we repeat similar arguments to the ones used in the previous Section \ref{sec: proof of main theorem 1}, up to estimate \eqref{eq: last inequality before the proof of theorem 1}. Note, however, that we use the commutation vector field \eqref{eq: new vector field Schwarzschild} and the energy density \eqref{eq: new energy} with $\Lambda=0$. Now, adding the Morawetz estimate of Theorem \ref{thm: morawetz in schwarzschild}, noting the weights and absorbing relevant terms, we conclude the following.
\begin{theorem}\label{thm: Schwarzschild}
There exists a constant $C(M,\eta)>0$, such that if $\psi$ satisfies
\begin{equation*}
    \Box_{g_M}\psi=F
\end{equation*}
on the spacetime domain $\tilde{D}(0,\tau)$, the following holds
\begin{equation}
    \begin{aligned}
        \int_{\phi_\tau(\tilde{\Sigma})}\tilde{\mathcal{E}}(\mathcal{G}\psi,\psi)+\int\int_{\tilde{D}(0,\tau)} &  \frac{1}{r(1-\mu)}(\partial_{\bar{t}}\mathcal{G}\psi)^2+\frac{1-\mu}{r}(\partial_r\mathcal{G}\psi)^2+\frac{1}{r}|\slashed{\nabla}\mathcal{G}\psi|^2+\frac{1}{r^{1+\eta}}\left((\partial_{\bar{t}}\psi)^2+(\partial_r\psi)^2+|\slashed{\nabla}\psi|^2\right)\\
        &   \quad \leq C\int_{\tilde{\Sigma}}\tilde{\mathcal{E}}(\mathcal{G}\psi,\psi)+C\int\int_{\tilde{D}(0,\tau)} r|1-\mu|\left|\mathcal{G}F\right|^2+r^{1+\eta}|F|^2
    \end{aligned}
\end{equation}
for any $\eta\in(0,1]$.
\end{theorem}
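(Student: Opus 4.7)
The plan is to adapt the proof of Theorem~\ref{thm: absorption of small terms} verbatim to the asymptotically flat setting, using the weighted Morawetz estimate of Theorem~\ref{thm: morawetz in schwarzschild} in place of Theorem~\ref{thm: integrated local energy decay for psi} and carefully tracking the powers of $r$ at infinity. As flagged in Remark~\ref{rem: keeping track of the powers of r}, the constants in the $\Lambda>0$ argument were already arranged to be non-degenerate as $\Lambda\to 0$, so the mechanism transfers; the only new work is the weight bookkeeping near $\mathcal{I}^+$.

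First, I would specialize Proposition~\ref{prop: commuted equation for G psi} to $\Lambda=0$: the identities~\eqref{eq: why the coordinates are useful, 1}--\eqref{eq: why the coordinates are useful, 3} are purely algebraic, so with $F\neq 0$ one obtains
\begin{equation*}
\Box_{g_M}\mathcal{G}\psi = \frac{2}{(1-\mu)r\sqrt{1+\tfrac{6M}{r}}}\partial_{\bar{t}}\mathcal{G}\psi + E_1^0(r)\partial_{\bar{t}}\psi + E_2^0(r)\partial_r\psi + \mathcal{G}F,
\end{equation*}
with $E_1^0, E_2^0$ the $\Lambda=0$ limits of the coefficients, decaying polynomially in $r$. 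Applying the $\partial_{\bar{t}}$-multiplier identity to $\mathcal{G}\psi$ as in Proposition~\ref{prop: estimate for Psi without all the derivatives}, and dropping the nonnegative fluxes on $\mathcal{H}^+$ and $\mathcal{I}^+$ (using~\eqref{eq: positive energy flux on null infinity}), I would absorb $\partial_{\bar{t}}\mathcal{G}\psi\cdot\mathcal{G}F$ via Young's inequality as
\begin{equation*}
|\partial_{\bar{t}}\mathcal{G}\psi\cdot\mathcal{G}F|\leq \epsilon\,\tfrac{1}{r(1-\mu)}(\partial_{\bar{t}}\mathcal{G}\psi)^2+C_\epsilon\, r(1-\mu)(\mathcal{G}F)^2,
\end{equation*}
which produces the inhomogeneity weight $r|1-\mu||\mathcal{G}F|^2$ stated in the theorem.

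Next, I would recover the remaining derivatives of $\mathcal{G}\psi$ on the left by the two auxiliary estimates of Section~\ref{sec: proof of main theorem 1}: Lemma~\ref{lem: control of Psi prime} (applied to the $\Lambda=0$ form of the $\mathcal{G}\psi$ equation, multiplied by $\frac{1}{r}\mathcal{G}\psi$ and integrated by parts) produces $\frac{1-\mu}{r}(\partial_r\mathcal{G}\psi)^2$ and $\frac{1}{r}|\slashed{\nabla}\mathcal{G}\psi|^2$ in the bulk; Lemma~\ref{lem: the Psi dot times psi dot} (the Poincar\'e-type argument $\tfrac{\partial}{\partial r}(r-3M)=1$) trades the non-degenerate $\frac{1}{r(1-\mu)}(\partial_{\bar{t}}\mathcal{G}\psi)^2$ against a Morawetz-degenerate $(r-3M)^2 r^{-4}(\partial_{\bar{t}}\psi)^2$ error. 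All the error terms generated by these manipulations, including those arising from $E_1^0, E_2^0$ and from $|\partial_r(1-\mu)|$ boundary terms, have $r$-decay no slower than $r^{-2}(\partial_{\bar{t}/r}\psi)^2$, hence are bounded by the $r^{-1-\eta}$-weighted bulk of Morawetz.

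Finally, I would add in Theorem~\ref{thm: morawetz in schwarzschild} (and its boundedness companion~\eqref{eq: boundedness in schwarzschild}) multiplied by a large constant, splitting the inhomogeneous contribution by Young as $|\partial_{\bar{t}}\psi\cdot F|\leq \epsilon r^{-1-\eta}(\partial_{\bar{t}}\psi)^2+C_\epsilon r^{1+\eta}|F|^2$. The first summand is absorbed into the Morawetz bulk on the left, producing the weight $r^{1+\eta}|F|^2$ appearing in the right-hand side of the statement, and the $\eta$-dependence of $C$ is inherited from~\eqref{eq: morawetz in schwarzschild}. The main obstacle, and the only genuinely new ingredient relative to Section~\ref{sec: proof of main theorem 2}, is verifying that at large $r$ the natural weights of the $\mathcal{G}$-error terms $E_i^0\partial_{\bar{t}/r}\psi\cdot\partial_{\bar{t}}\mathcal{G}\psi$, which behave like $r^{-1/2}$, are dominated by the $r^{-1-\eta}$ bulk of Theorem~\ref{thm: morawetz in schwarzschild}; this is precisely where the restriction $\eta\in(0,1]$ and the $\eta$-dependence of the constant are used, and it matches the weight design in the statement of $\tilde{\mathcal{E}}(\mathcal{G}\psi,\psi)$ in~\eqref{eq: new energy, S}.
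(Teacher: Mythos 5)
Your proposal follows essentially the same route as the paper's proof: the $\Lambda=0$ specialization of the commutation identity of Proposition~\ref{prop: commuted equation for G psi}, the $\partial_{\bar{t}}$-multiplier estimate with the horizon and null-infinity fluxes dropped, the recovery of the remaining $\mathcal{G}\psi$ derivatives via Lemmata~\ref{lem: control of Psi prime} and~\ref{lem: the Psi dot times psi dot}, and the addition of the weighted Morawetz estimate of Theorem~\ref{thm: morawetz in schwarzschild} with Young's inequalities producing the weights $r|1-\mu||\mathcal{G}F|^2$ and $r^{1+\eta}|F|^2$. You also correctly isolate the only genuinely new ingredient, namely the large-$r$ weight bookkeeping for the $E_1^0,E_2^0$ error terms (handled through the $|\partial_r(1-\mu)|$ integration by parts of Lemma~\ref{lem: the small order terms in the main estimate} so that everything lands at $r^{-2}$ or better, absorbable by the $r^{-1-\eta}$ Morawetz bulk for $\eta\in(0,1]$), which is exactly where the paper's proof also places the emphasis.
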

\begin{proof}
By repeating the arguments of Section \ref{sec: proof of main theorem 1}, and by keeping track of the weights in $r$, we conclude the following estimate
\begin{equation}\label{eq: proof thm: Schwarzschild, eq 1}
	\begin{aligned}
		&	\int_{\phi_\tau(\tilde{\Sigma})}J^{\partial_{\bar{t}}}_\mu[\mathcal{G}\psi]n^\mu+\int\int_{\tilde{D}(0,\tau)}   \frac{1}{r(1-\mu)}(\partial_{\bar{t}}\mathcal{G}\psi)^2\\
		&	\quad \lesssim \int_{\tilde{\Sigma}}J^{\partial_{\bar{t}}}_\mu[\mathcal{G}\psi]n^\mu+\int\int_{\tilde{D}(0,\tau)}\partial_{\bar{t}}\mathcal{G}\psi \cdot \mathcal{G}F+\partial_{\bar{t}}\mathcal{G}\psi\left(E_2(r)\partial_t\psi+E_1(r)\partial_r\psi\right),
	\end{aligned}
\end{equation}
where for $E_1(r),E_2(r)$ see Proposition \ref{eq: equation obeyed by Psi}. Then, to obtain all of the derivatives of $\mathcal{G}\psi$ on the left hand side of~\eqref{eq: proof thm: Schwarzschild, eq 1} we use Lemma~\ref{lem: control of Psi prime}, by keeping the weights in $r$ and by an additional $F$ error term to obtain 
\begin{equation}\label{eq: proof thm: Schwarzschild, eq 1.1}
\begin{aligned}
&	\int_{\phi_\tau(\tilde{\Sigma})}J^{\partial_{\bar{t}}}_\mu[\mathcal{G}\psi]n^\mu+\int\int_{\tilde{D}(0,\tau)}   \frac{1}{r(1-\mu)}(\partial_{\bar{t}}\mathcal{G}\psi)^2+\frac{1-\mu}{r}\left(\partial_r\mathcal{G}\psi\right)^2+\frac{1}{r}|\slashed{\nabla}\mathcal{G}\psi|^2 \\
&	\quad \lesssim \int_{\tilde{\Sigma}}J^{\partial_{\bar{t}}}_\mu[\mathcal{G}\psi]n^\mu\\
&	\quad\quad + \int\int_{\tilde{D}(\tau_1,\tau_2)} r^{-2}(\partial_r \psi)^2 + r^{-2} (\partial_t\psi)^2 +\int_{\phi_{\tau_2}(\tilde{\Sigma})} r(\partial_r\psi)^2 +\int_{\phi_{\tau_1}(\tilde{\Sigma})} r(\partial_r\psi)^2 \\
&\quad\quad +\int\int_{\tilde{D}(0,\tau)}\partial_{\bar{t}}\mathcal{G}\psi \cdot \mathcal{G}F+\partial_{\bar{t}}\mathcal{G}\psi\left(E_2(r)\partial_t\psi+E_1(r)\partial_r\psi\right)+\left|\frac{1}{r}\mathcal{G}\psi\cdot F\right|.
\end{aligned}
\end{equation}
To control all the first order terms on the right hand side of~\eqref{eq: proof thm: Schwarzschild, eq 1.1} we use the boundedness estimate and the Morawetz estimate of Theorem~\ref{thm: morawetz in schwarzschild} and obtain  
\begin{equation}\label{eq: proof thm: Schwarzschild, eq 2}
\begin{aligned}
&	\int_{\phi_\tau(\tilde{\Sigma})}\tilde{\mathcal{E}}(\mathcal{G}\psi,\psi)\\
&	+\int\int_{\tilde{D}(0,\tau)}  \frac{1}{r(1-\mu)}(\partial_{\bar{t}}\mathcal{G}\psi)^2+\frac{1-\mu}{r}(\partial_r\mathcal{G}\psi)^2+\frac{1}{r}|\slashed{\nabla}\mathcal{G}\psi|^2+\frac{1}{r^{1+\eta}}\left((\partial_{\bar{t}}\psi)^2+(\partial_r\psi)^2+|\slashed{\nabla}\psi|^2\right)\\
&	\quad \lesssim \int_{\tilde{\Sigma}}\tilde{\mathcal{E}}(\mathcal{G}\psi,\psi)+\int\int_{\tilde{D}(0,\tau)}|\partial_{\bar{t}}\mathcal{G}\psi \cdot \mathcal{G}F|+|\partial_{\bar{t}}\psi \cdot F|+r^{1+\eta}|F|^2,
\end{aligned}
\end{equation}
for any $\eta\in(0,1]$, where we used Young's inequalities in the lower order terms on the right hand side of \eqref{eq: proof thm: Schwarzschild, eq 1}. Note that no degeneration is present on the photon sphere $r=3M$ since we repeated the Poincare type inequality of Lemma~\ref{lem: the Psi dot times psi dot}. Finally, by using the appropriate Young's inequalities on the right hand side of \eqref{eq: proof thm: Schwarzschild, eq 2}, we conclude the result. 
\end{proof}

Moreover, we have the Corollary.

\begin{corollary}\label{cor: Schwarzschild}
Under the assumptions of the above theorem, and the additional assumptions that, for the vector field $a=a^j\partial_j$ and the scalar function $b$ the following
\begin{equation}
a^{\bar{t}},a^r,g^{\theta\theta}(a^\theta)^2+g^{\phi\phi}(a^\phi)^2,b(r)    
\end{equation}
are smooth functions on $\tilde{D}(0,\infty)$, with 
\begin{equation}
    a^{\bar{t}}(r,\cdot),a^r(r,\cdot),\left(g^{\theta\theta}(a^\theta)^2+g^{\phi\phi}(a^\phi)^2\right)(r,\cdot),|[\mathcal{G}, a]^{\bar{t}}(r,\cdot)|\leq\frac{C}{r^2}   
\end{equation}
and 
\begin{equation}
    b(r,\cdot), |\mathcal{G}b|(r,\cdot)\leq\frac{C}{r^3},
\end{equation}
then we have the following. 

There exists a constant $C=C(M,\eta)$ such that for solutions of
\begin{equation}\label{eq: small error equation S}
    \Box_{g_{M}}\psi=\epsilon a^j\partial_j\psi+\epsilon b(r)\psi,
\end{equation}
and for $\epsilon$ sufficiently small, we have 
\begin{equation}\label{eq: result of Corollary 2}
	\begin{aligned}
	    &	\int_{\phi_\tau(\tilde{\Sigma})}\tilde{\mathcal{E}}(\mathcal{G}\psi,\psi)\\
	    &	\quad +\int\int_{\tilde{D}(0,\tau)}  \frac{1}{r(1-\mu)}(\partial_{\bar{t}}\mathcal{G}\psi)^2+\frac{1-\mu}{r}(\partial_r\mathcal{G}\psi)^2+\frac{1}{r}|\slashed{\nabla}\mathcal{G}\psi|^2+\frac{1}{r^{1+\eta}}\left((\partial_{\bar{t}}\psi)^2+(\partial_r\psi)^2+|\slashed{\nabla}\psi|^2\right)\\
	    &	\leq C\int_{\tilde{\Sigma}}\tilde{\mathcal{E}}(\mathcal{G}\psi,\psi),
	\end{aligned}
\end{equation}
and
\begin{equation}\label{eq: result of Corollary 2, 2}
	\begin{aligned}
		&	\int_{\phi_\tau(\tilde{\Sigma})}\tilde{\mathcal{E}}(\mathcal{G}\psi,\psi)\\
		&	\quad +\int_0^\tau d\tau\int_{\phi_\tau(\tilde{\Sigma})}\frac{1}{(1-\mu)}(\partial_{\bar{t}}\mathcal{G}\psi)^2+(1-\mu)(\partial_r\mathcal{G}\psi)^2+|\slashed{\nabla}\mathcal{G}\psi|^2+\frac{1}{r^{\eta}}\left((\partial_{\bar{t}}\psi)^2+(\partial_r\psi)^2+|\slashed{\nabla}\psi|^2\right)\\
		&	\leq C\int_{\tilde{\Sigma}}\tilde{\mathcal{E}}(\mathcal{G}\psi,\psi),
	\end{aligned}
\end{equation}
since by the relation of the volume forms, see Section \ref{subsec: volume forms}, we have
\begin{equation}\label{eq: result of Corollary 2, 3}
    \int_{0}^\tau d\tau\int_{\phi_\tau(\Sigma)} \left(\cdot\right)\sim\int\int_{D(0,\tau)}\frac{1}{r}\left(\cdot\right),
\end{equation}
where the constants in the above similarity depend only on the black hole mass $M$.
\end{corollary}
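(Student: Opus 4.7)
The plan is to apply Theorem~\ref{thm: Schwarzschild} with the inhomogeneity $F = \epsilon\,a^j\partial_j\psi + \epsilon\,b\psi$ and to show that for $\epsilon$ sufficiently small the resulting spacetime error terms on the right hand side can be absorbed into the bulk on the left. First I would expand
\begin{equation*}
\mathcal{G}F \;=\; \epsilon\,(\mathcal{G}a^j)\partial_j\psi \;+\; \epsilon\,a^j\partial_j\mathcal{G}\psi \;+\; \epsilon\,a^j[\mathcal{G},\partial_j]\psi \;+\; \epsilon\,(\mathcal{G}b)\psi \;+\; \epsilon\,b\,\mathcal{G}\psi,
\end{equation*}
noting that $[\mathcal{G},\partial_j]\psi = -\partial_j(r\sqrt{1-\mu})\partial_r\psi$ is a smooth multiple of $\partial_r\psi$ with coefficient depending only on $r$ and supported at $j=r$, so that contribution is already controlled by the Morawetz bulk of Theorem~\ref{thm: Schwarzschild}.

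Substituting this expansion into the error integrand $r(1-\mu)|\mathcal{G}F|^2$ and using the hypothesized decay $|a^{\bar{t}}|,|a^r|,(g^{\theta\theta}(a^\theta)^2+g^{\phi\phi}(a^\phi)^2)^{1/2},|[\mathcal{G},a]^{\bar{t}}|\lesssim r^{-2}$ and $|b|,|\mathcal{G}b|\lesssim r^{-3}$, the dominant contribution is $\epsilon^2\,r(1-\mu)|a^j|^2|\partial_j\mathcal{G}\psi|^2$, pointwise controlled by $\epsilon^2\,r^{-3}(1-\mu)(\partial\mathcal{G}\psi)^2$, which is $O(r^{-2})$ smaller than the bulk densities $\tfrac{1}{r(1-\mu)}(\partial_{\bar{t}}\mathcal{G}\psi)^2 + \tfrac{1-\mu}{r}(\partial_r\mathcal{G}\psi)^2 + \tfrac{1}{r}|\slashed{\nabla}\mathcal{G}\psi|^2$ on the left hand side of Theorem~\ref{thm: Schwarzschild}. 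The analogous computation for $r^{1+\eta}|F|^2$ yields a pointwise bound $\epsilon^2\,r^{-3+\eta}|\partial\psi|^2 + \epsilon^2\,r^{-5+\eta}\psi^2$, both strictly dominated by the Morawetz bulk $r^{-1-\eta}|\partial\psi|^2 + r^{-3-\eta}\psi^2$ underlying Theorem~\ref{thm: Schwarzschild}. Choosing $\epsilon$ small compared to the fixed implicit constants then absorbs every error term into the left hand side, yielding \eqref{eq: result of Corollary 2}. The second estimate \eqref{eq: result of Corollary 2, 2} follows immediately by invoking the coarea relation \eqref{eq: result of Corollary 2, 3}, which converts the spacetime bulk $\int\int \tfrac{1}{r}(\cdot)\,dg$ on the left of \eqref{eq: result of Corollary 2} into the foliated integral $\int_0^\tau d\tau\int_{\phi_\tau(\tilde{\Sigma})}(\cdot)$ displayed in \eqref{eq: result of Corollary 2, 2}, after multiplying each density by $r$.

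The main obstacle lies in absorbing the zeroth-order contribution coming from $b\psi$, which produces a term $\epsilon^2\,r^{-5}\psi^2$ in the error, since the bulk densities displayed on the left hand side of Theorem~\ref{thm: Schwarzschild} do not visibly contain a $\psi^2$ weight. This is handled by exploiting the Hardy-type inequality at infinity available for $\psi$ with finite energy vanishing at infinity (or equivalently by tracking the $r^{-3-\eta}\psi^2$ bulk embedded in the Dafermos--Rodnianski estimate Theorem~\ref{thm: morawetz in schwarzschild} that drives the proof of Theorem~\ref{thm: Schwarzschild}); the hypothesis $|b|,|\mathcal{G}b|\leq C/r^3$ is precisely strong enough, in combination with $\eta\in(0,1]$, to guarantee that this absorption goes through. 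Careful bookkeeping of $r$-weights, as in Remark~\ref{rem: keeping track of the powers of r}, is what makes the entire scheme close.
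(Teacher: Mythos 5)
Your proposal is correct and is essentially the argument the paper intends: the corollary is stated without a written proof, the implicit argument being exactly yours --- substitute $F=\epsilon a^j\partial_j\psi+\epsilon b\psi$ into Theorem~\ref{thm: Schwarzschild}, expand $\mathcal{G}F$ via the commutator $[\mathcal{G},a]$, use the decay hypotheses on $a$, $b$ and their $\mathcal{G}$-derivatives to absorb the error terms for $\epsilon$ sufficiently small (with the zeroth-order $b\psi$ contribution absorbed into the $r^{-3-\eta}\psi^2$ Hardy-type bulk of the Dafermos--Rodnianski Morawetz estimate, which is precisely why the paper remarks that the absence of $\psi_\infty$ permits the $b(r)\psi$ term here), and pass to \eqref{eq: result of Corollary 2, 2} by the coarea relation \eqref{eq: result of Corollary 2, 3} --- all mirroring the explicitly written proof of Corollary~\ref{cor: absorption of small terms} in the Schwarzschild--de~Sitter case. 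The one point to be explicit about is that absorbing the angular first-order terms against the $r^{-1-\eta}|\slashed{\nabla}\psi|^2$ and $r^{-1}|\slashed{\nabla}\mathcal{G}\psi|^2$ bulks requires reading the hypothesis on $g^{\theta\theta}(a^\theta)^2+g^{\phi\phi}(a^\phi)^2$ as a bound on the geometric norm of the angular part of $a$, so that $|a^A\partial_A\psi|^2\lesssim r^{-2}|\slashed{\nabla}\psi|^2$ uniformly in $r$; under that (intended) reading your $r$-weight bookkeeping closes.
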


\begin{remark}
The bulk term, on the left hand side of inequality \eqref{eq: result of Corollary 2, 2}, and the hypersurface term, on the right hand side of inequality \eqref{eq: result of Corollary 2, 2}, have different weights in $r$. Specifically, some of the terms of the right hand side, have larger weights in $r$. It is because of this reason that one does not obtain exponential decay for $\psi$, on a Schwarzschild exterior. 
\end{remark}

\begin{remark}
The reason we can include the zero'th order term, with component $b(r)$, in equation \eqref{eq: small error equation S}, as opposed to equation~\eqref{eq: small error equation SdS} on the Schwarzschild--de~Sitter case, is the lack of the $\psi_\infty$ term in~\eqref{eq: morawetz in schwarzschild} for initial data vanishing at infinity. 
\end{remark}
\begin{remark}
Our Corollary~\ref{cor: Schwarzschild}, without the $\frac{1}{r(1-\mu)}(\partial_{\bar{t}}\mathcal{G}\psi)^2$ term and slightly different weights of $r$ on the left hand side, coincides with Theorem 4.1 of~\cite{gustav}. 
\end{remark}

Note that we have the following higher order Theorem

\begin{theorem}\label{cor: Schwarzschild, cor 2}
Under the assumption of Theorem \ref{thm: Schwarzschild}, for all $j\geq 3$ there exists a positive constant $C=C(j,M,\eta)>0$ such that 

\begin{equation}\label{eq: cor: Schwarzschild, cor 2, eq 1}
\begin{aligned}
&	\tilde{E}_{\mathcal{G},j}[\psi](\tau_2)\\
&\quad +\int\int_{\tilde{D}(\tau_1,\tau_2)}  \sum_{0\leq i_1+i_2= j-2}\sum_\alpha \frac{1}{r(1-\mu)}\left(\partial_{\bar{t}}\partial_{\bar{t}}^{i_1}\Omega_\alpha^{i_2}\mathcal{G}\psi\right)^2+\frac{1}{r}\left|\slashed{\nabla}\partial_{\bar{t}}^{i_1}\Omega_\alpha^{i_2}\mathcal{G}\psi\right|^2 \\
&	\quad\quad\quad\quad\quad\quad+\sum_{1\leq i_1+i_2+i_3=j-1}\sum_\alpha \frac{(1-\mu)^{2i_3-1}}{r}\left(\partial_{\bar{t}}^{i_1}\Omega_\alpha^{i_2}\partial_r^{i_3}\mathcal{G}\psi\right)^2\\
&	\quad\quad\quad\quad\quad\quad+\sum_{1\leq i_1+i_2+i_3\leq j-1,i_2<j-1}\sum_\alpha \frac{1}{r^{1+\eta}}\left(\partial_{\bar{t}}^{i_1}\Omega_\alpha^{i_2}\partial_r^{i_3}\psi\right)^2+ \frac{1}{r}\left|\slashed{\nabla}\partial_{\bar{t}}^{i_1}\Omega_\alpha^{i_2}\partial_r^{i_3}\psi\right|^2\\
&   \quad \leq C  \tilde{E}_{\mathcal{G},j}[\psi](\tau_1)+C\int\int_{D(\tau_1,\tau_2)} (1-\mu)r\sum_{0\leq i_1+i_2\leq j-2}\sum_\alpha\left(\partial_{\bar{t}}^{i_1}\Omega_\alpha^{i_2}\mathcal{G}F\right)^2+\sum_{0\leq i_1+i_2+i_3\leq j-2}\sum_\alpha  r^{1+\eta}\left(\partial_{\bar{t}}^{i_1}\Omega_\alpha^{i_2}\partial_r^{i_3}F\right)^2\\
&	\quad\quad\quad +C \int_{\{\bar{t}=\tau_2\}} \sum_{0\leq i_1+i_2\leq j-3}\sum_\alpha \tilde{\mathcal{E}}\left(\partial_{\bar{t}}^{i_1}\Omega_\alpha^{i_2}\mathcal{G}F,\partial_{\bar{t}}^{i_1}\Omega_\alpha^{i_2}F\right)+\sum_{1\leq i_1+i_2+i_3\leq j-3}\sum_\alpha r(1-\mu)^{2i_3+1}\left(\partial_{\bar{t}}^{i_1}\Omega_\alpha^{i_2}\partial_r^{i_3}\mathcal{G}F\right)^2\\
&	\qquad\qquad\qquad +\sum_{0\leq i_1+i_2+i_3\leq j-3}\sum_\alpha r\left(\partial_{\bar{t}}^{i_1}\Omega_\alpha^{i_2}\partial_r^{i_3}F\right)^2
\end{aligned}
\end{equation}
for all $0\leq \tau_1\leq \tau_2$, where 
\begin{equation}\label{eq: cor: Schwarzschild, cor 2, eq 2}
\begin{aligned}
\tilde{E}_{\mathcal{G},j}[\psi](\tau)	& = \int_{\{\bar{t}=\tau\}} \sum_{0\leq i_1+i_2\leq j-2 }\sum_\alpha\tilde{\mathcal{E}}\left(\partial_{\bar{t}}^{i_1}\Omega_\alpha^{i_2}\mathcal{G}\psi,\partial_{\bar{t}}^{i_1}\Omega_\alpha^{i_2}\psi\right)+\sum_{1\leq i_1+i_2+i_3\leq j-1,i_3\geq 1}\sum_\alpha r(1-\mu)^{2i_3-1}\left(\partial_{\bar{t}}^{i_1}\Omega_\alpha^{i_2}\partial_r^{i_3}\mathcal{G}\psi\right)^2\\
&	\qquad\qquad +\sum_{1\leq i_1+i_2+i_3\leq j-1,i_3\geq 1}\sum_\alpha r\left(\partial_{\bar{t}}^{i_1}\Omega_\alpha^{i_2}\partial_r^{i_3}\psi\right)^2
\end{aligned}
\end{equation}
where $N$ is the redshift vector field of Dafermos--Rodnianski, see the Lecture notes~\cite{DR5}. The $j=2$ case is the same without the hypersurface error terms on the right hand side of \eqref{eq: cor: Schwarzschild, cor 2, eq 1}. For the volume forms of the spacetime domains and of the hypersurface terms see Section \ref{subsec: volume forms}. 
\end{theorem}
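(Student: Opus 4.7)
The plan is to mirror the scheme used in the proof of Theorem \ref{thm: higher order G estimate} for Schwarzschild--de~Sitter, but keeping explicit track of the $r$-weights as prescribed by Remark \ref{rem: keeping track of the powers of r}, and replacing the role of Proposition \ref{prop: sec: higher order G estimate, prop 1} by the higher-order version of the Dafermos--Rodnianski Morawetz estimate, Theorem \ref{thm: morawetz in schwarzschild}, commuted with the redshift vector field $N$ of~\cite{DR5} and with the Killing fields $\partial_{\bar{t}}$, $\Omega_\alpha$.

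First, I would record the base case $j=2$, which is just Theorem \ref{thm: Schwarzschild} applied to $\psi$ and collecting the boundary and bulk terms into the form of \eqref{eq: cor: Schwarzschild, cor 2, eq 1}. For $j\geq 3$, I commute the inhomogeneous wave equation $\Box_{g_M}\psi=F$ with $\partial_{\bar{t}}^{i_1}\Omega_\alpha^{i_2}$ for $i_1+i_2\leq j-2$ (using that these commute with $\Box_{g_M}$) and apply Theorem \ref{thm: Schwarzschild} to each such commuted quantity. This produces, on the left hand side, a sum over $i_1+i_2\leq j-2$ of $\tilde{\mathcal{E}}(\partial_{\bar{t}}^{i_1}\Omega_\alpha^{i_2}\mathcal{G}\psi,\partial_{\bar{t}}^{i_1}\Omega_\alpha^{i_2}\psi)$ on the hypersurface $\phi_{\tau_2}(\tilde\Sigma)$ plus the corresponding non-degenerate bulk terms containing $\partial_{\bar{t}}\mathcal{G}$, angular and transversal derivatives of $\mathcal{G}\psi$; on the right hand side appear the corresponding flux on $\phi_{\tau_1}(\tilde{\Sigma})$ and inhomogeneity contributions $(1-\mu)r(\partial_{\bar{t}}^{i_1}\Omega_\alpha^{i_2}\mathcal{G}F)^2$ and $r^{1+\eta}(\partial_{\bar{t}}^{i_1}\Omega_\alpha^{i_2}F)^2$.

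Next, to furnish the lower-order derivatives $\partial_r^{i_3}\psi$, $i_3\geq 1$, missing from the above (and to remove the photon-sphere degeneracy at order $j-1$), I add in a large constant times the $(j-1)$-th order commuted Morawetz estimate of Theorem \ref{thm: morawetz in schwarzschild} applied to $N$-commutations of $\psi$; this inserts the weighted bulk terms $r^{-1-\eta}(\partial_{\bar{t}}^{i_1}\Omega_\alpha^{i_2}\partial_r^{i_3}\psi)^2$ and angular analogues with $i_1+i_2+i_3\leq j-1$, $i_3<j-1$, controlled by the higher-order initial energy $E_{j-1}[\psi](\tau_1)$. The highest radial derivative $\partial_r^{j-1}\mathcal{G}\psi$ (and the full $\partial_r^{j-1}\psi$, $\partial_r^{j}\psi$ terms) are obtained as in the Schwarzschild--de~Sitter case by using the wave equation~\eqref{eq: equation obeyed by Psi} (with $\Lambda=0$) to trade $(1-\mu)\partial_r^2\mathcal{G}\psi$ against $\partial_t^2\mathcal{G}\psi$, $\partial_t\partial_r\mathcal{G}\psi$, $\slashed{\Delta}\mathcal{G}\psi$, $\partial_{\bar{t}}\partial_r\psi$ plus lower-order error, differentiating by $\partial_r^{j-3}$, squaring, and multiplying by the weight $r^{-1}(1-\mu)^{2j-5}$ (as in \eqref{eq: sec: higher order G, proof eq 5.2}) so that nothing blows up at $r=2M$; the same pointwise estimate with weight $r(1-\mu)^{2j-5}$ converts flux terms into the $r(1-\mu)^{2i_3-1}(\partial_{\bar{t}}^{i_1}\Omega_\alpha^{i_2}\partial_r^{i_3}\mathcal{G}\psi)^2$ contributions appearing in the definition \eqref{eq: cor: Schwarzschild, cor 2, eq 2} of $\tilde{E}_{\mathcal{G},j}$, producing the $\tilde{\mathcal{E}}$-type hypersurface error terms $(1-\mu)^{2i_3+1}r(\partial_{\bar{t}}^{i_1}\Omega_\alpha^{i_2}\partial_r^{i_3}\mathcal{G}F)^2$ and $r(\partial_{\bar{t}}^{i_1}\Omega_\alpha^{i_2}\partial_r^{i_3}F)^2$ on the right hand side, with $i_1+i_2+i_3\leq j-3$.

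The main obstacle is bookkeeping: one must verify that, after all these substitutions, the weights in $r$ on the bulk and boundary terms exactly match the ones required by \eqref{eq: cor: Schwarzschild, cor 2, eq 1}--\eqref{eq: cor: Schwarzschild, cor 2, eq 2}, and in particular that the pointwise estimate from the wave equation does not introduce any term with a growth faster than $r^{1+\eta}$ (which is the asymptotic tolerance of Theorem \ref{thm: morawetz in schwarzschild}). The additional subtlety relative to the Schwarzschild--de~Sitter case is that the $(1-\mu)$ factor, which degenerates only at $r=2M$ now and not at an outer horizon, conspires with the $r$-weights in the asymptotic region: one must take $\eta\in (0,1]$ so that the inhomogeneity term $r^{1+\eta}|F|^2$ is indeed controlled by what is produced after squaring and differentiating the wave equation, and so that the $r\mapsto +\infty$ tail of $\mathcal{G}\psi\sim r\partial_r\psi$ is handled by the weighted flux $\tilde{\mathcal{E}}$. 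Once these weights match, the estimate~\eqref{eq: cor: Schwarzschild, cor 2, eq 1} follows by an induction on $j$, exactly as in the Schwarzschild--de~Sitter argument, with the $j=2$ case serving as the base case.
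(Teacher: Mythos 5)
Your proposal is correct and follows essentially the same route as the paper: the paper's own proof of Theorem \ref{cor: Schwarzschild, cor 2} simply states that the argument is that of Theorem \ref{thm: higher order G estimate}, read with the $r$-weights that were deliberately retained there, with Theorem \ref{thm: morawetz in schwarzschild} (and its redshift-commuted higher order version) replacing Proposition \ref{prop: sec: higher order G estimate, prop 1}. Your expansion of the commutation with $\partial_{\bar{t}}^{i_1}\Omega_\alpha^{i_2}$, the recovery of $\partial_r^{j-1}\mathcal{G}\psi$ from the wave equation with the weights $r^{\pm 1}(1-\mu)^{2i_3-3}$, and the weight bookkeeping is exactly the intended argument.
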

\begin{proof}
	The proof of this Theorem is essentially the same as Theorem~\ref{thm: higher order G estimate} of the Schwarzschild--de~Sitter case. Note that in the proof of Theorem~\ref{thm: higher order G estimate} we kept some `key' weights in $r$ so that the proof can be read for the asymptotically flat case. 	
\end{proof}

\begin{remark}
In order to write inequality \eqref{eq: cor: Schwarzschild, cor 2, eq 1} in its most compact form we had to slightly differ its presentation from the relevant inequality of Theorem~\ref{thm: higher order G estimate}, since the weights in $r$ are now significant. 
\end{remark}

\appendix

\section{The Christoffel symbols of the metric \texorpdfstring{\eqref{eq: regular metric de sitter}}{metric}}\label{sec: christoffel symbols}

The non-zero Christoffel symbols, of the metric \eqref{eq: regular metric de sitter}, are 
\begin{equation}
    \begin{aligned}
        &   \Gamma^{\bar{t}}\:_{\bar{t}\bar{t}}=\frac{\sqrt{1+\frac{6M}{r}}(r-3M)(-3M+r^2\Lambda)}{3r^3\sqrt{1-9M^2\Lambda}},\quad \Gamma^{\bar{t}}\:_{\bar{t}r}=\frac{9M^2(3M-r^3\Lambda)}{r^4(1-9M^2\Lambda)} \\
        &   \Gamma^{\bar{t}}\:_{rr}=\frac{27M^2(9M^2+3Mr+r^2)}{\sqrt{1+\frac{6M}{r}}r^5\left(1-9M^2\Lambda\right)^{3/2}} ,\quad \Gamma^{\bar{t}}\:_{\theta \theta}=\frac{\sqrt{1+\frac{6M}{r}}(r-3M)}{\sqrt{1-9M^2\Lambda}},\\
        &   \Gamma^{\bar{t}}\:_{\phi\phi}=\frac{\sqrt{1+\frac{6M}{r}}(r-3M)\sin{\theta}}{\sqrt{1-9M^2\Lambda}}, \quad \Gamma^r\:_{\bar{t}\bar{t}}=\frac{(-3M+r^3\Lambda)(6M-3r+r^3\Lambda)}{9 r^3},\\
        &   \Gamma^r\:_{\bar{t}r}=-\frac{\sqrt{1+\frac{6M}{r}}(r-3M)(-3M+r^3\Lambda)}{3r^3\sqrt{1-9M^2\Lambda}},\quad \Gamma^{r}_{rr}=-\frac{9M^2(3M-r^3\Lambda)}{r^4(1-9M^2\Lambda)}\\
        &   \Gamma^r\: _{\theta \theta}=-r\left(1-\frac{2M}{r}-\frac{\Lambda}{3}r^2\right),\quad \Gamma^r \:_{\phi\phi}=-r\left(1-\frac{2M}{r}-\frac{\Lambda}{3}r^2\right)\sin{\theta} \\
        &   \Gamma^{\theta}\:_{r\theta}=\frac{1}{r},\quad \Gamma^{\theta}\:_{\phi\phi}=-\frac{\cos{\theta}}{2},\quad \Gamma^{\phi}\:_{r\phi}=\frac{1}{r},\quad \Gamma^{\phi}\:_{\theta \phi}=\frac{\cot{\theta}}{2}.
    \end{aligned}
\end{equation}

\bibliographystyle{plain}
\bibliography{MyBibliography}

\end{document}